\title{Enumeration of coalescent histories for caterpillar species trees and $p$-pseudocaterpillar gene trees}
\author{Egor Alimpiev, Noah A Rosenberg \\ \bigskip \footnotesize{Department of Biology, Stanford University, Stanford, CA 94305 USA}}
\date{\today}                                          
\newtheorem{theorem}{Theorem}
\newtheorem{proposition}[theorem]{Proposition}
\newtheorem{lemma}[theorem]{Lemma}
\newtheorem{definition}[theorem]{Definition}
\newcommand{\doubleentry}[2]{{\renewcommand{\arraystretch}{1}$\begin{array}{c}#1\\#2\end{array}$}}
\begin{document}

\maketitle

\begin{abstract}
For a fixed set $X$ containing $n$ taxon labels, an ordered pair consisting of a gene tree topology $G$ and a species tree $S$ bijectively labeled with the labels of $X$ possesses a set of coalescent histories---mappings from the set of internal nodes of $G$ to the set of edges of $S$ describing possible lists of edges in $S$ on which the coalescences in $G$ take place. Enumerations of coalescent histories for gene trees and species trees have produced suggestive results regarding the pairs $(G,S)$ that, for a fixed $n$, have the largest number of coalescent histories. We define a class of 2-cherry binary tree topologies that we term \emph{$p$-pseudocaterpillars}, examining coalescent histories for non-matching pairs $(G,S)$, in the case in which $S$ has a caterpillar shape and $G$ has a $p$-pseudocaterpillar shape. Using a construction that associates coalescent histories for $(G,S)$ with a class of “roadblocked” monotonic paths, we identify the $p$-pseudocaterpillar labeled gene tree topology that, for a fixed caterpillar labeled species tree topology, gives rise to the largest number of coalescent histories. The shape that maximizes the number of coalescent histories places the “second” cherry of the $p$-pseudocaterpillar equidistantly from the root of the “first” cherry and from the tree root. A symmetry in the numbers of coalescent histories for $p$-pseudocaterpillar gene trees and caterpillar species trees is seen to exist around the maximizing value of the parameter $p$. The results provide insight into the factors that influence the number of coalescent histories possible for a given gene tree and species tree. \\

\smallskip
\noindent \footnotesize{\emph{Keywords}: Catalan numbers, coalescent histories, Dyck paths, monotonic paths, phylogenetics} \\ 

\noindent \footnotesize{\emph{Mathematics subject classification}: 05A15, 05A16, 05A19, 05C05, 92D10}\end{abstract}


%

\section{Introduction}
\label{secIntroduction}

In mathematical phylogenetics, a coalescent history represents the paired list of coalescences in a gene tree together with their associated edges of a species tree. Consider two binary, rooted, leaf-labeled trees, $G$ and $S$, with leaves labeled by the same label set $X$, such that each label in $X$ is associated with exactly one leaf of $G$ and exactly one leaf of $S$. We regard $G$ as a \emph{gene tree} representing the evolution of genealogical lineages in a group of species, and $S$ as the \emph{species tree} representing the evolutionary descent of the species themselves.

For a gene tree $G$ evolving on a species tree $S$, a \emph{coalescent history} is a mapping from the set of internal nodes of $G$ to the set of internal edges of $S$, such that two rules are followed: (i) the image of an internal node $v$ of $G$ is ancestral in $S$ to each leaf of $S$ that shares a label with some leaf descended from $v$ in $G$; (ii) the image of an internal node $v$ of $G$ is ancestral in $S$ to the images of each of its descendant nodes. The biological interpretation of (i) is that a set of gene lineages can only find a common ancestor on a species tree edge that it is possible for them all to reach; the interpretation of (ii) is that the gene lineages descended from a descendant node coalesce at least as recently as do the gene lineages descended from its ancestral nodes. Note that we regard a node as trivially ancestral to and descended from itself. The coalescent histories for $(G,S)$ can be viewed as describing a discrete class of evolutionary scenarios for the lineages of $G$ on the edges of $S$.

A variety of studies have enumerated the coalescent histories for pairs $(G,S)$, both by a recursive approach that applies in all cases of $(G,S)$ \citep{rosenberg2007counting, ThanRuths07-JCB}, and by closed-form formulas and bijective constructions developed for particular families of trees \citep{degnan2005, rosenberg2007counting, rosenberg2013caterpillar, rosenberg2019enumeration, rosenberg2010coalescent, disanto2015coalescent, disanto2016asymptotic, himwich2020}. These enumeration studies, primarily considering \emph{matching} gene trees and species trees with $G=S$ and having particular emphasis on shapes such as \emph{caterpillars}, \emph{4-pseudocaterpillars}, and \emph{caterpillar-like} families (Figure~\ref{fig:tree_types}), have informally observed that in specified classes of trees, the largest number of coalescent histories tends to occur when the pair $(G,S)$ possesses two features: multiple different sequences exist in which the coalescences of $G$ can occur, and many edges of $S$ exist on which those coalescences can take place. 

\begin{figure}[t]\centering
\hspace{1cm}\begin{minipage}[b]{0.3\linewidth} (A) 

	{\centering\includegraphics[height=0.68in]{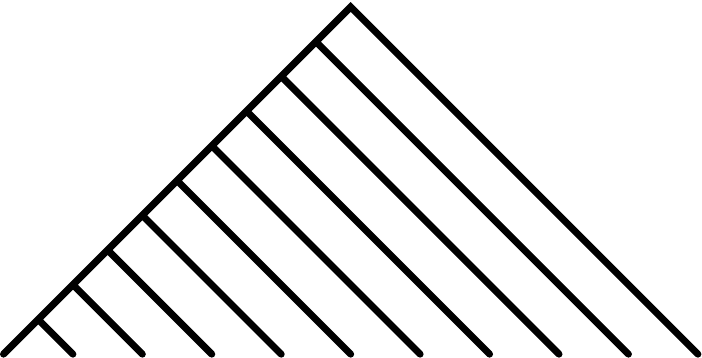}}
\end{minipage}
\begin{minipage}[b]{0.3\linewidth} (B) 

	{\centering\includegraphics[height=0.68in]{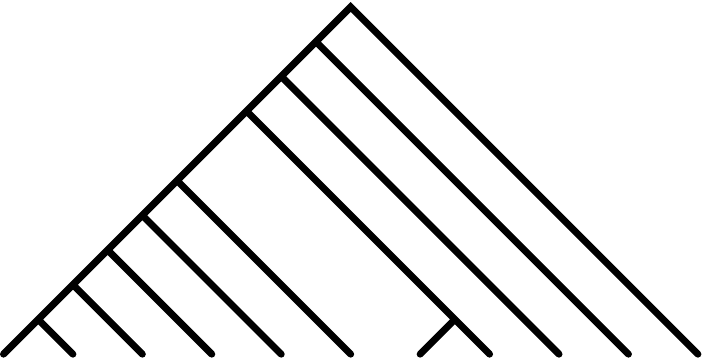}}
\end{minipage}
\begin{minipage}[b]{0.3\linewidth} (C) 

	{\centering\includegraphics[height=0.68in]{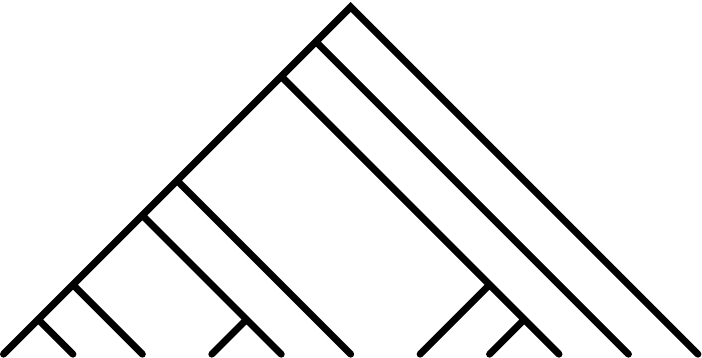}}
\end{minipage}
\caption{Three tree shapes with $n=11$ leaves. (A) Caterpillar tree shape. (B) $p$-pseudocaterpillar tree shape. For this tree, $p=8$. (C) Caterpillar-like tree shape. The seed tree has size 9.}
\label{fig:tree_types}
\end{figure}

\cite{rosenberg2007counting} observed that for small trees with at most $n=9$ leaves and $G=S$, the largest numbers of coalescent histories among tree pairs $(G,G)$ with fixed $n$ were seen for trees that had structure similar to caterpillar trees, but that unlike caterpillars, had more than one possible sequence of coalescences. \cite{rosenberg2013caterpillar} and \cite{disanto2016asymptotic} examined tree families $(G,G)$, with $n$ growing arbitrarily large in specified caterpillar-like tree families. Beginning with a seed tree, these studies generated families of increasingly large trees by sequentially adding taxa so that the next tree in a family was formed by placing the current tree and a single leaf on opposite sides of a new root. They saw that across all seed trees of a fixed small size, as the number of leaves grew without bound, the largest numbers of coalescent histories occurred when the seed tree had many different sequences in which its coalescences could take place. \cite{disanto2015coalescent} constructed a tree family, the \emph{lodgepole} family, that, unlike caterpillar families, grows so that as the number of leaves increases, trees accumulate both new sequences in which coalescences can take place and new places for them to occur. This family is the family of matching tree pairs $(G,G)$ with the largest-known number of coalescent histories as $n$ increases without bound.

Despite many observations suggesting that coalescent histories tend to increase in number when $G$ has many sequences in which coalescences can take place and many edges on which those coalescences can occur, existing results in support of this view have focused on small trees \citep{rosenberg2007counting} and on informal interpretations of specific families with large limits as $n \rightarrow \infty$ \citep{rosenberg2013caterpillar,  disanto2016asymptotic}; no result has formally demonstrated the observation in a class of trees for a fixed finite $n$ of arbitrary size. We devise a scenario to formalize this idea characterizing scenarios with the largest numbers of coalescent histories. We fix the species tree $S$ to be a caterpillar, and we consider a family of non-matching gene trees $G$, the \emph{$p$-pseudocaterpillars}. We show that among this class of non-matching pairs $(G,S)$ with fixed $n$, the largest number of coalescent histories occurs precisely when $G$ combines these two elements: many coalescence sequences, and many edges on which those coalescences can take place.

Our approach is one of relatively few to examine enumerations of coalescent histories in the case that $G$ is not necessarily equal to $S$ \citep{rosenberg2007counting, rosenberg2019enumeration, ThanRuths07-JCB, rosenberg2010coalescent, himwich2020}. The strategy employs a construction that enumerates coalescent histories for non-matching caterpillar trees. Generalizing a result of \cite{degnan2005} for enumeration of coalescent histories for matching caterpillar trees, \cite{himwich2020} produced a bijection with monotonic paths for use in enumerating coalescent histories for non-matching caterpillar pairs $(G,S)$. We use this monotonic-path construction to enumerate coalescent histories for the class of non-matching trees that considers a caterpillar species tree $S$ and a $p$-pseudocaterpillar gene tree $G$. 

Section~\ref{secPreliminaries} introduces definitions and notation. Section~\ref{secExample} gives an example that motivates the general calculation. In Section~\ref{sec:general_construction}, we  enumerate coalescent histories in the general case. Section~\ref{secSpecialCases} gives special cases with specified values of $p$. Finally, in Section~\ref{sec:max}, for a specified caterpillar species tree $S$ of fixed size $n$, considering all possible values of $p$, we obtain the maximal number of coalescent histories across all non-matching $p$-pseudocaterpillar trees $G$. Section~\ref{secSymmetry} discusses a symmetry in $p$ for fixed $n$, and we conclude with a discussion in Section~\ref{secDiscussion}. The computations illustrate how the monotonic path approach of \cite{himwich2020} in the case of caterpillar species trees can be extended to enumerate coalescent histories in more cases beyond that of caterpillar gene trees. 

\section{Preliminaries}
\label{secPreliminaries}

We formally define coalescent histories and $p$-pseudocaterpillars in Sections~\ref{secCoalescentHistories} and~\ref{secPseudocaterpillars}, and we introduce results concerning the Catalan numbers in Section~\ref{secCatalan}. In Section~\ref{secHimwich}, we describe the use of monotonic paths to enumerate coalescent histories for caterpillar tree pairs.

\subsection{Coalescent histories}
\label{secCoalescentHistories}

The definitions in this article closely follow \cite{himwich2020}. Henceforth, we treat all ``trees'' as binary, rooted, and leaf-labeled, except where specified. The set of vertices or nodes of a tree can be divided into \emph{leaf nodes} and non-leaf \emph{internal} nodes. For rooted tree $G$, we say that a node $v_1$ is \emph{descended} from a node $v_2$ if the shortest path from $v_1$ to the root of $G$ travels through $v_2$; $v_2$ is then \emph{ancestral} to $v_1$. Ancestor--descendant relationships also apply to edge--edge pairs and edge--node pairs. A node or edge is trivially descended from and ancestral to itself. Each internal node, including the root, possesses an associated \emph{internal edge} immediately ancestral to it.

We consider pairs $(G,S)$ in which $G$ represents a gene tree, describing the descent of a set of genealogical lineages, and $S$ represents a species tree, describing the descent of a set of species. $G$ and $S$ are assumed to have the same number of leaves, $n$. We assume that the leaf set of $G$ and the leaf set of $S$ are labeled by the same label set $X$, and that each label in $X$ is assigned to exactly one leaf of $G$ and to exactly one leaf of $S$. This assumption corresponds to an assumption that exactly one gene lineage is sampled in each of the $n$ species.

For the pair $(G,S)$, we can formally define the functions known as  \emph{coalescent histories}.
\begin{definition}
\label{defCoalescentHistories}
	Consider a pair of trees $(G,S)$ that are binary, rooted, and leaf-labeled, with the labels in bijective correspondence. A \emph{coalescent history} is a function $\alpha$ from the set of internal nodes of $G$ to the set of internal edges of $S$, satisfying two conditions:
		\begin{enumerate}[label=(\alph*)]
		\item For each internal node $v$ in $G$, all labels for leaves descended from $v$ in $G$ label leaves descended from edge $\alpha(V)$ in $S$.
		\item For each pair of internal nodes $v_1$ and $v_2$ in $G$, if $v_2$ is descended from $v_1$, then $\alpha(v_2)$ is descended from $\alpha(v_1)$ in $S$.
	\end{enumerate}
\end{definition}

\noindent In this definition, nodes of $G$ represent coalescent events for the gene lineages, and edges of $S$ represent species tree edges along which the gene lineages evolve. A coalescent history reflects the biological process of coalescence, in which descendants cannot coalesce farther back in time than their ancestors. Ancestor--descendant relations are preserved under the mapping $\alpha$.


\begin{figure}[tb]\centering
\hspace{2cm}\begin{minipage}[b]{0.4\linewidth} (A)

	{\centering\includegraphics[height=1.08in]{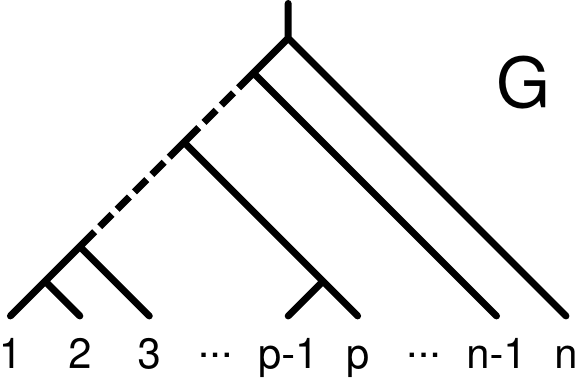}}
\end{minipage}
\begin{minipage}[b]{0.4\linewidth} (B)

	{\centering\includegraphics[height=0.92in]{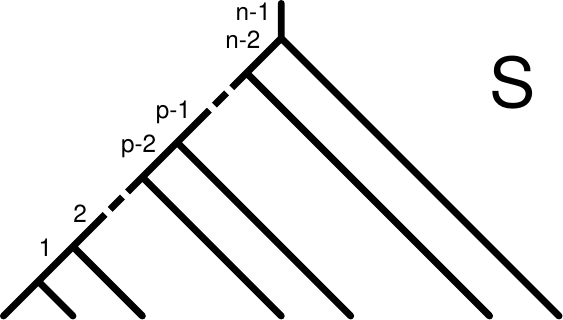}}	
	\vspace{3.6mm}
\end{minipage}
\caption{Labeled $p$-pseudocaterpillar gene tree and caterpillar species tree. (A) A $p$-pseudocaterpillar gene tree $G$ with coordinates for the leaves. (B) A species tree $S$ with the internal edges labeled. Both trees have $n$ leaves. The trees are drawn in \emph{canonical form}, so that the shortest path from the left-most leaf to the root contains all other internal nodes for the caterpillar, and all other internal nodes except one for the $p$-pseudocaterpillar.}
\label{fig:coord}
\end{figure}

\subsection{Caterpillars and $p$-pseudocaterpillars}
\label{secPseudocaterpillars}

As our goal is to enumerate the coalescent histories in the case that $G$ has a $p$-pseudocaterpillar topology and $S$ has a caterpillar topology, we define caterpillar and $p$-pseudocaterpillar shapes for binary, rooted tree topologies. 

\begin{definition}
\label{caterpillar_tree}
	A \emph{caterpillar tree} is a binary, rooted tree that has an internal node that is descended from all other internal nodes. 
\end{definition}
\noindent In a caterpillar tree, each internal node has at least one leaf as an immediate descendant (Figure~\ref{fig:tree_types}A). Equivalently, a caterpillar tree is a tree that has only one \emph{cherry node}: an internal node with exactly two descendant leaves. 

\cite{rosenberg2007counting} defined binary, rooted pseudocaterpillar trees with $n \geq 4$ leaves as trees in which all internal nodes except one have at least one immediate leaf descendant. The node that provides the exception has two cherry nodes as its immediate descendants. We generalize the earlier definition of  pseudocaterpillar trees to consider \emph{generalized pseudocaterpillar trees}. To define this concept, we denote by $v_L$ and $v_R$ the left and right descendant nodes of an internal node $v$.
\begin{definition}
	\label{pseudocaterpillar}
	A \emph{generalized pseudocaterpillar tree} is a binary, rooted tree that has at least four leaves and that satisfies two conditions. (i) The tree possesses exactly two cherry nodes. (ii) For each internal node $v$, at least one of ${v_L,v_R}$ has no more than two descendant leaves.
\end{definition}
\noindent In other words, a generalized pseudocaterpillar tree is formed from a caterpillar tree, replacing one of the leaves not descended from the unique cherry node by a second cherry node (Figure~\ref{fig:tree_types}B). 

A generalized pseudocaterpillar can be described by two numbers: the total number of leaves $n$ and the position $p$ of the ``second'' cherry. To precisely identify $p$ for a generalized pseudocaterpillar tree, we label the leaves by natural numbers starting from left to right, placing the ``first'' cherry---the one present in the caterpillar from which the generalized pseudocaterpillar has been generated---on the left. We define the position of the second cherry as the number corresponding to its second leaf from the left (Figure~\ref{fig:coord}A). A generalized pseudocaterpillar tree with a second cherry in position $p$, $4 \leq p \leq n$, is termed a \emph{$p$-pseudocaterpillar tree}. The pseudocaterpillar trees in the sense of \cite{rosenberg2007counting} are 4-pseudocaterpillars.

Our interest is in the case in which the gene tree has $p$-pseudocaterpillar topology for some $p$, and the species tree has a caterpillar topology. A caterpillar species tree with $n$ leaves has  $n-1$ edges on which gene tree coalescences can happen; we also label these edges with natural numbers, following the order from \cite{degnan2005gene} (Figure~\ref{fig:coord}B).

\subsection{Catalan numbers}
\label{secCatalan}

It is useful to introduce the Catalan number sequence $1,1,2,5,14,42,132,429, \ldots$, as it features prominently in our analysis. Letting $\mathcal{C}_n$ be the $n$th Catalan number for $n \geq 0$, 
\begin{equation}
    \label{catalan}
    \mathcal{C}_n = \binom{2n}{n} - \binom{2n}{n-1} = \frac{1}{n+1}\binom{2n}{n}.
\end{equation}
Considering the many combinatorial interpretations of this sequence~\citep{concretemath, stanley_catalan}, we will make use of the fact that $\mathcal{C}_n$ is the number of monotonic paths that travel from $(0,0)$ to $(n,n)$ on a square lattice of size $n\times n$ and that do not cross the diagonal connecting $(0,0)$ to $(n,n)$, where a monotonic path is a path that proceeds exclusively by 1-unit steps up or to the right.

A \emph{Catalan triangle} is a combinatorial structure that counts monotonic paths to points on the lattice that lie on or below the diagonal \citep{reuveni2014catalan}. Entry $(n,k)$ of the Catalan triangle gives the number of monotonic paths on the square lattice that travel from the origin to a point $(n,k)$ and that do not cross the $y=x$ line. The number of such paths, which have $n$ ``right-steps'' and $k$ ``up-steps,'' is \citep{reuveni2014catalan}:
	\begin{equation}
   	\label{catalan_entry}
   		C(n,k) = 
   		\begin{cases}
    		\binom{n+k}{k} - \binom{n+k}{k-1} & 1\leq k \leq n\\
    		1 & k=0\\
    		0 & k > n.  
    	\end{cases}
	\end{equation}
For $n \geq 1$ and $1 \leq k \leq n$, this function satisfies the first-order recurrence 
	\begin{equation*}
		C(n,k) = C(n-1,k)+C(n,k-1).
	\end{equation*}
	
A \emph{Catalan trapezoid} is obtained in a similar way, except that we allow additional $m-1$ up-steps to happen starting at the origin, so that monotonic paths that do not travel above the diagonal from $(0,m-1)$ to $(n,n+m-1)$ are tabulated. The number $m$ is called the \emph{order} of the trapezoid; $m=1$ corresponds to the Catalan triangle. Entry $(n,k)$ of the Catalan trapezoid of order $m$ is given by
	\begin{equation}
		\label{trapezoid_entry}
		C_t(n,k,m) = 
   		\begin{cases}
    		\binom{n+k}{k} & 0 \leq k \leq m-1 \\
    		\binom{n+k}{k} - \binom{n+k}{k-m} & m\leq k \leq n+m-1 \\
    		0 & k>n+m-1,
    	\end{cases}
	\end{equation}
and it satisfies a similar recurrence $C_t(n,k,m)=C_t(n-1,k,m)+C_t(n,k-1,m)$ for $n \geq 1$ and $1 \leq k \leq n+m-1$. With the origin in the lower left corner, the first colums of the Catalan triangle and the Catalan trapezoid of order 3 appear below:
	\begin{align*}
	\label{cat_entries}
		&\begin{array}{ccccccc}
 			&   &   &   &    &    & 132 \\
  			&   &   &   &    & 42 & 132 \\
  			&   &   &   & 14 & 42 &  90 \\
  			&   &   & 5 & 14 & 28 &  48 \\
  			&   & 2 & 5 &  9 & 14 &  20 \\
  			& 1 & 2 & 3 &  4 &  5 &   6 \\
 		  1 & 1 & 1 & 1 &  1 &  1 &   1 \\
		\end{array}
		&\begin{array}{ccccccc}
 			  &   &   &    & 90 \\
  			  &   &   & 28 & 90 \\
  			  &   & 9 & 28 & 62 \\
  			  & 3 & 9 & 19 & 34 \\
  			1 & 3 & 6 & 10 & 15 \\
 			1 & 2 & 3 &  4 &  5 \\
 			1 & 1 & 1 &  1 &  1 \\
		\end{array}
	\end{align*}

\subsection{Bijection between coalescent histories and monotonic paths for caterpillars}
\label{secHimwich}

Building on work of \cite{degnan2005}, the bijective construction of \cite{himwich2020} enumerates coalescent histories for pairs consisting of a caterpillar gene tree and a caterpillar species tree by bijectively associating each coalescent history with a monotonic path that does not cross the diagonal of a square lattice. The coalescent histories are then enumerated by counting the bijectively-associated monotonic paths.

In the construction, given a caterpillar species tree $S$ and a caterpillar gene tree $G$ with $n$ leaves, a square $(n-1) \times (n-1)$ lattice is examined. The coalescent histories for $(G,S)$ correspond to monotonic paths from $(0,0)$ to $(n-1,n-1)$, with each right-step corresponding to a species tree internal edge, and each up-step corresponding to a gene tree coalescence. The pair $(G,S)$ specifies a set of \emph{roadblocks}, points in the lattice through which monotonic paths are not permitted to travel. The number of coalescent histories for $(G,S)$ then equals the number of monotonic paths that do not cross the diagonal and that do not travel through any of the roadblocks. In the case that $G$ and $S$ have the same caterpillar labeled topology, no roadblocks exist, and the number of monotonic paths, and hence the number of coalescent histories, is the Catalan number $\mathcal{C}_{n-1}$ \citep{degnan2005}.
 
The construction of \cite{himwich2020} also applies to caterpillar subtrees. Suppose $G$ possibly has fewer leaves than $S$, so that the label set for $G$ is a subset of the label set for $S$. If we use the term \emph{partial coalescent history} to describe mappings that satisfy Definition~\ref{defCoalescentHistories} except that the label set of $G$ is a subset of the label set of $S$ rather than a bijectively-associated label set, then the number of partial coalescent histories for a caterpillar pair $(G,S)$ is obtained by counting roadblocked monotonic paths to an associated point that is not necessarily the point $(n-1,n-1)$. 

For details, see \cite{himwich2020}. We illustrate the construction in an example.
\section{Example}
\label{secExample}

\begin{figure}[t]\centering
	\hspace{1.6cm}\begin{minipage}[b]{0.4\linewidth} (A)
	
	{\centering
	\includegraphics[width=0.7\textwidth]{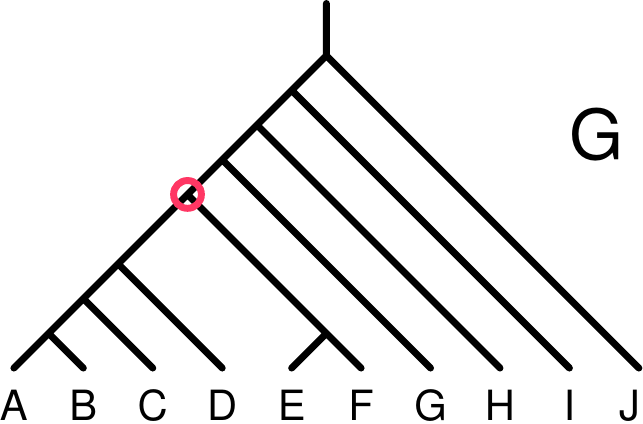}}
\end{minipage}
\begin{minipage}[b]{0.4\linewidth} (B)
	
	{\centering
	\includegraphics[width=0.7\textwidth]{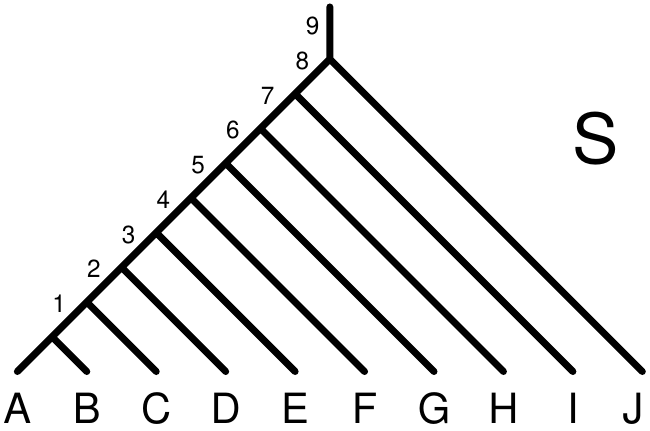}}
\end{minipage}
\caption{Example (gene tree, species tree) pair. (A) 6-pseudocaterpillar gene tree $G$ with ``second cherry'' (E,F). The pivotal coalescence is circled in red. (B) Caterpillar species tree $S$.}
\label{fig:ex}
\end{figure}

\begin{figure}[t]\centering
	\hspace{2cm} \begin{minipage}[b]{0.4\linewidth} (A) \vspace{7.5mm}
	
	{\centering \includegraphics[height=1.2in]{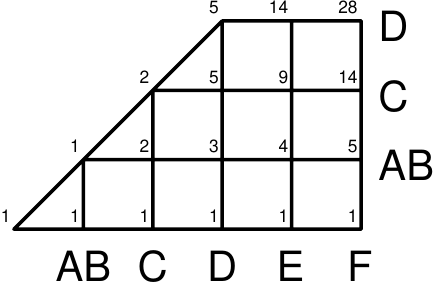}}
\end{minipage}
\begin{minipage}[b]{0.4\linewidth} (B)
	
	{\centering \includegraphics[height=1.52in]{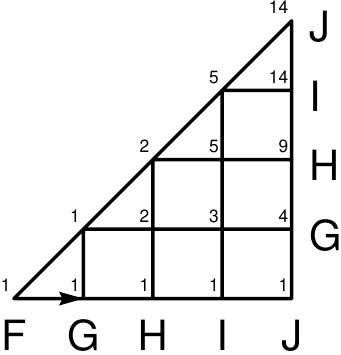}}
\end{minipage}
\caption{Catalan triangle construction for enumerating coalescent histories for Figure~\ref{fig:ex}. Following \cite{himwich2020}, up-steps represent gene tree coalescences and right-steps represent species tree edges. The numbers indicated represent counts of monotonic paths according to eq.~\eqref{catalan_entry}. (A) Diagram corresponding to the left subtree descended from the pivotal gene tree coalescence. (B) Diagram corresponding to the portion of the gene tree ancestral to the pivotal coalescence. The arrow indicates that all coalescences other than those depicted in the diagram have already happened by the starting point, the first species tree internal edge ancestral to F.}
\label{fig:diag_ex}
\end{figure}

Our approach to extending the construction of \cite{himwich2020} to count coalescent histories for a caterpillar species tree and a non-matching $p$-pseudocaterpillar gene tree---a tree with one extra cherry---can be understood with an example. Consider a gene tree $G$ with 10 leaves, with cherry node (E,F) as shown in Figure~\ref{fig:ex}A, and a species tree $S$ as shown in Figure~\ref{fig:ex}B.

The key to counting coalescent histories for $(G,S)$ is to examine the specific gene tree coalescence circled in red in Figure~\ref{fig:ex}A, indicating the most recent common ancestor of both cherries of $G$. We call this node the \emph{pivotal} coalescence. In a coalescent history, this pivotal coalescence can take place on any internal edge ancestral to species F in the species tree. We partition all coalescent histories for $(G,S)$ by the position of this pivotal coalescence. For each placement of the pivotal coalescence, we then count the number of coalescent histories by counting monotonic paths on particular diagrams for the subtrees generated by the pivotal coalescence. 

Suppose the pivotal coalescence of $G$ happens on edge 5 of $S$. Then all the coalescences in the ``left'' subtree descended from the pivotal coalescence must happen on or before edge 5. This left subtree is now a caterpillar (((A,B),C),D), coalescing on a caterpillar (((((A,B),C),D),E),F). We can now follow the construction of \cite{himwich2020} to enumerate partial coalescent histories through a bijection with monotonic paths.

In particular, the number of ways that the gene tree coalescences of (((A,B),C),D) can occur on species tree (((((A,B),C),D),E),F) is equal to the number of monotonic paths on a Catalan triangle restricted to 5 right-steps and 3 up-steps (Figure~\ref{fig:diag_ex}A). The up-steps correspond to the 3 coalescences in the subtree (((A,B),C),D), and the right-steps correspond to the 5 edges of  (((((A,B),C),D),E),F) on which they can take place. Following eq.~\eqref{catalan_entry}, the number of monotonic paths that travel from $(0,0)$ to $(5,3)$ and that do not cross the diagonal is ${\binom83} - {\binom82} = 28$. Because gene tree coalescence (E,F) must occur on species tree edge 5 when the pivotal coalescence occurs on edge 5, coalescence (E,F) does not introduce additional coalescent histories. Thus, 28 possible partial coalescent histories place the pivotal coalescence on species tree edge 5.

We now need to consider the coalescences ancestral to the pivotal coalescence. Coalescences involving leaf G can happen on edge 6 or on any edge ancestral to 6, coalescences with leaf H can happen on edge 7 or any edge ancestral to 7, provided that leaf G has already participated in a coalescence, and so on. Again following the construction of \cite{himwich2020}, the possible assignments of gene tree coalescences to species tree edges in this upper part of the species tree can be described by a Catalan triangle with 4 right-steps and 4 up-steps (Figure~\ref{fig:diag_ex}B). There are 14 possible monotonic paths.

To obtain the total number of coalescent histories with pivotal coalescence on edge 5, we now multiply the two numbers we already have: for each of the 28 partial coalescent histories for coalescences descended from the pivotal coalescence, there are 14 ways for the coalescences ancestral to it to happen. Hence, 392 coalescent histories exist with pivotal coalescence on edge 5. 

To obtain the total count of coalescent histories for $(G,S)$, we must consider all other possible locations of the pivotal coalescence, and sum their associated numbers of coalescent histories. With this idea, however, we are now ready for the general case.

\section{General construction}
\label{sec:general_construction}

The example in Section \ref{secExample} illustrates that we can enumerate coalescent histories for a $p$-pseudocaterpillar gene tree and a caterpillar species tree by dividing the problem into three components: placement of the pivotal coalescence, and two enumerations, one for coalescences descended from the pivotal coalescence, and the other for coalescences ancestal to it. We describe these two enumerations in full generality, and complete the calculation by summing over all placements of the pivotal coalescence.

Consider a $p$-pseudocaterpillar gene tree $G$ and a caterpillar species tree $S$, both with $n$ leaves, and bijectively labeled with the same set of distinct labels. Suppose $G$ and $S$ have an \emph{identical} leaf labelling, by which we mean that when $G$ and $S$ are drawn in canonical form (Figure \ref{fig:coord}), the gene tree and species tree labels are listed in the same order when reading them from left to right. Figure \ref{fig:ex} illustrates an identical leaf labeling. Note that labelings in which the labels in one or both cherries of $G$ are transposed with respect to $S$ also qualify as identical.

Using our numerical labeling scheme for edges of gene trees and species trees (Figure \ref{fig:coord}), the pivotal coalescence can take place on any species tree edge from $p-1$ to $n-1$. Suppose it happens on edge $k$, $p-1 \leq k \leq n-1$.

\subsection{Coalescences descended from the pivotal coalescence}
\label{secPivotal}

Label by $S_k$ the subtree of $S$ whose root is the node immediately descended from edge $k$. Label the subtree of $G$ whose root node is the pivotal coalescence by $G_*$. The left subtree of $G_*$, which we label $G_{*\ell}$, is a caterpillar with $p-3$ coalescences. By the assumption that the pivotal coalescence takes place on species tree edge $k$, all coalescences in $G_{*\ell}$ must occur on edges $1,2,\ldots,k$.

Following \cite{himwich2020}, the partial coalescent histories for $(G_{*\ell}, S_k)$, with $p-3$ gene tree coalescences and $k$ species tree edges on which they take place, correspond to monotonic paths from $(0,0)$ to $(k,p-3)$ that do not cross the $y=x$ line. The number of partial coalescent histories therefore corresponds to Catalan triangle entry $(k,p-3)$. By eq.~\eqref{catalan_entry}, this quantity, which we denote $\ell_k$, equals
\begin{equation}
\label{lk}
\ell_k = \binom{k+p-3}{p-3}-\binom{k+p-3}{p-4}.
\end{equation}

The right subtree of $G_*$, or $G_{*r}$, has exactly one coalescence, which can happen on any of the branches $p-1,p,\ldots, k$. Hence, the number of coalescent histories for $(G_{*r}, S_k)$ is 
\begin{equation}
\label{rk}
r_k = k-p+2.
\end{equation}
Combining the left and right subtrees of $G_*$, from eqs.~\eqref{lk} and \eqref{rk}, the number of partial coalescent histories for $(G_*,S_k)$ is $\ell_k r_k$.

\subsection{Coalescences ancestral to the pivotal coalescence}
\label{secUpperPart}

To examine coalescences ancestral to the pivotal coalescence, the pivotal coalescence can be viewed as a ``leaf'' of a caterpillar gene tree $G^*$ whose coalescences occur on species tree edges numbered $k$ or greater. In this view, $G^*$ is the $(n-p+1)$-leaf caterpillar tree in which the subtree rooted at the pivotal coalescence is replaced by a leaf, so that the pivotal coalescence is a leaf in the cherry of $G^*$.

$G^*$ has $n-p$ gene tree coalescences, which take place on species tree edges $k,k+1,\ldots,n-1$, a total of $n-k$ edges. It is possible for multiple coalescences in $G^*$ to occur on branch $k$; taking into account that branch $k$ has $k+1$ descendant leaves, and $p-1$ coalescences have already occured including the pivotal coalescence, at most $k-p+1$ coalescences of $G^*$ can occur on branch $k$. 

\begin{figure}[tb]
\centering
\begin{minipage}[b]{0.5\linewidth} (A)

\includegraphics[width=\linewidth]{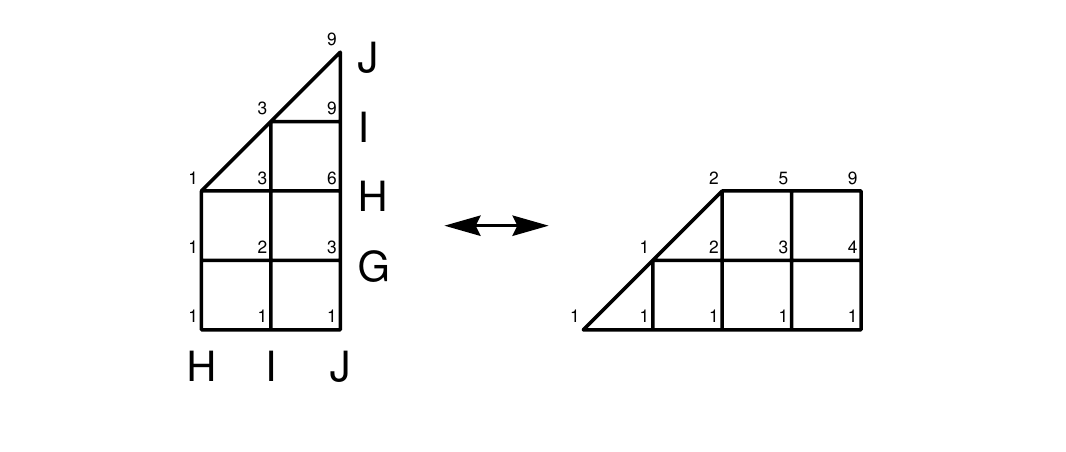}
\end{minipage}
\begin{minipage}[b]{0.5\linewidth} (B)

\includegraphics[width=\linewidth]{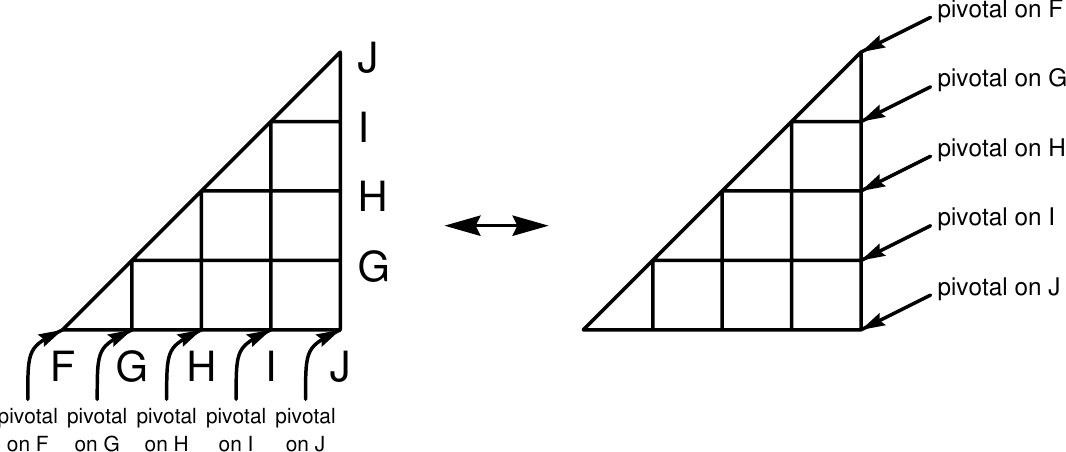}
\end{minipage}
\caption{Monotonic path construction for coalescences ancestral to the pivotal coalescence. (A) Exchanging the starting and ending points of monotonic paths, the number of monotonic paths in a trapezoidal lattice is equal to an entry in a Catalan triangle. Following the notation of Figure~\ref{fig:ex} with $n=10$ and $p=6$, suppose the pivotal coalescence happens on species tree edge $k=7$. Up to two additional coalescences can happen on edge $k=7$, producing a trapezoid. The number of monotonic paths is obtained from a Catalan trapezoid of order $k-p+2=3$ with $n-k-1=2$ right-steps and $n-p=4$ up-steps, or by symmetry, by computing entry $(n-p,n-k-1)=(4,2)$ of a Catalan triangle. (B) Monotonic path construction for each of the $n-p+1=5$ options for placement of the pivotal coalescence. For a placement of the pivotal coalescence shown in the left-hand diagram, the number of coalescent histories for $(G^*,S)$ is obtained by counting monotonic paths in the right-hand diagram from the lower-left vertex to an associated point on the right-hand edge.}
\label{fig:upper}
\end{figure}

The coalescences of $G^*$ therefore correspond to monotonic paths that do not cross a specified diagonal of a trapezoidal lattice. The number of right-steps is $n-k-1$, one for each non-root edge on which coalescences take place, and the number of up-steps is $n-p$, one for each gene tree coalescence in $G^*$. The order of the trapezoid is $k-p+2$, one more than the number of coalescences of $G^*$ that can occur on the initial branch $k$. 

We can count these monotonic paths using eq.~\eqref{trapezoid_entry}, or by noting that the number of monotonic paths is symmetric with respect to interchange of the starting and ending points. By symmetry, the number of paths on a Catalan trapezoid is then equal to one of the entries in the right-most column of some Catalan triangle. We have 
\begin{equation}
    C(n-p,n-k-1)=C_t(n-k-1,n-p,k-p+2).
\end{equation}
A visual explanation appears in Figure~\ref{fig:upper}.

Denote by $u_k$ the number of coalescent histories for $(G^*,S)$. Using eq.~\eqref{trapezoid_entry} or the symmetry argument with eq.~\eqref{catalan_entry} to count monotonic paths in a triangular lattice with $n-p$ \emph{right}-steps and $n-k-1$ \emph{up}-steps, we have
\begin{equation}
\label{uk}
u_k = C(n-p,n-k-1) = \binom{2n-p-k-1}{n-k-1}-\binom{2n-p-k-1}{n-k-2}.
\end{equation}

\subsection{Full formula}
\label{secFullFormula}

We have shown that the number of partial coalescent histories for $(G_*,S)$ that place the pivotal coalescence on edge $k$ is $\ell_k r_k$, and that for each of these partial coalescent histories, the number of partial coalescent histories for $(G^*,S)$ is $u_k$. Because each coalescent history for $(G,S)$ consists of a partial coalescent history for $(G_*,S)$, a placement of the pivotal coalescence, and a partial coalescent history for $(G^*,S)$, the number of coalescent histories for the case in which the pivotal coalescence happens on edge $k$ is $\ell_kr_ku_k$. Summing over values of $k$, we have proven the following theorem.

\begin{theorem}
\label{thm:coalescent}
Consider a caterpillar species tree $S$ with $n \geq 4$ leaves and an identically-labeled $p$-pseudocaterpillar gene tree $G$ with $n$ leaves and $4 \leq p \leq n$. With $C(n,k)$ as in eq.~\eqref{catalan_entry}, the number of coalescent histories for $(G,S)$ is
	\begin{equation}
    \label{coalescent}
    h(n,p) = \sum_{k=p-1}^{n-1}\ell_kr_ku_k = \sum_{k=p-1}^{n-1}C(k,p-3) \, (k-p+2) \, C(n-p,n-k-1).
    \end{equation}
\end{theorem}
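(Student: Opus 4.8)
The plan is to prove eq.~\eqref{coalescent} by partitioning the coalescent histories for $(G,S)$ according to the species tree edge $k$ on which the pivotal coalescence lands, exactly as in the worked example of Section~\ref{secExample}. First I would observe that the pivotal coalescence---the most recent common ancestor in $G$ of the two cherries---is ancestral in $G$ to every other internal node, so by condition~(b) of Definition~\ref{defCoalescentHistories} its image $k$ is ancestral in $S$ to the image of every other coalescence. Because the pivotal node is ancestral to leaf $p$, condition~(a) forces $k$ to lie on an edge ancestral to that leaf, which in the caterpillar $S$ restricts $k$ to the range $p-1 \le k \le n-1$. Fixing $k$ partitions the set of coalescent histories into disjoint blocks, so it suffices to count the histories in each block and sum over the admissible range.

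For a fixed $k$, the central claim is that a coalescent history is equivalent to an independent choice of three pieces of data: the placement of the pivotal coalescence at $k$, a partial coalescent history for the descendant subtree $G_*$ on $S_k$, and a partial coalescent history for the ancestral caterpillar $G^*$ on $S$. The key point enabling this factorization is that in a caterpillar species tree the internal edges are linearly nested, so edge $k$ separates its descendant edges $1,\dots,k$ from its ancestral edges $k,\dots,n-1$; condition~(b) then reduces to the single requirement that all coalescences descended from the pivotal node map to edges $\le k$ while all coalescences ancestral to it map to edges $\ge k$, with no further interaction between the two subproblems.

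Next I would count the descendant block. The left subtree $G_{*\ell}$ is a caterpillar with $p-3$ coalescences confined to edges $1,\dots,k$, so the bijection of \cite{himwich2020} identifies its partial coalescent histories with monotonic paths to $(k,p-3)$ and yields $\ell_k = C(k,p-3)$, eq.~\eqref{lk}, via eq.~\eqref{catalan_entry}; the right subtree $G_{*r}$ contributes a single coalescence that may sit on any of the edges $p-1,\dots,k$, giving the independent factor $r_k = k-p+2$ of eq.~\eqref{rk}. For the ancestral block I would collapse $G_*$ to a leaf, obtaining the $(n-p+1)$-leaf caterpillar $G^*$ whose $n-p$ coalescences occur on edges $k,\dots,n-1$; since up to $k-p+1$ of them may stack on the initial edge $k$, the associated monotonic paths live on a Catalan trapezoid of order $k-p+2$, which by the start/end symmetry of path counting equals the Catalan-triangle entry $u_k = C(n-p,n-k-1)$ of eq.~\eqref{uk}. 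Multiplying the three independent factors gives $\ell_k r_k u_k$ histories with pivotal coalescence on edge $k$, and summing over $p-1 \le k \le n-1$ produces eq.~\eqref{coalescent}.

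I expect the main obstacle to be rigorously justifying the factorization of the second step---namely that conditions~(a) and~(b) of Definition~\ref{defCoalescentHistories} decouple exactly into the descendant and ancestral subproblems once $k$ is fixed. One must verify both that every independent choice of the three pieces reassembles into a legitimate coalescent history (no monotonicity is violated across the pivotal node, and every reachability constraint is met) and that every coalescent history arises from exactly one such triple. The linear nesting of the caterpillar edges is what makes this clean, but making the bijection precise---including the bookkeeping that permits multiple coalescences of $G^*$ to pile onto edge $k$, and hence forces a trapezoid rather than a triangle in the ancestral count---is the step that carries the real content.
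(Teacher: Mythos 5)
Your proposal is correct and takes essentially the same route as the paper: partition the coalescent histories by the species-tree edge $k$ receiving the pivotal coalescence, count the descendant block as $\ell_k r_k = C(k,p-3)\,(k-p+2)$ via the monotonic-path bijection of \cite{himwich2020}, count the ancestral block by collapsing $G_*$ to a leaf and invoking the trapezoid--triangle symmetry to obtain $u_k = C(n-p,n-k-1)$, and then multiply and sum over $p-1 \le k \le n-1$. The paper's own justification of the three-way factorization is exactly as informal as yours, so the ``main obstacle'' you flag is not treated more rigorously there either.

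One sentence in your first paragraph is false, although harmlessly so: the pivotal coalescence is \emph{not} ancestral in $G$ to every other internal node --- for $p<n$ the $n-p$ coalescences of $G^*$ are ancestral to it (it is the root of $G$ only when $p=n$) --- and consequently its image is not ancestral in $S$ to the images of all other coalescences; rather, the images of the $G^*$ coalescences are ancestral to edge $k$. Your own second paragraph states the correct dichotomy (coalescences descended from the pivotal node map to edges $\le k$, ancestral ones to edges $\ge k$), and the admissible range $p-1 \le k \le n-1$ already follows, as you note, from condition (a) applied to leaf $p$ together with the fact that the root edge is $n-1$; so nothing downstream depends on the false claim, but it should be deleted or corrected.
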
 
Note that eq.~\eqref{coalescent} can be seen to apply for $(n,p)$ with $p=3$ and $3 \leq p \leq n$, and hence for $n=3$. In this case, $G$ is viewed as a caterpillar gene tree whose cherry joins leaves 2 and 3. $G_{*\ell}$ has no coalescences, so $\ell_k = 1$; this enumeration accords with the definition of the function $C$ in eq.~\eqref{catalan_entry}, where we have $C(k,p-3)=C(k,0)=1$ for all $k$.

A convenient form of Eq.~\eqref{coalescent} for computation is as follows:
\begin{equation}
    \label{coalescent_exp}
    h(n,p) = \sum_{k=p-1}^{n-1} \frac{(k-p+2)^2(k-p+4)(2n-p-k-1)!(k+p-3)!}{(k+1)!(n-k-1)!(n-p+1)!(p-3)!}.
\end{equation}

\subsection{Identical and non-identical leaf labelings}
\label{secIdentical}

The results of \cite{himwich2020} enable a result on leaf labelings. We claim that for a fixed caterpillar species tree, an identically-labeled $p$-pseudocaterpillar gene tree---the focus of our analysis---has strictly more coalescent histories than any non-identically-labeled $p$-pseudocaterpillar. The argument is that any non-identically-labeled gene tree introduces at least one ``roadblock,'' decreasing its associated number of monotonic paths compared to the case of identical labels.

\begin{proposition}
	\label{permutations}
	Consider a caterpillar species tree $S$ with $n \geq 4$ leaves and a value of $p$, $4 \leq p \leq n$. The number of coalescent histories for $(G,S)$, with $G$ a $p$-pseudocaterpillar gene tree bijectively labeled with the same $n$ labels as $S$, is bounded above by $h(n,p)$, with equality if and only if $G$ and $S$ are identically labeled.
\end{proposition}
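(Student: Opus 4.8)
The plan is to combine the monotonic-path bijection of \cite{himwich2020} with the pivotal decomposition underlying Theorem~\ref{thm:coalescent}. The decomposition of a coalescent history for $(G,S)$ into a placement of the pivotal coalescence on some edge $k$, a partial history for $(G_*,S_k)$, and a partial history for $(G^*,S)$ depends only on the \emph{shape} of $G$, so it applies verbatim to an arbitrarily labeled $p$-pseudocaterpillar. For each component, \cite{himwich2020} identifies the partial coalescent histories with monotonic lattice paths that avoid a set of roadblocks, where a roadblock arises whenever the labels below an internal node of $G$ force its image, under condition (a) of Definition~\ref{defCoalescentHistories}, onto an edge strictly above the one dictated by the diagonal. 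When $G$ and $S$ are identically labeled, every clade of $G$ occupies a block of consecutive labels matching a block of consecutive leaves of $S$, no roadblocks occur, and the three component counts are exactly $\ell_k$, $r_k$, and $u_k$, summing to $h(n,p)$. First I would record the governing monotonicity principle: for fixed endpoints, the number of roadblock-avoiding monotonic paths is weakly decreasing in the roadblock set, and strictly decreasing when a newly added roadblock lies below the diagonal on at least one otherwise-admissible path.

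The equality case can be pinned down by a clade-maximum computation. The clades of a $p$-pseudocaterpillar are the nested chain $\{1,2\} \subset \{1,2,3\} \subset \cdots \subset \{1,\dots,p-2\} \subset \{1,\dots,p\} \subset \cdots \subset \{1,\dots,n\}$ together with the second cherry $\{p-1,p\}$. Writing $\pi$ for the labeling, condition (a) forces the coalescence of a clade $c$ onto edges numbered at least $\max\pi(c)-1$. For each chain clade one has $\max\pi(\{1,\dots,m\}) \ge m$, with equality if and only if $\pi(\{1,\dots,m\}) = \{1,\dots,m\}$; requiring equality simultaneously for all $m$ in the chain forces $\pi$ to be the identity up to transpositions within the two cherries, which is exactly the class of identical labelings. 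Thus a labeling produces no effective roadblock along the spine precisely when it is identical, and in that case the cherry automatically receives the labels $\{p-1,p\}$.

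The main obstacle is the second cherry, which blocks a naive per-component comparison. If $\pi$ assigns small labels to $\{p-1,p\}$, the cherry coalescence $G_{*r}$ becomes \emph{less} constrained than in the identical case, so its contribution can exceed $r_k$; the set of coalescent histories for the $\pi$-labeled tree is then not contained in the set for the identical labeling, and the inequality $h_\pi(n,p) \le h(n,p)$ is a genuinely global statement rather than a termwise one. I would resolve this by collapsing the second cherry to a single leaf carrying the larger of its two labels, $M = \max\pi(\{p-1,p\})$. This turns $G$ into a caterpillar $\widetilde{G}$ on $n-1$ leaves whose label set is a subset of that of $S$, so the clean spine containment does apply to $\widetilde{G}$; moreover every coalescent history of $G$ is recovered from a partial history $H$ of $(\widetilde{G},S)$ by choosing the edge of the cherry coalescence between edge $M-1$ and the pivotal edge, giving $h_\pi(n,p) = \sum_{H}\bigl(\alpha_H(\text{pivotal}) - M + 2\bigr)$. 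The crux is then the quantitative claim that this weighted sum is maximized by the identical labeling; I expect to establish it by a majorization/exchange argument on the clade maxima of $\widetilde{G}$, taking care that exchanges straddling the cluster/cherry boundary trade a loosened cherry constraint against a tightened spine constraint, and verifying that the net change is never positive. This boundary bookkeeping is the step I expect to be hardest.

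Finally I would combine the bound with the clade-maximum characterization. For an identical labeling the three component counts equal $\ell_k$, $r_k$, $u_k$ and the total is $h(n,p)$. For a non-identical $\pi$ the characterization produces a chain clade with $\max\pi(c) > \max c$ and hence a roadblock lying strictly below the diagonal of the corresponding caterpillar component; the diagonal-hugging history in which each clade coalesces on its minimal identical-labeling edge is admissible under the identical labeling and passes through this roadblock, so it certifies that the strict case of the monotonicity principle applies. The quantitative comparison of the previous paragraph is therefore strict, giving $h_\pi(n,p) < h(n,p)$ and completing the characterization of the equality case.
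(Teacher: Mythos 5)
Your reduction is correctly set up, but the proposal has a genuine gap exactly at its crux. The identity $h_\pi(n,p)=\sum_H \bigl(\alpha_H(\mathrm{piv})-M+2\bigr)$ obtained by collapsing the second cherry is valid (for the identical labeling it recovers $\sum_k \ell_k u_k (k-p+2)=h(n,p)$), and your clade-maximum characterization of identical labelings agrees with the paper's notion (identity up to transpositions within the cherries). But the entire content of Proposition~\ref{permutations} now resides in the claim that this weighted sum is uniquely maximized by identical labelings, and this you do not prove: you state that you ``expect'' a majorization/exchange argument to succeed and explicitly leave its hardest case---exchanges across the cherry/spine boundary, where decreasing $M$ buys more cherry placements at the price of a more constrained $\widetilde G$---as open bookkeeping. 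Note also that the comparison is harder than a roadblock-monotonicity statement even after the collapse, because the label set of $\widetilde G$ itself depends on $\pi$ (it omits the smaller cherry label), so one is not comparing roadblock sets on a fixed lattice instance. The strictness claim in your final paragraph inherits the same gap: exhibiting a roadblock that strictly decreases one caterpillar component count does not by itself strictly decrease the weighted sum, absent the quantitative comparison.

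Your diagnosis of the obstacle, however, is correct---and it is sharper than the paper's own proof, which proceeds by exactly the per-component comparison you rule out: the paper asserts $\ell_k'\le\ell_k$, $r_k'\le r_k$, $u_k'\le u_k$ on the grounds that relabeling only introduces roadblocks on the same lattices. As you observe, putting small-index labels on the second cherry makes $r_k'>r_k$: for $n=6$, $p=5$, and $G'=((((A_3,A_4),A_5),(A_1,A_2)),A_6)$, the cherry $(A_1,A_2)$ may coalesce on any of edges $1,\dots,k$ below a pivotal coalescence on edge $k$, so $r_k'=k>k-3=r_k$. (In this example the products still satisfy $\ell_k' r_k' u_k'\le \ell_k r_k u_k$, namely $8\le 9$ at $k=4$ and $25\le 28$ at $k=5$, so the proposition is safe, but not factor-by-factor.) Thus a correct proof must confront the global tradeoff you identify, either along the lines you sketch or by proving the termwise product inequality $\ell_k' r_k' u_k' \le \ell_k r_k u_k$ directly; since you leave that step unproven, your proposal does not yet establish the proposition.
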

\begin{proof}
Theorem \ref{thm:coalescent} demonstrates that the number of coalescent histories is $h(n,p)$ in the identically-labeled case. We must show that a non-identically-labeled $G$ produces fewer coalescent histories.

Fix $n$ and $p$. Consider caterpillar species tree $S$ and $p$-pseudocaterpillar gene tree $G$, bijectively labeled with $n$ labels $\{A_1,A_2, \ldots, A_n\}$, but not necessarily identically labeled. Suppose that from left to right, $A_1, A_2, \ldots, A_n$ label the leaves of $S$ when $S$ appears in canonical form.

In eq.~\eqref{coalescent}, the form of the equation $h(n,p) = \sum_{k=p-1}^{n-1} \ell_k r_k u_k$ has a sum from $k=p-1$ to $n-1$ of a product of three quantities. Each quantity 
counts the number of monotonic paths on a Catalan triangle---trivially so in the case of $r_k=C(k-p+2,1)=k-p+2$, which represents the number of monotonic paths that proceed $k-p+2$ steps to the right and one step up.

If we now change $G$ to a possibly non-identically-labeled $p$-pseudocaterpillar $G'$, then the coalescent histories can be enumerated by a corresponding decomposition $h'(n,p) = \sum_{k=p-1}^{n-1} \ell_k' r_k' u_k'$, where, with the pivotal coalescence on edge species tree edge $k$,  $\ell_k'$, $r_k'$, and  $u_k'$ count partial coalescent histories for $(G_{* \ell}',S_k)$, $(G_{* r}',S_k)$, and $({G'}^{*},S)$, respectively.

To demonstrate that $h'(n,p) < h(n,p)$, we argue that $\ell_k' \leq \ell_k$, $r_k' \leq r_k$, and $u_k' \leq u_k$, and that for $G' \neq G$, at least one of these inequalities is strict. Following the argument of Corollary 11 of \cite{himwich2020}, the quantities $\ell_k'$,  $r_k'$, and $u_k'$ count monotonic paths that do not cross the $y=x$ line, that respectively proceed from $(0,0)$ to $(k,p-3)$, $(0,0)$ to $(k-p+2,1)$, and $(0,0)$ to $(n-p,n-k-1)$, possibly with roadblocks.

When $G^'=G$, no roadblocks occur, so that $\ell_k' \leq \ell_k$, $r_k' \leq r_k$, and $u_k' \leq u_k$. When $G' \neq G$, however, at least one of the following three statements holds: (i) $G_{* \ell}' \neq G_{* \ell}$; (ii) $G_{* r}' \neq G_{* r}$; (iii) ${G'}^{*} \neq G^*$. In the first case, for at least one $k$, a roadblock occurs in tabulating coalescent histories for $G_{*\ell}$, so that $\ell_k' < \ell_k$. Similarly, in the second case, for at least one $k$, a roadblock occurs in tabulating coalescent histories for $G_{*r}$, so that $r_k' < r_k$; in the third case, for at least one $k$, a roadblock occurs in tabulating coalescent histories for $G^*$, so that $u_k' < u_k$. 
\end{proof}

Note that for the sum describing the number of coalescent histories of $(G,S)$ to even proceed over the full range from $k=p-1$ to $n-1$, the first $p$ labels of $G$ from left to right when $G$ is written in canonical form must be a permutation of $A_1, A_2, \ldots, A_p$. Otherwise, at least one label of $G$ must be indexed by a value that exceeds $p$ and therefore cannot descend from edge $p-1$ of $S$.

\section{Small $p$}
\label{secSpecialCases}

The case of identically-labeled $G$ and $S$ produces the largest number of coalescent histories among all $p$-pseudocaterpillar gene trees and caterpillar species trees with fixed $(n,p)$ and $p \geq 4$. Note that for $p=3$, the case of identically-labeled $G$ and $S$ produces more coalescent histories than any non-identically-labeled pair; both $G$ and $S$ are caterpillars in this case, and fixing $n$, the number of coalescent histories for matching caterpillars exceeds the number of coalescent histories for any non-matching pair of caterpillars \citep[][Corollary 11]{himwich2020}.

We now return to the case of identically-labeled $(G,S)$ and evaluate eq.~\eqref{coalescent} for fixed small $p$.  

\subsection{Exact formulas for fixed $p$}
\label{secExactFormulas}

Fixing the variable $p$ allows us to obtain exact formulas for $h(n,p)$ as a rational function of $n$. The smallest case is $p=3$, so that the function $h(n,3)$ is defined for all $n\geq 3$:
\begin{equation*}
h(n,3) = \sum_{k=2}^{n-1} C(k,0) \, (k-1) \, C(n-3,n-k-1).
\end{equation*}
We rewrite the summand for $h$ in the expanded form from eq.~\eqref{coalescent_exp}. We then obtain the sum using the Wilf-Zeilberger algorithm for computing sums that involve binomial coefficients.
\begin{proposition}
\label{wz1}
For all $n \geq 3$, the following identity holds
\begin{equation}
\label{eq_h3}
h(n,3) = \sum_{k=2}^{n-1}\frac{(k-1)^2(2n-k-4)!}{(n-2)!(n-k -1)!} = \frac{3(2n-4)!}{ n! (n-3)!}.
\end{equation}
\end{proposition}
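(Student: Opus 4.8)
The plan is to treat $h(n,3)=\sum_{k=2}^{n-1} t_n(k)$, with $t_n(k)=\frac{(k-1)^2(2n-k-4)!}{(n-2)!(n-k-1)!}$ obtained by specializing eq.~\eqref{coalescent_exp} at $p=3$, as a single-parameter hypergeometric sum and to attack it by creative telescoping (Zeilberger's algorithm), as the statement advertises. The first thing I would verify is that the summand is genuinely a hypergeometric term in $k$: a direct computation gives
\[
\frac{t_n(k+1)}{t_n(k)}=\frac{k^2(n-k-1)}{(k-1)^2(2n-k-4)},
\]
a rational function of $k$ with $n$ as a parameter, and likewise $t_{n+1}(k)/t_n(k)$ is rational in $(n,k)$. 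This is precisely the hypothesis under which Gosper's and Zeilberger's algorithms are guaranteed to apply.

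I would then run creative telescoping to manufacture a certificate. The aim is to produce polynomials $a_0(n),a_1(n)$ and a rational function $R(n,k)$ so that, writing $G(n,k)=R(n,k)\,t_n(k)$, the telescoping identity
\[
a_1(n)\,t_{n+1}(k)+a_0(n)\,t_n(k)=G(n,k+1)-G(n,k)
\]
holds as an identity of rational functions in $k$. Once the certificate is in hand, verifying this is a routine (if tedious) algebraic check: divide through by $t_n(k)$ and confirm equality of two rational functions of $k$. Summing the identity over $k$ collapses the right-hand side telescopically, leaving only boundary contributions, and yields a first-order recurrence $a_1(n)S(n+1)+a_0(n)S(n)=0$ for $S(n):=h(n,3)$. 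I expect this recurrence to be
\[
(n+1)(n-2)\,S(n+1)=2(n-1)(2n-3)\,S(n),
\]
since one checks directly that the proposed closed form $r(n)=\frac{3(2n-4)!}{n!(n-3)!}$ satisfies $r(n+1)/r(n)=\frac{(2n-2)(2n-3)}{(n+1)(n-2)}$. Matching the single base case $S(3)=1=r(3)$, where both sides reduce to a one-term sum equal to $1$, then forces $S(n)=r(n)$ for all $n\geq 3$ by induction.

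The step I expect to be the genuine obstacle is the bookkeeping of the boundary terms, because the limits of summation, $k=2$ and $k=n-1$, both move with $n$; the telescoping of $G(n,k+1)-G(n,k)$ produces the clean recurrence only if the companion $G$ vanishes just outside the summation window, or if the leftover endpoint terms can be reconciled with the shift $n\mapsto n+1$. I would handle this by checking that $t_n(k)$, and hence $G(n,k)$, vanishes at the relevant boundary indices: the factor $(k-1)^2$ kills the $k=1$ term and the factorial $(n-k-1)!$ in the denominator controls the upper end, so no stray terms survive. A reassuring independent check, and a fallback if one prefers to avoid a machine-generated certificate, is the elementary route: substituting $m=n-1-k$ rewrites the sum as $\frac{1}{n-2}\sum_{m=0}^{n-3}(n-2-m)^2\binom{n+m-3}{m}$, after which expanding $(n-2-m)^2=(n-2)^2-2(n-2)m+m^2$ and evaluating the three resulting sums by the hockey-stick identity $\sum_{m}\binom{r+m}{m}=\binom{r+N+1}{N}$ and its weighted variants recombines, after simplification, into $\frac{3(2n-4)!}{n!(n-3)!}$.
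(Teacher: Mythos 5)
Your proposal is correct and takes essentially the same approach as the paper: the paper's proof is precisely a Wilf--Zeilberger certificate argument, in which the summand is divided by the conjectured right-hand side, a certificate $R(m,k)$ (with $m=n-1$) is exhibited so that the normalized sum $\sum_{k=2}^{m}F(m,k)$ is constant in $m$, and the base case $m=2$ is checked. Your Zeilberger-style packaging---creative telescoping of the raw summand to the first-order recurrence $(n+1)(n-2)\,S(n+1)=2(n-1)(2n-3)\,S(n)$, together with the observation that $\frac{3(2n-4)!}{n!\,(n-3)!}$ satisfies the same recurrence and agrees at $n=3$---is the same machinery in an equivalent form, since dividing your recurrence through by the closed form turns it into exactly the paper's statement that the normalized sum is constant.
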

\begin{proof}
First, let $m = n-1$. Let the function $F(m,k)$ be the ratio of the summand to the right-hand side of eq.~\eqref{eq_h3}: 
\[
F(m,k) =  \frac{(k-1)^2 m (m+1) (m-2)! (2m-k-2)!}{6 (m-1)(2m-3)! (m-k)!}.
\]

This function and a proof certificate
\[
R(m,k) = -\frac{(k-2) (2m-k-1) \left(k^2 m-k^2+k-2 m\right)}{2 m^2 (2 m-1) (k-1) (m-k+1)}
\]
satisfy the assumptions of the Wilf-Zeilberger theorem \citep[Theorem~7.1.1]{aeqb}. Hence, the sum $\sum_{k=2}^m F(m,k)$ does not depend on $m$. We know that $\sum_{k=2}^m F(m,k) = 1$ when $m=2$, from which eq.~\eqref{eq_h3} follows by substituting $n = m+1$.
\end{proof}

It is convenient to write eq.~\eqref{eq_h3} as a product of a rational function of $n$ and a Catalan number, 
$$h(n,3)=\frac{3(n-2)}{2(2n-3)}\mathcal{C}_{n-1}.$$ 
For other small values of $p$, we follow the proof in Proposition~\ref{wz1} to obtain analogous expressions (Table~\ref{table:closed_forms}). The corresponding proof certificates $R(m,k)$ appear in Appendix~\ref{wz_table}.

\begin{table}[tb]
\centering
{\setlength{\extrarowheight}{10pt}%
\begin{tabular}{lll}
\toprule
 $p$ & $h(n,p)$ & $\lim_{n \rightarrow \infty} \frac{h(n,p)}{\mathcal{C}_{n-1}}$\\
 \midrule
3 & $\frac{3(n-2)}{2(2n-3)} \mathcal{C}_{n-1}$ & $\frac{3}{4}$ \\
 4 &  $\frac{(19 n-40)(n-3)}{4(2n-3)(2n-5)}\mathcal{C}_{n-1}$ & $\frac{19}{16}$ \\
 5 &  $\frac{(49 n^2-254 n+315)(n-4)}{4(2n-3)(2n-5)(2n-7)}\mathcal{C}_{n-1}$ & $\frac{49}{32}$ \\
 6 &  $\frac{(467 n^3-4319 n^2+12798 n-12096)(n-5)}{16(2n-3)(2n-5)(2n-7)(2n-9)}\mathcal{C}_{n-1}$ & $\frac{467}{256}$ \\
 7 &  $\frac{(1067 n^4-15263 n^3+78997 n^2-174673 n+138600)(n-6)}{16(2n-3)(2n-5)(2n-7)(2n-9)(2n-11)}\mathcal{C}_{n-1}$ & $\frac{1067}{512}$\\
 8 &  $\frac{(4751 n^5-96706 n^4+762163 n^3-2898044 n^2+5296836 n-3706560)(n-7)}{32(2n-3)(2n-5)(2n-7)(2n-9)(2n-11)(2n-13)}\mathcal{C}_{n-1}$ & $\frac{4751}{2048}$\\
 9 &  $\frac{(10393 n^6-284776 n^5+3155822 n^4-18055844 n^3+56078685 n^2-89321220 n+56756700)(n-8)}{32(2n-3)(2n-5)(2n-7)(2n-9)(2n-11)(2n-13)(2n-15)}\mathcal{C}_{n-1}$ & $\frac{10393}{4096}$\\
 \bottomrule
\end{tabular}}
\caption{Closed-form expressions for the function $h(n,p)$ for fixed values of $p$ (eq.~\eqref{coalescent}). Wilf-Zeilberger proof certificates appear in Appendix~\ref{wz_table}. The next three terms for $\lim_{n\rightarrow \infty} [h(n,p)/\mathcal{C}_{n-1}]$ are $179587/65536$ for $p=10$, $384199/131072$ for $p=11$, and $1631605/524288$ for $p=12$.}
\label{table:closed_forms}
\end{table}

\subsection{Asymptotic behavior for small $p$}
\label{sect:asymptotic}

We can extend beyond the exact formulas for $h(n,p)$ for small $p$ in Section \ref{secExactFormulas} to show that for each fixed $p$, there exists a constant $\beta_p$ such that $\lim_{n \rightarrow \infty} h(n,p) \sim \beta_p \mathcal{C}_{n-1}$. The approach follows \cite{disanto2016asymptotic}, who considered matching gene trees and species trees in \emph{caterpillar-like families}, in which trees had a caterpillar shape with the caterpillar subtree of size $\ell$ replaced by a ``seed tree'' $t$ of size $\ell$. They assumed $G = S = t^{(n)}$, with $t^{(n)}$ consisting of $t$ augmented by  $n$ ``caterpillar branches'' appended to its root.
    
The framework makes use of additional definitions. An \emph{$r$-extended coalescent history} is a coalescent history for the case in which a species tree is assumed to have its root-branch divided into $m \geq 1$ components \citep{rosenberg2007counting}. Labeling these components from 1 to $r$ with branch 1 closest to the species tree root, an \emph{$m$-rooted coalescent history} is an $r$-extended coalescent history in which the gene tree root coalesces on species tree branch $m$, $1 \leq m \leq r$. The number of $m$-rooted coalescent histories $h_{n,m}$ for $G=S=t^{(n)}$ then equals $h_{n,m} = e_{n,m}-e_{n,m-1}$, with $e_{n,0}=0$. 
 
\cite{disanto2016asymptotic} devised an iterative procedure for obtaining the coalescent histories for $t^{(n+1)}$ from the coalescent histories for $t^{(n)}$, $n \geq 0$. For a fixed seed tree $t$, the generating function for the sequence $h_{0,m}(t)$ counting $m$-rooted coalescent histories for $t$ is written
\begin{equation*}
        g(y) = \sum_{m=1}^\infty h_{0,m}(t) \, y^m.
\end{equation*}
The bivariate generating function for the sequence $h_{n,m}(t)$, counting  $m$-rooted histories for $(G,S)=(t^{(n)},t^{(n)})$, is denoted
\begin{equation*} 
        F(y,z) = \sum_{m = 1}^\infty \sum_{n=0}^\infty h_{n,m}(t) \, z^ny^m.
\end{equation*}
The univariate generating function $f(z)$ for the sequence $h_{n,1}(t)$ counts coalescent histories for $(G,S)=(t^{(n)},t^{(n)})$, satisfying 
\begin{equation*}
        f(z) = \sum_{n=0}^\infty h_{n,1}(t) \, z^n = \frac{\partial F(0,z)}{\partial y}.
\end{equation*}

\cite{disanto2016asymptotic} obtained the result 
\begin{equation}
    f(z) = \frac{g\left(\frac{1-\sqrt{1-4z}}{2}\right)}{z}.
\end{equation}
By examining the expansion of $f(z)$ around its dominant singularity, they showed that given $t$, there exists a positive constant $\beta_t$ such that    $h_{n,1}(t) \sim \beta_t \mathcal{C}_{n-1}$.

The construction of \cite{disanto2016asymptotic} that enumerated coalescent histories of $t^{(n+1)}$ from those of $t^{(n)}$ does not use $G=S$. Thus, it  applies for identically-labeled caterpillar-like families generated from \emph{nonmatching} seed trees $t_G$ and $t_S$ of the same size, as does the associated procedure for obtaining the generating function $f(z)$ for the number of coalescent histories for caterpillar-like families with $p$-pseudocaterpillar $G$ and identically-labeled caterpillar $S$.

We first derive an expression for  $e_{(n,p),r}$, the number of $r$-extended coalescent histories for the $p$-pseudocaterpillar gene tree of $n \geq p$ leaves on an identically-labeled caterpillar species tree of $n$ leaves. In our notation, dividing the root-branch amounts to adding right-steps on the diagram for $u_k$ (Section~\ref{secUpperPart}) and increasing the range of the index $k$. We have
\begin{equation}
    \label{eq:h_extended}
    e_{(n,p),r} = \sum_{k=p-1}^{n+r-2}C(k,p-3) \, (k-p+2) \, C_t(n-p,n+r-k-2,r).
\end{equation}
To obtain this expression, note that the extension of the species tree from 1 to $r$ branches ancestral to the root does not affect the $C(k,p-3)$ and $k-p+2$ terms, representing coalescences descended from the pivotal coalescence. However, the Catalan trapezoid that tabulates coalescent histories ancestral to the pivotal coalescence is affected. The number of coalescences ancestral to the pivotal coalescence continues to be $n-p$. The number of available branches is now $n+r-k-2$ instead of $n-k-1$. Traversing the paths ``forward,'' the trapezoid has order $k-p+2$, one more than the number of coalescences that can occur on the species tree branch on which the pivotal coalescence takes place, giving a count of  $C_t(n+r-k-2,n-p,k-p+2)$ (eq.~\eqref{trapezoid_entry}). The number of vertices on the upper edge of the trapzeoid is $r$, so that if paths are traversed in reverse order, the number of $r$-extended coalescent histories is, equivalently, $C_t(n-p,n+r-k-2,r)$.

The associated number of $m$-rooted coalescent histories, $1 \leq m \leq r$, then satisfies 
\begin{equation}
    \label{eq:h_rooted}
    h_{(n,p),m} = e_{(n,p),m} - e_{(n,p),m-1}.
\end{equation}
Noting that $n=p$ for the seed tree for the $p$-pseudocaterpillar family and  applying eqs.~\eqref{eq:h_extended} and eq.~\eqref{eq:h_rooted} gives generating function $g_p(y)$,
\begin{equation}
\label{eq:g_pcat}
    g_p(y) = \sum_{m=1}^\infty h_{(p,p),m} y^m = \sum_{m=1}^\infty \frac{m(m+2)(2p+m-5)!}{(p-3)!(p+m-1)!}y^m.
\end{equation}
For small $p$, the generating functions $g_p$ can be simplified as in Table~\ref{table:generating_funcs}. 

From $g_p(y)$, we then obtain the generating function $f_p(z)$ that counts coalescent histories as the numbers of leaves in the gene tree and species tree increase from $p$:
\begin{equation}
\label{eq:h_genfunc}
    f_p(z) = \sum_{k=0}^\infty h_{p+k,1}z^n = \frac{g_p\left(\frac{1-\sqrt{1-4z}}{2}\right)}{z}.
\end{equation}
Using eq.~\eqref{eq:g_pcat}, these generating functions can also be simplified for small $p$ (Table~\ref{table:generating_funcs}). 

Expanding the entries in Table~\ref{table:generating_funcs}, we obtain, for example:
\begin{eqnarray}
   \label{eq:f3_genfunc_example}
    f_3(z) & = & 1 + 3z + 9z^2 + 28z^3 + 90z^4 + 297z^5 + 1001z^6 + 
    3432z^7 + O(z^{8}) \nonumber \\
     \label{eq:f4_genfunc_example}
    f_4(z) & = & 3 + 11z + 37z^2 + 124z^3 + 420z^4 + 1441z^5 + 5005z^6 +
    17576z^7 + O(z^{8}). \nonumber
\end{eqnarray}
Each function gives the values $h(n,p)$ (eq.~\eqref{coalescent}) as $n$ is incremented beginning with $n=p$.

\begin{table}[tb]
    \centering
    \begin{tabular}{ccc}
    \toprule
        $p$ & $g_p(y)$ & $f_p(z)$ \\
    \midrule
3 & $\frac{y}{(y-1)^2}$ & $\frac{2(-\sqrt{1-4z}+1)}{z(\sqrt{1-4z}+1)^2}$ \\[2ex]
4 & $\frac{y^2-3y}{(y-1)^3}$ & $\frac{8(z-\sqrt{1-4z}+1)}{z(\sqrt{1-4z}+1)^3}$ \\[2ex]
5 & $\frac{2y^3-8y^2+9y}{(y-1)^4}$ & $\frac{8(\sqrt{1-4z}-1) (2z-3\sqrt{1-4z}-6)}{z(\sqrt{1-4z}+1)^4}$ \\[2ex]
6 & $\frac{5y^4-25y^3+44y^2-28y}{(y-1)^5}$ & $\frac{16(-10 z^2+33z+15z \sqrt{1-4z}-4 \sqrt{1-4z}+4)}{z(\sqrt{1-4z}+1)^5}$ \\[2ex]
    \bottomrule
    \end{tabular}
    \caption{Generating functions. Generating function $g_p(y)$ counts $m$-rooted histories $h_{(p,p),m}$ for caterpillar-like families with a seed $p$-pseudocaterpillar gene tree and identically-labeled caterpillar species tree (eq.~\eqref{eq:g_pcat}); generating function $f_p(z)$ counts coalescent histories $h_{(p,p),m}$ (eq.~\eqref{eq:h_genfunc}).}
    \label{table:generating_funcs}
\end{table}

\begin{table}[tb]
\centering
{\setlength{\extrarowheight}{3pt}%
\begin{tabular}{ccc}
\toprule
    & \multicolumn{2}{c}{$\beta_p$} \\\cline{2-3}
$p$ & Matching $p$-pseudocaterpillar $S$ & Caterpillar $S$\\
\midrule
3 & $1.0000$  & $0.7500$ \\
4 & $1.2500$  & $1.1875$ \\
5 & $1.4375$  & $1.5313$ \\
6 & $1.5938$  & $1.8242$ \\
7 & $1.7305$  & $2.0840$ \\
8 & $1.8535$  & $2.3198$ \\
9 & $1.9663$  & $2.5374$ \\
\bottomrule
\end{tabular}}
\caption{Numerical values of the constant $\beta_p$ describing $\lim_{n\rightarrow \infty} [h(n,p)/\mathcal{C}_{n-1}]$, the asymptotic ratio of the number of coalescent histories $h(n,p)$ to the Catalan number $\mathcal{C}_{n-1}$. The gene tree has a $p$-pseudocaterpillar topology. Values for the case that the gene tree and species tree $S$ have a matching $p$-pseudocaterpillar topology are taken from Table 1 of \cite{disanto2016asymptotic}; values for identically-labeled caterpillar $S$ are taken from Table \ref{table:closed_forms}.}
\label{table:comparison}
\end{table}

For $p$-pseudocaterpillar gene trees, we can compare the values for the limiting constants $\beta_p$ for two choices of the species tree $S$: the case in which the species tree has the same $p$-pseudocaterpillar labeled topology, and the case of an identically-labeled caterpillar species tree. The former value, from \cite{disanto2016asymptotic}, exceeds the latter for $p=3$ and $p=4$ (Table \ref{table:comparison}). For $p=5$ to $p=9$, however, $\beta_p$ for the non-matching caterpillar $S$ exceeds that for a matching $p$-pseudocaterpillar.

\cite{rosenberg2010coalescent} had shown that the case of a 4-pseudocaterpillar gene tree and an identically-labeled caterpillar species tree produced more coalescent histories ($\beta_p=1.1875$) than the case of matching caterpillar gene tree and species tree ($\mathcal{C}_{n-1}$ coalescent histories, and hence a limiting ratio of 1). The table demonstrates that $p$-pseudocaterpillar gene trees for each $p$ from 5 to 9 also produce more coalescent histories than the matching caterpillar gene tree.

\section{Maximal number of coalescent histories for fixed $n$}
\label{sec:max}

Applying Theorem \ref{thm:coalescent}, we can calculate $h(n,p)$ systematically for small $n$ and all $p$ with $3 \leq p \leq n$.  Table~\ref{table:values} shows the values of $h(n,p)$ for all $(n,p)$ with $n \leq 12$.

The table suggests two patterns. First, we can see that a symmetry exists in which $h(n,p)=h(n,n-p+3)$. We will verify this symmetry in Section \ref{secSymmetry}. Second, we can observe that for each $n$, the value of $p$ that maximizes $h(n,p)$ lies in the middle, repeating for two adjacent values of $p$ when $n$ is even. We state this result formally in the following theorem.

\begin{theorem}
	\label{thm:maxp}
	Consider a caterpillar species tree $S$ with $n \geq 4$ leaves. Among identically-labeled $p$-pseudocaterpillar gene trees $G$ with $n$ leaves and $3 \leq p \leq n$, the value of $p$ that maximizes the number of of coalescent histories $h(n,p)$ for $(G,S)$ is 
	\begin{equation}
	\label{maxp_odd}
		p_m = \frac{n+3}{2}
	\end{equation}
	if $n$ is odd. If $n$ is even, then two adjacent maxima exist:
	\begin{equation}
		\label{maxp_even}
		p_{m1} = \frac{n+2}{2}, \: p_{m2} = \frac{n+4}{2}.
	\end{equation}
\end{theorem}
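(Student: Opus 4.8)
The plan is to combine the reflection symmetry with a strict-unimodality argument, so that the maximum is forced to sit at the center of symmetry. Granting the identity $h(n,p)=h(n,n-p+3)$ established in Section~\ref{secSymmetry}, the sequence $h(n,3),h(n,4),\dots,h(n,n)$ is symmetric about $p^\ast=(n+3)/2$. For odd $n$, $p^\ast$ is an integer equal to its own reflection; for even $n$, the integers $p_{m1}=(n+2)/2$ and $p_{m2}=(n+4)/2$ are reflections of one another, so $h(n,p_{m1})=h(n,p_{m2})$ automatically. Hence the theorem follows once we show that $h(n,p)$ is \emph{strictly increasing} on the left half of its range, i.e.\ that $h(n,p+1)>h(n,p)$ for every integer $p$ with $3\le p$ and $2p\le n+1$. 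Chaining these strict inequalities from $p=3$ up to the center and reflecting then yields a unique interior maximum at $p^\ast$ when $n$ is odd, and exactly the two equal adjacent maxima $p_{m1},p_{m2}$ when $n$ is even. The threshold $2p\le n+1$ is precisely ``strictly left of the center of symmetry'': the case $2p=n+2$ occurs only for even $n$ at $p=p_{m1}$, where the difference must vanish in agreement with $h(n,p_{m1})=h(n,p_{m2})$.

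To set up the strict inequality I would rewrite the summand of eq.~\eqref{coalescent} using the ballot-number form $C(a,b)=\tfrac{a-b+1}{a+1}\binom{a+b}{b}$ of eq.~\eqref{catalan_entry} together with the substitution $j=n-k-1$, which turns $h(n,p)$ into a single linearly-weighted convolution of two Catalan-triangle factors. Writing $s=p-3$ and $t=n-p$ (so that $s+t=n-3$ is held fixed) exhibits $h$ as a function $H(s,t)$ that the Section~\ref{secSymmetry} symmetry renders invariant under $s\leftrightarrow t$. In these coordinates the desired inequality reads $H(s+1,t-1)>H(s,t)$ for $s<t$: moving mass from the larger parameter toward equality strictly increases $H$. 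The quantity $h(n,p+1)-h(n,p)$ is then a hypergeometric sum in the single index $j$, and the goal is to show it is positive exactly when $n+1-2p\ge 0$.

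The crux, and the step I expect to be the main obstacle, is establishing the sign of this difference: the passage $p\mapsto p+1$ simultaneously shifts the summation range, raises the parameter $p-3$ in one Catalan factor, and lowers the parameter $n-p$ in the other, so no term-by-term domination is available. My primary route is creative telescoping via the Wilf--Zeilberger and Zeilberger machinery already used for Table~\ref{table:closed_forms}: apply Zeilberger's algorithm to the summand of $h(n,p+1)-h(n,p)$ to obtain a low-order recurrence in $p$ at fixed $n$, then drive an induction on $p$ whose sign is governed by the factor $(n+1-2p)$ that should emerge from the certificate. As a fallback, should the recurrence prove unwieldy, I would instead collect adjacent $j$-terms after the ballot-number rewrite to express the difference as a manifestly nonnegative sum, or build a direct injection on roadblocked monotonic paths carrying each coalescent history for the position-$p$ cherry into one for position $p+1$ that fails to be surjective whenever $2p\le n+1$. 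In every case, once the single inequality $h(n,p+1)>h(n,p)$ is secured on the left half, the symmetry of Section~\ref{secSymmetry} closes out both parities.
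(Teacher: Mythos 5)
Your top-level reduction is legitimate and genuinely different from the paper's route: the paper's proof of the symmetry $h(n,p)=h(n,n-p+3)$ (Theorem~\ref{symmetry}) uses only the closed forms of the difference functions $D$ and $D^*$ and never invokes Theorem~\ref{thm:maxp}, so there is no circularity in assuming it, and symmetry plus strict increase of $h(n,p)$ on the left half of the range would indeed force the maxima to sit exactly at eqs.~\eqref{maxp_odd} and~\eqref{maxp_even}. The paper itself does not use symmetry here; it defines $D(n,p)=h(n,p)-h(n,p-1)$, derives the closed form~\eqref{diff_ex} by Gosper/WZ telescoping (Appendix~\ref{wzdiffproof}), proves $D(n,4)>0$, $D(n,n)<0$, and that $D$ is strictly decreasing in $p$ (Lemmas~\ref{thm_p_eq_4}--\ref{thm_monotonic}), and then pins the sign change by evaluating $D$ at the center: $D(n,\tfrac{n+4}{2})=0$ for even $n$, and $D(2k+1,k+2)>0$, $D(2k+1,k+3)<0$ for odd $n$.

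The gap is in your crux step, and the specific mechanism you rely on cannot work. You expect the Zeilberger certificate to yield a factor $(n+1-2p)$ ``governing'' the sign of $h(n,p+1)-h(n,p)=D(n,p+1)$. If a linear factor carried the sign, the difference would vanish at $2p=n+1$; but for odd $n=2k+1$ that point is $p=k+1$, where the paper computes $D(2k+1,k+2)=\frac{(2k)!\,(2k-2)!}{(k-1)!\,(k!)^2\,(k+1)!}$, which is strictly positive, not zero. Consistently with this, the actual closed form~\eqref{diff_ex} is a difference of two positive terms: its integer zeros occur only at $2p=n+4$ for even $n$, and at no integer at all for odd $n$, so no single linear (or even polynomial) factor of the summand or certificate controls the threshold you need, and your proposed induction has no pivot. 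Your fallbacks do not fill this hole: a termwise-nonnegative rearrangement is exactly what you concede is unavailable (``no term-by-term domination''), and the roadblocked-path injection is named but not constructed. To complete your plan you would still need essentially the paper's analysis---positivity at $p=4$, negativity at $p=n$, and strict monotonicity of $D$ in $p$, whose sign is governed by the quadratic $4p^2-4np-20p+11n+27$ rather than a linear form---together with the explicit center evaluations; once those are in hand, the symmetry shortcut saves little, since the same closed form~\eqref{diff_ex} is doing all the work.
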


For $n=3$ and $n=4$, the result is trivial, as $n=3$ requires $p=3$, and for $n=4$, $h(n,3)=h(n,4)=3$. For $n\geq 5$, the proof proceeds in three steps.
\begin{enumerate}
		\item First, in Section \ref{secDifferenceFunction}, for $n\geq 5$ and $4 \leq p \leq n$, we describe a \emph{difference function} $D(n,p)$ that measures the change in the function $h(n,p)$ when we increment $p$ by 1 for fixed $n$. 
		\item Next, in Section \ref{secSignDifferenceFunction}, we show that the difference function $D(n,p)$ is positive for $p=4$ (Lemma~\ref{thm_p_eq_4}) and negative for $p=n$ (Lemma~\ref{thm_p_eq_n}), and that it monotonically decreases as the integer $p$ is incremented from 4 to $n$ (Lemma~\ref{thm_monotonic}). 
		\item Finally, in Section \ref{secEvenValues}, we deduce that for fixed $n \geq 5$, if $D(n,p) \neq 0$ for all $p$, $4 \leq p \leq n$, then a unique integer $p$ exists at which $h(n,p)$ is maximal; two maxima exist if $D(n,p)=0$ for some $p$. We confirm that the maxima of $h(n,p)$ are described by eqs.~\eqref{maxp_odd} and \eqref{maxp_even}.
\end{enumerate} 

\begin{table}[tb]
\centering
\begin{tabular}{cllllllllllll}
\toprule
&\multicolumn{10}{c}{$p$}\\\cline{2-11}
$n$ & 3     & 4     & 5     & 6     & 7     & 8     & 9     & 10    & 11    & 12    \\
\midrule
 3  & 1     &       &       &       &       &       &       &       &       &       \\
 4  & 3     & 3     &       &       &       &       &       &       &       &       \\
 5  & 9     & 11    & 9     &       &       &       &       &       &       &       \\
 6  & 28    & 37    & 37    & 28    &       &       &       &       &       &       \\
 7  & 90    & 124   & 134   & 124   & 90    &       &       &       &       &       \\
 8  & 297   & 420   & 473   & 473   & 420   & 297   &       &       &       &       \\
 9  & 1001  & 1441  & 1665  & 1735  & 1665  & 1441  & 1001  &       &       &       \\
 10 & 3432  & 5005  & 5885  & 6291  & 6291  & 5885  & 5005  & 3432  &       &       \\
 11 & 11934 & 17576 & 20930 & 22766 & 23354 & 22766 & 20930 & 17576 & 11934 &       \\
 12 & 41990 & 62322 & 74932 & 82537 & 86149 & 86149 & 82537 & 74932 & 62322 & 41990 \\
\bottomrule
\end{tabular}
\caption{Values of the function $h(n,p)$ (eq.~\eqref{coalescent}) for small values of $n$ and $p$.}
\label{table:values}
\end{table}

\subsection{Difference function}
\label{secDifferenceFunction}

For $n \geq 5$ and $4 \leq p \leq n$, we define the difference function of $h(n,p)$:
\begin{equation}
    \label{diff}
    D(n,p) = h(n,p)-h(n,p-1).
\end{equation}
Because $h(n,p)$ is defined as a sum, $D(n,p)$ is also an expression involving a summation. 

We find a closed-form expression for $D(n,p)$. We start by expanding eq.~\eqref{diff} using eq.~\eqref{coalescent_exp}:
\begin{eqnarray}
    D(n,p) & = & \sum_{k=p-1}^{n-1} \Bigg[ \frac{(k-p+2)^2 (k-p+4) (k+p-3)! ( 2n-p-k-1)!}{(k+1)! (p-3)! (n-k-1)! (n-p+1)!} \nonumber \\
    & & - \frac{(k-p+3)^2 (k-p+5) (k+p-4)! (2 n-p-k)!}{(k+1)! (p-4)! (n-k-1)! (n-p+2)!}\Bigg]\nonumber \\ & & -  \frac{3(2n-2p+2)!(2p-6)!}{(n-p+1)!(n-p+2)!(p-4)!(p-1)!}.\nonumber 
\end{eqnarray}
Notice that because the sums in expressions for $h(n,p)$ and $h(n,p-1)$ have different summation limits, we obtain an additional term outside the sum. The sum in the expression for $D(n,p)$ can be transformed into a closed form, which gives the following formula:
\begin{multline}
\label{diff_ex}
D(n,p) = \frac{2p(np+7p-2p^2-6)(2n-2p+2)! \, (2p-5)!}{n(n-1)(n-p)!  \, (n-p+2)! \, (p-3)! \, p!} - \frac{3(2n-2p+2)!(2p-6)!}{(n-p+1)!(n-p+2)!(p-4)!(p-1)!}.
\end{multline}
The proof appears in Appendix~\ref{wzdiffproof}.

\subsection{Sign of the difference function}
\label{secSignDifferenceFunction}

Using eq.~\eqref{diff_ex} for $D(n,p)$, we prove three lemmas concerning the sign of $D(n,p)$.

\begin{lemma}
\label{thm_p_eq_4}
For $p=4$, the function $D(n,p)$ is positive for all $n\geq5$.
\end{lemma}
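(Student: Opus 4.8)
The plan is to evaluate the closed-form expression \eqref{diff_ex} for $D(n,p)$ at $p=4$ and show the resulting expression in $n$ is strictly positive for all $n \geq 5$. Substituting $p=4$ directly into \eqref{diff_ex} should collapse the two factorial-heavy terms considerably, because several arguments become small constants: $2p-5 = 3$, $2p-6=2$, $p-3 = 1$, $p-4 = 0$, $p! = 24$, and $(p-1)! = 6$, while the $(n-p)!$-type factorials become $(n-4)!$, $(n-3)!$, $(n-2)!$, and $(2n-2p+2)! = (2n-6)!$. After this substitution I expect both terms to share the common factor $(2n-6)!$ together with factorials in $n$, so that $D(n,4)$ can be written as $(2n-6)!$ times a quotient of factorials times a low-degree polynomial in $n$.

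The next step is to factor out the common positive prefactor and reduce the sign question to the positivity of an elementary rational or polynomial expression in $n$. Concretely, I would bring the two terms over a common denominator; the combined denominator is a product of factorials and is manifestly positive, so the sign of $D(n,4)$ is governed entirely by the numerator, which should be a polynomial in $n$ (of low degree) after cancellation. The first term in \eqref{diff_ex} carries the polynomial factor $np + 7p - 2p^2 - 6$, which at $p=4$ becomes $4n + 28 - 32 - 6 = 4n - 10$; this is already positive for $n \geq 5$, which is a good sign that the leading behavior favors positivity.

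The main obstacle, and the step requiring genuine care, is confirming that the \emph{difference} of the two terms — not just the first term — stays positive once the factorials are aligned. The two terms have slightly different factorial structure in the denominator (one has $(n-p)!\,(n-p+2)!$ with $p!$, the other has $(n-p+1)!\,(n-p+2)!$ with $(p-1)!$), so combining them at $p=4$ forces me to track factors such as $n-3$ and the constants $24$ versus $6$ correctly when clearing denominators. Once everything is expressed over a single positive denominator, I expect the numerator to reduce to a polynomial in $n$ whose coefficients are all positive (or which is transparently positive for $n \geq 5$), at which point positivity is immediate. The plan, therefore, is: substitute $p=4$, simplify each term using the small-constant collapses, combine over a common positive denominator, and verify that the resulting numerator polynomial is positive for every integer $n \geq 5$.
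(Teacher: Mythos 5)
Your plan is correct and is essentially the paper's own proof: the paper likewise substitutes $p=4$ into eq.~\eqref{diff_ex}, obtaining $D(n,4) = \frac{4(2n-5)!}{n!(n-4)!} - \frac{(2n-6)!}{(n-2)!(n-3)!}$, and combines the two terms into the single expression $\frac{2(7n-15)(2n-7)!}{n!(n-5)!}$, all of whose factors are positive for $n\geq 5$. For the record, the numerator your common-denominator step produces is $2(7n-15)(n-4)$ --- not a polynomial with all positive coefficients, but, as you anticipated as the alternative, one that is transparently positive for every integer $n \geq 5$.
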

\begin{proof}
When we substitute $p=4$ into eq.~\eqref{diff_ex}, we obtain 
\[
D(n,4) = \frac{4(2n-5)!}{n!(n-4)!} - \frac{(2n-6)!}{(n-2)!(n-3)!} =\frac{2(7n-15)(2n-7)!}{n!(n-5)!}.
\]
Because $n\geq5$, all terms in this fraction are positive.
\end{proof}

\begin{lemma}
\label{thm_p_eq_n}
For $p=n$, the function $D(n,p)$ is negative for all $n\geq 5$.
\end{lemma}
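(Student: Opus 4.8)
The plan is to substitute $p = n$ directly into the closed-form expression for $D(n,p)$ in eq.~\eqref{diff_ex}, exactly paralleling the proof of Lemma~\ref{thm_p_eq_4}, and then to collapse the result into a single fraction whose sign can be read off immediately. The only substantive work is factorial bookkeeping, so I would organize it to keep the cancellations transparent.

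First I would record the dramatic simplifications that occur at $p = n$: the arguments satisfy $2n-2p+2 = 2$, $n-p = 0$, $n-p+1 = 1$, and $n-p+2 = 2$, so $(2n-2p+2)! = 2$, $(n-p)! = 1$, $(n-p+1)! = 1$, and $(n-p+2)! = 2$. In the polynomial prefactor of the first term, $np + 7p - 2p^2 - 6$ becomes $-n^2 + 7n - 6 = -(n-1)(n-6)$, whose $n-1$ factor cancels against the $n(n-1)$ in the denominator. After these cancellations the two terms of eq.~\eqref{diff_ex} reduce to
\begin{equation*}
D(n,n) = -\frac{2(n-6)(2n-5)!}{(n-3)!\,n!} - \frac{3(2n-6)!}{(n-4)!\,(n-1)!}.
\end{equation*}

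Next I would combine over a common denominator. Writing $(2n-5)! = (2n-5)(2n-6)!$ and using $(n-3)!\,n! = n(n-3)(n-4)!\,(n-1)!$, I can factor out $(2n-6)!/[\,n(n-3)(n-4)!\,(n-1)!\,]$, leaving the bracketed numerator $-2(n-6)(2n-5) - 3n(n-3)$. Expanding gives the quadratic $-7n^2 + 43n - 60 = -(n-4)(7n-15)$, and cancelling the factor $n-4$ against $(n-4)!$ produces the clean form
\begin{equation*}
D(n,n) = -\frac{(7n-15)(2n-6)!}{n(n-3)(n-5)!\,(n-1)!}.
\end{equation*}
For $n \geq 5$ every factor here is positive, so $D(n,n) < 0$, which is the claim.

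The main obstacle is purely the algebraic step of merging the two fractions and factoring the resulting numerator correctly; a sign slip in $(n-6)(2n-5)$ is easy to make. The payoff is that the discriminant $43^2 - 4\cdot 7\cdot 60 = 169$ is a perfect square, so the factorization $(n-4)(7n-15)$ is exact with both roots below $5$, and the sign argument closes with no case analysis. As a sanity check I would confirm $D(5,5) = -2$, $D(6,6) = -9$, and $D(7,7) = -34$ against Table~\ref{table:values}.
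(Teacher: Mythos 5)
Your proof is correct, and it starts exactly where the paper's does: substituting $p=n$ into eq.~\eqref{diff_ex} and simplifying the factorials gives
\begin{equation*}
D(n,n) = -\frac{2(n-6)(2n-5)!}{n!\,(n-3)!} - \frac{3(2n-6)!}{(n-4)!\,(n-1)!},
\end{equation*}
which is precisely the paper's intermediate expression. Where you diverge is in how negativity is concluded. The paper notes that for $n \geq 7$ both terms are negative (the sign of the first being governed by $-(n-6)$) and then checks $n=5$ and $n=6$ by direct evaluation ($D(5,5)=-2$, $D(6,6)=-9$); the case split is unavoidable in that argument, since at $n=5$ the first term is positive and at $n=6$ it vanishes. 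You instead merge the two fractions over the common denominator $n(n-3)(n-4)!\,(n-1)!$, factor the resulting numerator $-7n^2+43n-60$ as $-(n-4)(7n-15)$, and cancel $n-4$ into $(n-4)!$ to reach
\begin{equation*}
D(n,n) = -\frac{(7n-15)(2n-6)!}{n(n-3)(n-5)!\,(n-1)!},
\end{equation*}
in which every factor is visibly positive for all $n\geq 5$, so no case analysis is needed. I verified the algebra (the expansion $-2(n-6)(2n-5)-3n(n-3)=-7n^2+43n-60$, the factorization, and the checks $D(5,5)=-2$, $D(6,6)=-9$, $D(7,7)=-34$, which agree with Table~\ref{table:values}). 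The trade-off is clear: your route costs a bit more bookkeeping but yields a single closed form whose sign is uniform from $n=5$ onward, while the paper's route minimizes algebra at the price of two boundary cases handled separately.
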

\begin{proof}
Substituting $p=n$ into eq.~\eqref{diff_ex}, we obtain
\[
D(n,n) = -\frac{2(n-6)(2n-5)!}{n!(n-3)!} - \frac{3(2n-6)!}{(n-4)!(n-1)!}.
\]
For $n\geq 7$, $D(n,n)$ is quickly seen to be a sum of two negative numbers. It remains to check the cases of $n=5$ and $n=6$: $D(5,5) = -2$ and $D(6,6)=-9$.
\end{proof}

\begin{lemma}
\label{thm_monotonic} $D'(n,p) = D(n,p)-D(n,p-1)$ is  negative for $n\geq 5$ and $5 \leq p \leq n$. That is, for each $n \geq 5$, $D(n,p)$ monotonically decreases as the integer $p$ is incremented from $p=4$ to $p=n$.
\end{lemma}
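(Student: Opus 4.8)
The plan is to work directly from the closed form for $D(n,p)$ in eq.~\eqref{diff_ex} and reduce the sign of $D'(n,p)=D(n,p)-D(n,p-1)$ to the sign of an explicit polynomial on the region $\{n\ge 5,\ 5\le p\le n\}$. The first step is a simplification of eq.~\eqref{diff_ex} itself. Using $2p/p!=2/(p-1)!$, $(2p-5)!=(2p-5)(2p-6)!$, $(p-3)!=(p-3)(p-4)!$, and $(n-p+1)!=(n-p+1)(n-p)!$, both terms of $D(n,p)$ can be pulled over the common, manifestly positive factorial factor
\[
F(n,p)=\frac{(2n-2p+2)!\,(2p-6)!}{(n-p)!\,(n-p+2)!\,(p-4)!\,(p-1)!}>0,
\]
yielding the factorization $D(n,p)=F(n,p)\,G(n,p)$ with
\[
G(n,p)=\frac{2(np+7p-2p^2-6)(2p-5)}{n(n-1)(p-3)}-\frac{3}{n-p+1}.
\]
Since $F(n,p)>0$ throughout the region, this isolates the qualitative behavior of $D$ into the single rational function $G$, greatly reducing the number of factorial terms that must be tracked.

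Next I would compute the ratio of the factorial prefactors, which telescopes into a product of linear factors,
\[
\frac{F(n,p)}{F(n,p-1)}=\frac{(p-3)(2p-7)(n-p+1)(n-p+3)}{(p-4)(p-1)(n-p+2)(2n-2p+3)},
\]
every factor of which is strictly positive for $5\le p\le n$. Writing
\[
D'(n,p)=F(n,p-1)\!\left[\frac{F(n,p)}{F(n,p-1)}\,G(n,p)-G(n,p-1)\right]
\]
and factoring out the positive quantity $F(n,p-1)$, the claim $D'(n,p)<0$ becomes the claim that the bracketed rational expression is negative. Clearing its (positive) common denominator then reduces the lemma to verifying that a single polynomial $P(n,p)$ in the two variables $n$ and $p$ is negative on the region.

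The decisive step is certifying the sign of $P(n,p)$. Here I would substitute $p=q+5$ and $n=p+s=q+s+5$ with integers $q,s\ge 0$, a change of variables that maps the region $\{n\ge 5,\ 5\le p\le n\}$ exactly onto the nonnegative quadrant $\{q\ge 0,\ s\ge 0\}$, and then expand $P$ in $q$ and $s$. If every coefficient of the shifted polynomial carries the same (negative) sign, the sign of $P$ is manifest and the lemma follows at once. I expect the main obstacle to be precisely this final certification: the degrees are modest, but the expressions are bulky, so the bookkeeping in collecting $P$ over its common factorial denominator and checking that the shifted polynomial is coefficient-definite is most naturally carried out with a symbolic computation, in the spirit of the Wilf--Zeilberger certificates already used for eq.~\eqref{eq_h3} and eq.~\eqref{diff_ex}, with the detailed algebra relegated to an appendix. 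Should the shifted polynomial fail to be coefficient-definite, the fallback is to partition its monomials into a few manifestly signed blocks, or to dominate the finitely many positive monomials by neighboring negative ones using $q,s\ge 0$.
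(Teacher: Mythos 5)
Your proposal is correct, and it follows essentially the same route as the paper: extract a manifestly positive factorial factor so that the sign of $D'(n,p)$ is governed by a two-variable polynomial, then certify that polynomial's negativity on $\{n\ge 5,\ 5\le p\le n\}$. Your factorization $D=FG$ and the ratio $F(n,p)/F(n,p-1)$ are both correct, and clearing denominators in your bracket produces, up to positive linear factors such as $n(n-1)(p-1)$, exactly the quadratic the paper isolates, $f(n,p)=4p^2-4np-20p+11n+27$. The only real difference is the final step: the paper certifies $f<0$ by solving the quadratic inequality in $p$, obtaining the roots $\frac{n+5}{2}\pm\frac{1}{2}\sqrt{n^2-n-2}$ and noting that the lower root stays below $3$ while the upper root exceeds $n$, so that all of $[5,n]$ lies between them; you instead shift $p=q+5$, $n=q+s+5$ and check coefficient-definiteness. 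Your certificate does succeed, and the fallback you anticipate is unnecessary: under your shift $f$ becomes $-4qs-9q-9s-18$, all of whose coefficients are negative, while the residual positive linear factors shift to polynomials with positive coefficients, so the cleared numerator $P$ is coefficient-wise negative with nonzero constant term, hence strictly negative on the quadrant. The paper's root computation is more explicit; your shift argument is more mechanical and would extend unchanged to sign-determining polynomials of higher degree in $p$.
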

\begin{proof}
The expression for $D'(n,p)$ can be simplified to
\begin{equation*}
    D'(n,p) = \frac{2 (4p^2-4np-20p+11n+27) (2n-2p+2)!(2p-8)!}{(n-p+1)!(n-p+3)!(p-4)!(p-2)!}.
\end{equation*}
Because $5 \leq p \leq n$, the term that determines the sign of $D'(n,p)$ is the polynomial $f(n,p) = 4p^2 - 4np - 20p  + 11n + 27$ in the numerator. Solving the inequality $f(n,p) < 0$ for $p$, we obtain 
\[
\frac{n+5}{2} - \frac{1}{2}\sqrt{n^2-n-2} < p <  \frac{n+5}{2} + \frac{1}{2}\sqrt{n^2-n-2}.
\]
The left-hand term is bounded above by 3 for all $n \geq 5$, and the right-hand term exceeds $n$ for all $n\geq 5$. Hence, because $5 \leq p \leq n$, all possible values of $(n,p)$ satisfy the inequality.
\end{proof}

\subsection{Location of the maximum}
\label{secEvenValues}

As a result of Lemmas \ref{thm_p_eq_4}-\ref{thm_monotonic}, for $n\geq 5$, as $p$ is incremented from 4 to $n$, $D(n,p)$ monotonically decreases (Lemma \ref{thm_monotonic}) from a positive value at $p=4$ (Lemma \ref{thm_p_eq_4}) to a negative value at $p=n$ (Lemma \ref{thm_p_eq_n}). Hence, $h(n,p)$ increases from $p=3$ to a maximum then decreases until $p=n$. 

Two cases are possible. Given $n$, a unique value $p=p_{m1}$ could exist at which $D(n,p)=0$, in which case $h(n,p_{m1})=h(n,p_{m1}-1)$, and both $p_{m1}$ and $p_{m2}=p_{m1}-1$ are maxima. Alternatively, if $D(n,p) \neq 0$ for all $p$, then $h(n,p)$ is maximized at the largest value of $p$ for which $D(n,p)>0$.

If $n \geq 6$ is even, then inserting $p=\frac{n+4}{2}$ into eq.~\eqref{diff_ex}, we obtain $D(n,\frac{n+4}{2})=0$. Hence, $h(n,\frac{n+4}{2}) = h(n,\frac{n+2}{2})$, and maxima of $h(n,p)$ occur at both $p_{m1}=\frac{n+4}{2}$ and $p_{m2}=\frac{n+2}{2}$.

If $n \geq 5$ is odd, we show that a value $p_m \geq 4$ exists for which $D(n,p_m)>0$ and $D(n,p_m+1)<0$. This value $p_m$ maximizes $D(n,p)$.

\begin{lemma}
\label{thm_odd_n}
For odd $n \geq 5$, $n=2k+1$ and $k\geq 2$, (i) $D(n,k+2) > 0$, and (ii) $D(n,k+3)<0$.
\end{lemma}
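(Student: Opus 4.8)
The plan is to evaluate the closed-form difference function $D(n,p)$ from eq.~\eqref{diff_ex} at the two specific arguments $p=k+2$ and $p=k+3$, after substituting $n=2k+1$, and then determine the sign of each resulting expression. Since eq.~\eqref{diff_ex} writes $D(n,p)$ as a difference of two explicit products of factorials, the substitution $n=2k+1$ will collapse many of the factorial arguments into simple expressions in $k$, so each of parts (i) and () reduces to checking the sign of a single algebraic quantity.

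For part (i), I would substitute $n=2k+1$ and $p=k+2$ into both terms of eq.~\eqref{diff_ex}. The key simplifications to track are $2n-2p+2 = 2k$, $n-p = k-1$, $n-p+1 = k$, $n-p+2 = k+1$, $2p-5 = 2k-1$, $2p-6 = 2k-2$, $p-3 = k-1$, $p-4 = k-2$, and $p-1 = k+1$, along with $np+7p-2p^2-6$, which upon substitution becomes a polynomial in $k$ that I expect to be $-k^2+\cdots$ but must compute. After pulling out the common factorial factors, both terms will share a factor of $(2k)!$ and a large common denominator, so the sign of $D(2k+1,k+2)$ will be governed by the sign of a low-degree polynomial in $k$; I would show this polynomial is positive for all $k \geq 2$, either by factoring it or by checking it is an increasing positive function on that range.

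For part (ii), the same substitution with $p=k+3$ gives $2n-2p+2 = 2k-2$, $n-p = k-2$, $n-p+1 = k-1$, $n-p+2 = k$, $2p-5 = 2k+1$, $2p-6 = 2k$, $p-3 = k$, $p-4 = k-1$, and $p-1 = k+2$. Again both terms of eq.~\eqref{diff_ex} will share a common factorial factor, reducing the sign question to a polynomial inequality in $k$, and I would show the governing polynomial is negative for all $k \geq 2$. The two computations are structurally identical, differing only in the numerical shifts.

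The main obstacle will be the bookkeeping in combining the two terms of eq.~\eqref{diff_ex} over a common denominator without arithmetic slips: the factorial arguments differ by small integer shifts between the two terms (for instance $(2p-5)!$ versus $(2p-6)!$ and $p!$ versus $(p-1)!$), so before subtracting I must rewrite everything with matching factorials by extracting the appropriate linear factors, e.g. $(2p-5)! = (2p-5)(2p-6)!$ and $p! = p(p-1)!$. Once a common denominator is reached, the numerator is a single polynomial whose sign I can analyze; the residual difficulty is purely verifying a fixed-degree polynomial inequality in $k$ on the range $k \geq 2$, which is routine once the polynomial is in hand. I would also separately confirm the boundary case $k=2$ (i.e.\ $n=5$) against Table~\ref{table:values}, where $h(5,5)=9 < h(5,4)=11$ and $h(5,4)>h(5,3)$, as a consistency check that $p_m=k+2=4$ indeed maximizes $h(5,p)$.
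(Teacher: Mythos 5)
Your proposal is correct and is essentially the paper's own proof: the paper likewise substitutes $(n,p)=(2k+1,k+2)$ and $(2k+1,k+3)$ into eq.~\eqref{diff_ex} and simplifies, obtaining the manifestly positive quantity $\frac{(2k)!\,(2k-2)!}{(k-1)!\,(k!)^2\,(k+1)!}$ for (i) and the manifestly negative quantity $-\frac{2\,(2k)!\,(2k-3)!}{(k-2)!\,(k!)^2\,(k+1)!}$ for (ii). Your listed factorial substitutions are all accurate, and the bookkeeping you anticipate indeed reduces the sign to a trivial polynomial check (the bracketed factor works out to $k(k+1)>0$ in case (i) and $-(k+2)<0$ in case (ii)), so carrying out your plan reproduces the paper's closed forms.
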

\begin{proof}
(i) We insert $(n,p)=(2k+1,k+2)$ into eq.~\eqref{diff_ex}, obtaining the positive quantity
\begin{equation*}
D(2k+1,k+2)=\frac{(2k)!\,(2k-2)!}{(k-1)! \, (k!)^2 \, (k+1)!}.
\end{equation*}
(ii) Inserting $(n,p)=(2k+1,k+3)$ into eq.~\eqref{diff_ex}, we obtain
\begin{equation*}
    D(2k+1,k+3) = -\frac{2\, (2k)! \, (2k-3)!}{(k-2)! \, (k!)^2 \, (k+1)!}.
\end{equation*}
a quantity that is negative.
\end{proof}

We conclude $h(2k+1,k+2) > h(2k+1,k+1)$, but $h(2k+1,k+3) < h(2k+1,k+2)$. Hence, for odd $n \geq 5$,  writing $k=\frac{n-1}{2}$, $p_m=\frac{n+3}{2}$ maximizes $h(n,p)$. The proof of Theorem \ref{thm:maxp} is complete.

\subsection{Asymptotic growth of the maximal number of coalescent histories}
\label{secMaximal}

With the value $p_m$ that maximizes $h(n,p)$ established, we now examine the asymptotic growth of the maximum. We quickly verify that for a fixed caterpillar species tree with $n$ leaves, across all $p$-pseudocaterpillar gene trees with fixed $n$, the maximal number of coalescent histories grows faster than the Catalan number $\mathcal{C}_{n-1}$ describing the number of coalescent histories for the matching caterpillar. In Section~\ref{sect:asymptotic}, we showed that for fixed small $p \geq 4$, as $n$ increases, the number of coalescent histories grows with a constant multiple of $\mathcal{C}_{n-1}$, with the constant exceeding 1. Here we show that for each $n \geq 7$, the maximal number of coalescent histories, that is, $h(n,p_{m})$ for odd $n$ and $h(n,p_{m1})=h(n,p_{m2})$ for even $n$, exceeds the corresponding Catalan number. 

\begin{proposition}
\label{thm:h_geq_catalan}
	For odd $n \geq 7$, $h(n,p_{m}) > \mathcal{C}_{n-1}$, and for even $n \geq 8$, $h(n,p_{m1}) = h(n,p_{m2}) > \mathcal{C}_{n-1}$, where $h$ is defined by eq.~\eqref{coalescent}, $p_{m}$ by eq.~\eqref{maxp_odd}, $p_{m1}$ and $p_{m2}$ by eq.~\eqref{maxp_even}, and $\mathcal{C}_{n-1}$ by eq.~\eqref{catalan}.
\end{proposition}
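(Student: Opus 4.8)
The plan is to compute the maximal value $h(n,p_m)$ in closed form and compare it directly to $\mathcal{C}_{n-1}$. The natural route exploits the machinery already built for Theorem~\ref{thm:maxp}: since we know exactly where the maximum sits (at $p_m=\frac{n+3}{2}$ for odd $n$ and at the adjacent pair $\frac{n+2}{2},\frac{n+4}{2}$ for even $n$), I would substitute these specific values of $p$ into the closed-form summand of eq.~\eqref{coalescent_exp}. For odd $n=2k+1$ with $p_m=k+2$, and for even $n=2k$ with $p_{m1}=k+1$, the binomial products in $C(k',p-3)$ and $C(n-p,n-k'-1)$ simplify considerably because $p-3$ and $n-p$ become close to $k$, so each term of the sum should reduce to a clean ratio of factorials. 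I expect either a telescoping identity or a single Wilf--Zeilberger certificate (as used in Proposition~\ref{wz1} and Appendix~\ref{wz_table}) to collapse the sum $h(n,p_m)$ into a product of a rational function of $n$ and $\mathcal{C}_{n-1}$, exactly paralleling the entries of Table~\ref{table:closed_forms}.

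Once $h(n,p_m)$ is written as $\rho(n)\,\mathcal{C}_{n-1}$ for an explicit rational function $\rho$, the inequality $h(n,p_m)>\mathcal{C}_{n-1}$ reduces to showing $\rho(n)>1$ for the stated ranges of $n$. The cleanest framing is to form the ratio $h(n,p_m)/\mathcal{C}_{n-1}$ and reduce the claim to a polynomial inequality in $n$, which can be dispatched by factoring or by bounding each factor for $n\geq 7$ (odd) and $n\geq 8$ (even). I would handle the two parities separately, since the closed forms differ: the odd case uses the single maximizer, while the even case can invoke $h(n,p_{m1})=h(n,p_{m2})$ from Theorem~\ref{thm:maxp} and then evaluate only one of them. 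The few small boundary values $n=7$ and $n=8$ can be checked directly against Table~\ref{table:values}, where one reads off $h(7,5)=134>132=\mathcal{C}_6$ and $h(8,5)=h(8,6)=473>429=\mathcal{C}_7$, confirming the base cases.

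An alternative, and perhaps more transparent, approach avoids fully summing the series: since the asymptotic constants $\beta_p$ in Table~\ref{table:comparison} exceed $1$ for every $p\geq 4$ and are increasing in $p$, and since $p_m$ grows linearly in $n$, one expects $h(n,p_m)/\mathcal{C}_{n-1}$ to grow without bound rather than merely exceed $1$. But turning the asymptotic statements of Section~\ref{sect:asymptotic} into a \emph{uniform} bound valid for all $n\geq 7$ is not immediate, because those results fix $p$ and let $n\to\infty$, whereas here $p$ moves with $n$. For that reason the exact-summation route is safer for a clean finite-$n$ proof.

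The main obstacle I anticipate is the summation step: unlike the fixed-$p$ computations of Proposition~\ref{wz1}, here both the summation range (from $k=p-1$ to $n-1$, i.e.\ from $k+1$ to $2k$) and the summand depend on $n$ through $p=p_m(n)$, so a single WZ certificate may not apply uniformly and one may instead need a hypergeometric evaluation (e.g.\ a Gauss or Saalsch\"utz-type identity) or a direct reindexing that recognizes the sum as a known convolution of Catalan-triangle entries. If a clean closed form for $h(n,p_m)$ resists these methods, a fallback is to bound the sum below by a single well-chosen central term, or by a short block of central terms, whose ratio to $\mathcal{C}_{n-1}$ is provably larger than $1$ for the stated $n$; this sidesteps exact evaluation at the cost of a slightly less elegant argument but still yields the strict inequality.
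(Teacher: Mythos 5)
Your base cases ($h(7,5)=134>132=\mathcal{C}_6$ and $h(8,5)=h(8,6)=473>429=\mathcal{C}_7$, read off Table~\ref{table:values}) match the paper's, but the core of your plan has a genuine gap. The primary route---collapsing $h(n,p_m)$ by a WZ certificate into $\rho(n)\,\mathcal{C}_{n-1}$ for a rational function $\rho$---is very unlikely to succeed, and the obstacle you flag yourself is fatal rather than technical: on the diagonal $p=p_m(n)$, both the summand and the limits of eq.~\eqref{coalescent} move with $n$, so Zeilberger-type methods return at best a recurrence, not a product form. Moreover, the paper's own data argue against the existence of such a $\rho$: since $h(n,p)$ increases in $p$ up to $p_m$ (Theorem~\ref{thm:maxp}), the ratio $h(n,p_m)/\mathcal{C}_{n-1}$ eventually dominates every fixed-$p$ limit $\beta_p$ of Table~\ref{table:closed_forms}, and those constants keep growing with $p$, while the actual values ($\approx 1.02,\,1.21,\,1.39,\,1.55$ at $n=7,9,11,13$) increase with visibly shrinking increments; no rational function of $n$ exhibits this unbounded-but-sublinear behavior. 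Your fallback is also unsound as stated: the largest single term of the sum at $p=p_m$ is well below $\mathcal{C}_{n-1}$ (for $n=9$ it is $675$ against $\mathcal{C}_8=1430$; for $n=11$ it is $7425$ against $\mathcal{C}_{10}=16796$), so no ``single well-chosen central term'' can work; a block of at least three central terms is needed, and proving that uniformly in $n$ requires binomial estimates you do not supply---note that at $n=7$ the \emph{entire} sum exceeds $\mathcal{C}_6$ by only $2$, so there is no slack for discarding terms near the boundary of the claimed range.

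The repair uses an ingredient you already invoke, and it is exactly the paper's argument. Theorem~\ref{thm:maxp} (via the sign analysis of $D(n,p)$ in Lemmas~\ref{thm_p_eq_4}--\ref{thm_monotonic}) gives $h(n,p_m)>h(n,4)$ for odd $n\geq 9$, and $h(n,p_{m1})=h(n,p_{m2})>h(n,4)$ for even $n\geq 10$. So it suffices to prove the inequality at the \emph{fixed} value $p=4$, where your ``closed form, then polynomial inequality'' idea works perfectly: by Table~\ref{table:closed_forms},
\begin{equation*}
\frac{h(n,4)}{\mathcal{C}_{n-1}}=\frac{(19n-40)(n-3)}{4(2n-3)(2n-5)}>1
\quad\Longleftrightarrow\quad
3\left(n^2-11n+20\right)>0,
\end{equation*}
which holds for all integers $n\geq 9$ (the roots are $\approx 2.3$ and $\approx 8.7$). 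The paper reaches the same reduction but cites Lemma 4.2 of \cite{rosenberg2010coalescent} for $h(n,4)>\mathcal{C}_{n-1}$, $n\geq 9$, instead of its own Table~\ref{table:closed_forms}; either source, combined with maximality and your two base cases, completes the proof. In short: evaluate the closed form at fixed small $p$ and let monotonicity in $p$ carry it to $p_m$, rather than attempting a closed form at the moving maximum.
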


\begin{proof}
For $n=7$, we have $h(n,p_m)=h(7,5)=134$, which exceeds $C_6=132$. For $n=8$, $h(n,p_{m1})=h(n,p_{m2})=h(8,5)=h(8,6)=473$, which exceeds $C_7=429$.

Lemma 4.2 of \cite{rosenberg2010coalescent} showed that for $n \geq 9$, $h(n,4) > \mathcal{C}_{n-1}$. By Theorem \ref{thm:maxp}, for odd $n \geq 9$, $h(n,p_m) > h(n,4)$, and for even $n \geq 10$, $h(n,p_{m1}) = h(n,p_{m2}) > h(n,4)$. Thus, because the maximal number of coalescent histories across all $p$ exceeds the number for $p=4$, and because the number of coalescent histories for $p=4$ exceeds $\mathcal{C}_{n-1}$, the maximum exceeds $\mathcal{C}_{n-1}$.
\end{proof}

For short, we abbreviate $p_m=p_{m1}=p_{m2}$ for even $n$, so that the sequence of values of $h(n,p_m)$ is well-defined for $n \geq 4$. We introduce the definition of the exponential order of the sequence: a sequence $\{a_n\}$ has exponential order $k$ if
$\limsup_{n\to \infty} \, \sqrt[n]{a_n} = k$ \citep{flajolet_sedgewick}. In other words, $a_n = k^ns(n)$ where $s(n)$ is a subexponential factor with $\limsup_{n\to \infty} \sqrt[n]{s(n)} = 1$. If sequences $a_n$ and $b_n$ have the same exponential order, we write $a_n \bowtie b_n$.

The Catalan numbers $\mathcal{C}_n$ have exponential order $4$, as Stirling's approximation to $\mathcal{C}_n = (2n)!/[(n+1)(n!)^2]$ gives $\mathcal{C}_n \approx 4^n /(n^{3/2} \sqrt \pi)$. Plotting $\log h(n,p_m)$ and $\log \mathcal{C}_{n-1}$ as functions of $n$, we see that they grow approximately linearly with similar slopes (Figure~\ref{fig:hmax_plot}). We therefore claim that the sequence $h(n,p_m)$ also has exponential order $4$.
\begin{proposition}
\label{prop:hmax_order}
For caterpillar species trees with $n$ leaves, the sequence $h(n,p_m)$ describing the maximal number of coalescent histories across all $p$-pseudocaterpillar gene trees of size $n$ has exponential order 4, so that $h(n,p_m) \bowtie \mathcal{C}_n$.
\end{proposition}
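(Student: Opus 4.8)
The plan is to sandwich $h(n,p_m)$ between two sequences of exponential order $4$ and then squeeze. The lower bound is immediate: by Proposition~\ref{thm:h_geq_catalan}, for all $n\ge 7$ (with the abbreviation $p_m=p_{m1}=p_{m2}$ in the even case) we have $h(n,p_m)>\mathcal{C}_{n-1}$. Shifting the index by one does not change the exponential order, so $\sqrt[n]{\mathcal{C}_{n-1}}\to 4$, and therefore $\liminf_{n\to\infty}\sqrt[n]{h(n,p_m)}\ge 4$. The content of the proposition is thus entirely in the matching upper bound $\limsup_{n\to\infty}\sqrt[n]{h(n,p_m)}\le 4$, which I expect to be the crux.

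For the upper bound I would bound $h(n,p)$ \emph{uniformly in $p$} directly from the decomposition in eq.~\eqref{coalescent}, $h(n,p)=\sum_{k=p-1}^{n-1}\ell_k r_k u_k$. The starting observation is that each Catalan-triangle entry is at most the corresponding \emph{unrestricted} monotonic-path count, $C(a,b)\le\binom{a+b}{b}$. Applying this to $\ell_k=C(k,p-3)$ and $u_k=C(n-p,n-k-1)$, the two associated path lengths are $k+p-3$ and $2n-p-k-1$, which sum to $2n-4$ independently of both $k$ and $p$. Invoking the elementary inequality $\binom{a}{c}\binom{b}{d}\le\binom{a+b}{c+d}$ (a single nonnegative term of Vandermonde's convolution) together with the maximality of the central binomial coefficient then yields $\ell_k u_k\le\binom{2n-4}{n-2}$ for every admissible $k$. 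Since $r_k=k-p+2\le n$ and the sum in eq.~\eqref{coalescent} has at most $n$ terms, I obtain $h(n,p)\le n^2\binom{2n-4}{n-2}$ for \emph{every} $p$, and in particular for $p=p_m$. Because $\binom{2n-4}{n-2}\sim 4^{n-2}/\sqrt{\pi(n-2)}$ has exponential order $4$ and the prefactor $n^2$ is subexponential, this gives $\limsup_{n\to\infty}\sqrt[n]{h(n,p_m)}\le 4$. Combining the two bounds forces the limit to equal $4$, establishing exponential order $4$ and hence $h(n,p_m)\bowtie\mathcal{C}_n$.

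The main obstacle is obtaining an upper bound of the \emph{correct} exponential order that is valid uniformly in $p$. One cannot simply quote a fixed constant multiple of $\mathcal{C}_{n-1}$, because the constants $\beta_p$ grow with $p$ (Table~\ref{table:comparison}) while $p_m\sim n/2$ grows with $n$; any $p$-independent constant would therefore have to be replaced by something depending on $n$, potentially inflating the exponential order. The structural fact that rescues the argument is that the two ``large'' Catalan factors $\ell_k$ and $u_k$ correspond to lattice paths whose total length is exactly $2n-4$ no matter where the pivotal coalescence sits or what $p$ is; this is what allows the binomial-product inequality to collapse the entire summand—apart from the linear factor $r_k$ and the linearly-many terms—into a single central binomial coefficient, sidestepping the need to track the growing constant altogether. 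I do not expect any delicate estimation beyond this collapse; the remaining steps are routine bookkeeping with polynomial factors that vanish under the $n$th root.
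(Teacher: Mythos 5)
Your proof is correct, but your upper bound follows a genuinely different route from the paper's, so a comparison is worthwhile. Both arguments use the same squeeze strategy, and the lower bounds are essentially interchangeable: you cite Proposition~\ref{thm:h_geq_catalan} to get $h(n,p_m)>\mathcal{C}_{n-1}$ for $n\geq 7$, while the paper (Lemma~\ref{lemma:h_lower_bound}) uses $h(n,p_m)\geq h(n,3)\geq \mathcal{C}_{n-2}$ via the closed form for $h(n,3)$; either gives $\liminf_n\sqrt[n]{h(n,p_m)}\geq 4$. The crux is indeed the upper bound, and here the two proofs diverge. The paper proves the sharper estimate $h(n,p_m)\leq n\,\mathcal{C}_{n+2}$ (Lemma~\ref{lemma:h_upper_bound}): it first decomposes $\mathcal{C}_{n+2}$ itself as a sum of products of Catalan-triangle entries indexed by where a monotonic path crosses a middle horizontal line (Lemma~\ref{lemma:decomposition}), then substitutes the explicit value of $p_m$ from Theorem~\ref{thm:maxp}, splits into parity cases, and compares the two sums termwise, using monotonicity of $C(\cdot,\cdot)$ in both arguments. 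Your argument instead bounds each restricted path count by the unrestricted one, $C(a,b)\leq\binom{a+b}{b}$, and exploits the structural fact that the path lengths of $\ell_k$ and $u_k$ always total $2n-4$; the Vandermonde term inequality $\binom{a}{c}\binom{b}{d}\leq\binom{a+b}{c+d}$ and central-binomial maximality then give $h(n,p)\leq n^2\binom{2n-4}{n-2}$ \emph{uniformly in $p$}, with no appeal to the location of the maximum, no parity split, and no auxiliary decomposition lemma. What each approach buys: the paper's termwise comparison yields a tighter bound (a single factor of $n$ times a nearby Catalan number) and exposes a structural kinship between the sum in eq.~\eqref{coalescent} and the Catalan decomposition, which has independent interest; yours is shorter, entirely elementary, and more robust---it would survive unchanged even if the formula for $p_m$ were unknown---at the cost of a cruder polynomial prefactor, which is immaterial for exponential order.
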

\begin{proof}
By Lemmas \ref{lemma:h_lower_bound} and \ref{lemma:h_upper_bound} in Appendix C, $\mathcal{C}_{n-2} \leq h(n,p_m) \leq n\,\mathcal{C}_{n+2}$ for $n \geq 3$, from which
\begin{equation*}
    \lim_{n\to\infty} \sqrt[n]{\mathcal{C}_{n-2}} \leq \lim_{n\to\infty} \sqrt[n]{h(n,p_m)} \leq \lim_{n\to\infty} \sqrt[n]{n\,\mathcal{C}_{n+2}}.
\end{equation*}
As the left-hand and right-hand limits both equal 4, we conclude $\lim_{n\to\infty} \sqrt[n]{h(n,p_m)} = 4$, $h(n,p_m) = 4^n s(n)$ for some subexponential $s(n)$, and $h(n,p_m) \bowtie \mathcal{C}_n$.
\end{proof}

\begin{figure}[tb]\centering
\includegraphics[width=0.5\textwidth]{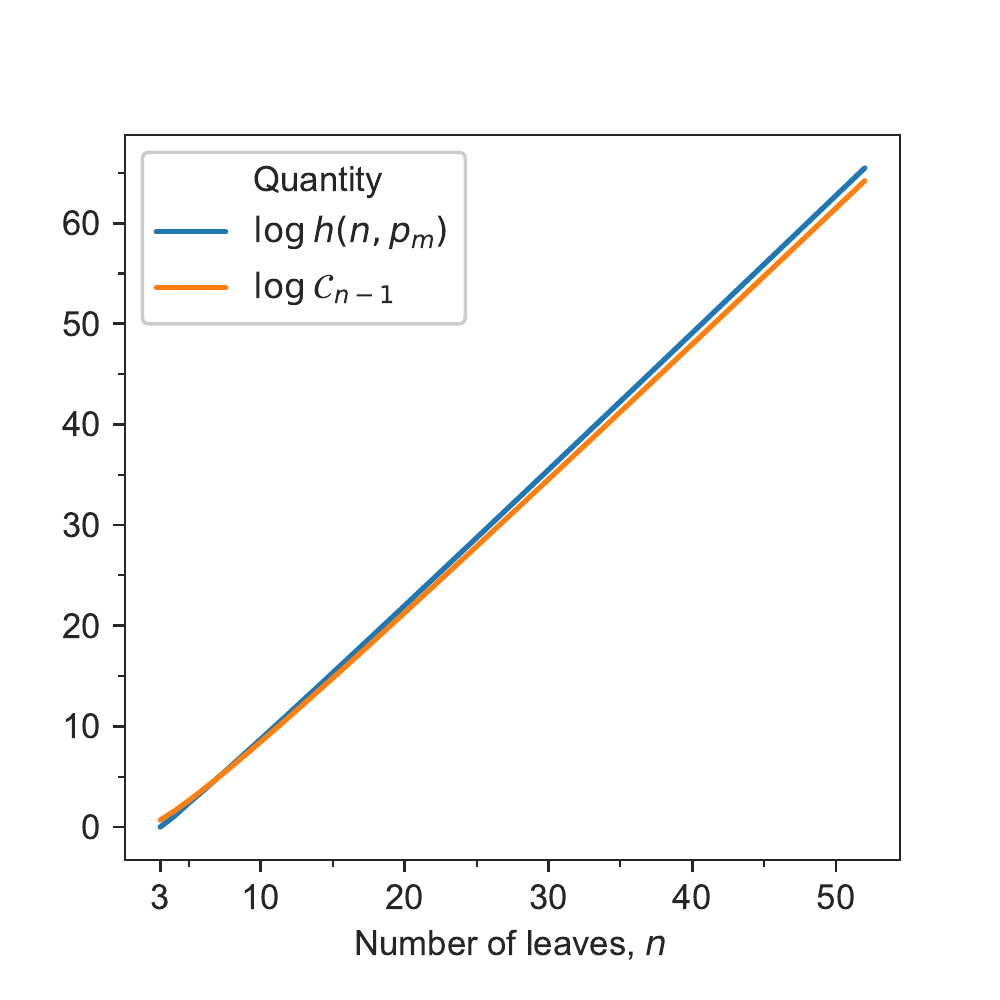}
\vskip -.5cm
\caption{For caterpillar species trees of size $n$, the natural logarithms of the maximal number of coalescent histories across  $p$-pseudocaterpillar gene trees ($h(n,p_m)$) and the number of coalescent histories of the matching caterpillar gene tree topology ($\mathcal{C}_{n-1}$). The quantity $h(n,p_m)$ is computed according to Theorem \ref{thm:maxp}, and $\mathcal{C}_{n-1}$ follows eq.~\eqref{catalan}.}
\label{fig:hmax_plot}
\end{figure}

\section{Symmetry}
\label{secSymmetry}

We now verify the symmetry $h(n,p)=h(n,n-p+3)$ observed in Table~\ref{table:values} for all $(n,p)$ with $3 \leq p \leq n$. For convenience, given a pseudocaterpillar tree with second cherry at position $p$, we define its \emph{dual} as the pseudocaterpillar tree with second cherry at position $n-p+3$ (Figure~\ref{fig:dual}). 

We show that for a fixed caterpillar species tree, the number of coalescent histories of an identically-labeled pseudocaterpillar is equal to the number of coalescent histories of its dual. The formula for the number of coalescent histories has a symmetry in the position of the second cherry on the $p$-pseudocaterpillar gene tree.
\begin{theorem}
	\label{symmetry}
	For all $(n,p)$ with $3\leq p \leq n$, $h(n,p) = h(n,n-p+3)$.
\end{theorem}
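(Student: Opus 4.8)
The plan is to establish the symmetry $h(n,p) = h(n,n-p+3)$ by exhibiting an explicit term-by-term correspondence between the two summations given by eq.~\eqref{coalescent}. Writing $q = n-p+3$, the goal is to show that $\sum_{k=p-1}^{n-1} C(k,p-3)(k-p+2)C(n-p,n-k-1)$ equals $\sum_{j=q-1}^{n-1} C(j,q-3)(j-q+2)C(n-q,n-j-1)$. The natural first step is to look for a change of summation index $j = j(k)$ that maps each term of the first sum to a term of the second. Because $q-3 = n-p$ and $n-q = p-3$, the roles of the two Catalan-triangle factors $C(k,p-3)$ and $C(n-p,n-k-1)$ essentially swap between the two expressions; this strongly suggests that the correct substitution reflects the index around the midpoint of its range, something like $j = n+p-2-k$ (equivalently $n-j-1 = k-p+1$), which sends the endpoints $k=p-1$ and $k=n-1$ to $j=n-1$ and $j=q-1$ respectively.

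First I would verify that the index substitution matches the range of summation correctly, and then check the linear middle factor: with $q = n-p+3$ one has $j-q+2 = (n+p-2-k)-(n-p+3)+2 = 2p-3-k$, which does \emph{not} obviously equal the original $k-p+2$, so a naive single reflection cannot send terms to terms identically. This signals that rather than a term-by-term identity, the cleaner route is to prove the two sums are equal as a whole by converting the relation into a binomial-coefficient identity and applying the Wilf--Zeilberger machinery already used in Propositions~\ref{wz1} (and in the appendix proofs). Concretely, I would expand both $h(n,p)$ and $h(n,n-p+3)$ using the explicit factorial form eq.~\eqref{coalescent_exp}, form the difference, and exhibit a WZ proof certificate $R(n,k)$ witnessing that the summand of the difference telescopes to zero, so that the net difference vanishes for all admissible $n$ and $p$.

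An alternative and likely more illuminating approach is combinatorial, via the monotonic-path picture of \cite{himwich2020}. Each factor $C(\cdot,\cdot)$ counts lattice paths below the diagonal, and the product structure $\ell_k r_k u_k$ encodes a path through the ``descended'' subdiagram glued to a path through the ``ancestral'' subdiagram at the pivotal coalescence on edge $k$. The dual tree with second cherry at $n-p+3$ has $p-3$ and $n-p$ playing interchanged roles in exactly the two Catalan-triangle factors. I would therefore try to construct an explicit bijection on roadblocked monotonic paths (or equivalently on coalescent histories) that interchanges the part of the path governed by the ``size of the first caterpillar segment'' with the part governed by the ``number of coalescences ancestral to the pivot,'' simultaneously reflecting the pivotal edge index. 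The geometric symmetry of reflecting the lattice diagrams about their diagonal is precisely the symmetry already invoked in eq.~(7) to rewrite trapezoid entries as triangle entries, so such a reflection is the right candidate map.

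The main obstacle will be the linear factor $r_k = k-p+2$, which is not symmetric under the reflection of the index and has no obvious counterpart in the triangle-versus-triangle factors; reconciling this asymmetric middle factor is what prevents a one-line bijective argument and is where most of the genuine work lies. In the algebraic route this manifests as needing the WZ certificate to absorb the mismatch between $k-p+2$ and $2p-3-k$, which is a bounded symbolic computation but must be carried out carefully across the shifted summation limits; in the bijective route it requires pairing up contributions from different pivot edges $k$ rather than matching term-by-term. I would attempt the WZ/creative-telescoping verification first, since it is mechanical and certain to succeed once the summand difference is assembled, and relegate the certificate itself to an appendix in the style of the paper's earlier proofs, while remarking on the combinatorial interpretation of the duality.
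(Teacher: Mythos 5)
Your final plan---directly certifying $h(n,p)-h(n,n-p+3)=0$ by Wilf--Zeilberger/creative-telescoping machinery---is a genuinely different route from the paper's. The paper never attacks the symmetry difference head-on. Instead it introduces a \emph{dual difference} function $D^*(n,p)=h(n,n-p+3)-h(n,n-p+4)$ (eq.~\eqref{dualdiff}), derives closed forms for both $D(n,p)$ and $D^*(n,p)$ by Gosper telescoping (eqs.~\eqref{diff_ex} and \eqref{dualdiff_ex}, Appendices~\ref{wzdiffproof} and \ref{dualdiffproof}), proves $D=D^*$ by simplifying their ratio (Lemma~\ref{thm_d_d_star}), and then finishes with a short chain argument: eq.~\eqref{diff_coal} shows $h(n,p)-h(n,n-p+3)$ is constant as $p$ is decremented, and since this quantity is antisymmetric under $p\mapsto n-p+3$, the constant must be zero. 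This factorization is precisely the device that tames the asymmetric middle factor $k-p+2$ that you correctly identify as the crux: the hard telescoping is done on each difference function separately, where Gosper's algorithm succeeds, and the symmetry itself falls out of a boundary argument rather than a direct certificate.

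Your route is viable in principle, but as written it has concrete gaps. First, your claim that the WZ verification is ``certain to succeed once the summand difference is assembled'' is unjustified: what WZ theory guarantees for proper hypergeometric summands is the existence of a Zeilberger \emph{recurrence}, not Gosper-summability of the difference summand to a telescoping form with vanishing boundary terms. If you fall back on recurrences, you must also verify initial conditions in $p$ for symbolic $n$---e.g.\ $h(n,3)=h(n,n)$, $h(n,4)=h(n,n-1)$, up to the order of the recurrence---each of which is itself an identity in $n$ requiring its own (routine but nontrivial) proof; none of this is in your plan. Second, two smaller slips: the two sums have different numbers of nonzero terms ($n-p+1$ versus $p-2$), so no index bijection can match them without first zero-extending the ranges (which does work, since the $C(\cdot,\cdot)$ factors vanish outside the natural limits); and your reflection $j=n+p-2-k$ sends $k=n-1$ to $j=p-1$, not to $j=q-1=n-p+2$ as you state---though this is inconsequential since you rightly abandon the term-by-term approach. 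In short: correct overall direction, genuinely different architecture from the paper, but the mechanical core of your proof is asserted rather than secured, and the paper's difference-function detour is exactly what avoids the obstacles your direct attack would encounter.
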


For $n=3$, the claim is trivial, as $p=n-p+3=3$. For $n=4$, the claim is also trivial, as $h(4,3)=h(4,4)=3$. For $n \geq 5$, we proceed in three steps.
\begin{enumerate}
	\item First, in Section~\ref{secDualDifference}, we introduce a \emph{dual difference} function $D^*(n,p)$ that measures the change in $h(n, n-p+3)$ as $p$ is incremented for fixed $n$. 
	\item Next, in Section~\ref{secDualEquality}, we show that the dual difference function $D^*(n,p)$ is equal to the regular difference function $D(n,p)$ for all allowed values of $(n,p)$.
	\item Finally, in Section~\ref{secBoundary}, we use this equality of difference functions to complete the proof of the symmetry of $h(n,p)$.
\end{enumerate}

\begin{figure}[tb]\centering

\begin{minipage}[b]{0.4\linewidth} (A)
	
	{\centering \includegraphics[width=\linewidth]{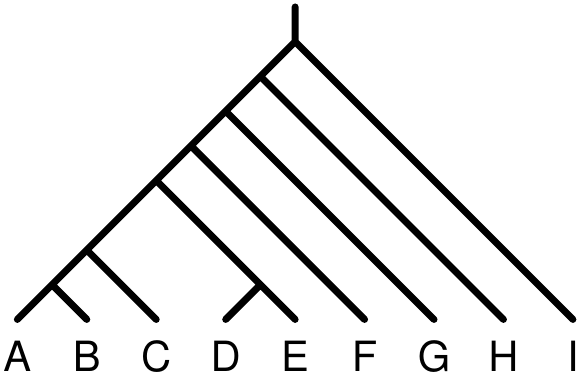}}
\end{minipage}
\hspace{1cm}
\begin{minipage}[b]{0.4\linewidth} (B)

	{\centering \includegraphics[width=\linewidth]{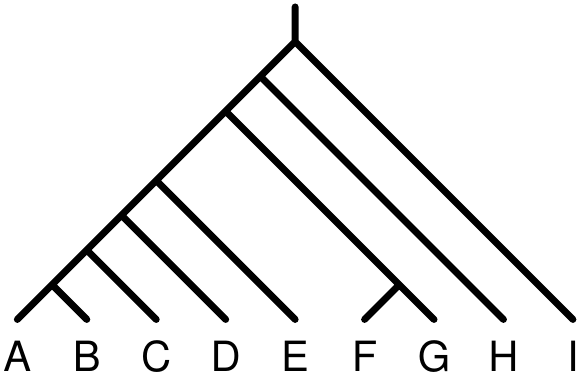}}
\end{minipage}
\caption{Dual $p$-pseudocaterpillar gene trees with the same number of coalescent histories (1665) if paired with the identically-labeled caterpillar species tree. (A)  $(n,p)=(9,5)$. (B) $(n,p)=(9,7)$.}
\label{fig:dual}
\end{figure}

\subsection{Dual difference function}
\label{secDualDifference}

We define a function $D^*$ ``dual'' to the difference function $D$ (eq.~\eqref{diff}):
\begin{equation}
\label{dualdiff}
D^*(n,p) = h(n,n-p+3) - h(n,n-p+4).
\end{equation}
The function is well defined for $n\geq5$ and $4 \leq p\leq n$, where $n-p+3$ and $n-p+4$ lie in $[3,n]$. 

Using the definition of $h(n,p)$ from eq.~\eqref{coalescent_exp}, we obtain
\begin{multline*}
	D^*(n,p) = \sum_{k=n-p+3}^{n-1} 
	\Bigg[ \frac{(k-n+p-1)^2 (k-n+p+1) (k+n-p)! (n+p-k-4)!}{(k+1)! (p-2)! (n-k-1)! (n-p)!}\\
	- \frac{(k-n+p-2)^2 (k-n+p) (k+n-p+1)! (n+p-k-5)!}{(k+1)! (p-3)! (n-k-1)! (n-p+1)!}\Bigg] \\
	+ \frac{3(2 p-6)!(2 n-2 p+2)!}{(p-3)!(p-2)!(n-p)!(n-p+3)!}.
\end{multline*}
This sum can be simplified to get a closed form for $D^*$:
\begin{multline}
\label{dualdiff_ex}
D^*(n,p) = \frac{4 (p-3)(n^2+2p^2-3np+5n-9p+10) (2 p-7)!(2n-2p+3)!}{n(n-1) (n-p+1)(p-4)!(p-2)!(n-p)!(n-p+3)!} \\
+ \frac{3(2 p-6)!(2 n-2 p+2)!}{(p-3)!(p-2)!(n-p)!(n-p+3)!}.
\end{multline}
The proof appears in Appendix~\ref{dualdiffproof}.

\subsection{Dual difference function is equal to the difference function}
\label{secDualEquality}

This section verifies the equality of the difference function and its dual.
\smallskip
\begin{lemma}
\label{thm_d_d_star}
For all $n\geq 5$ and $4\leq p \leq n$, the dual difference function equals the difference function 
\[
	D^*(n,p) = D(n,p).
\]
\end{lemma}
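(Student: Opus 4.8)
The plan is to prove the identity directly from the two closed forms already in hand: eq.~\eqref{diff_ex} for $D(n,p)$ and eq.~\eqref{dualdiff_ex} for $D^*(n,p)$. Each of these is a sum of two terms, and each term is a polynomial in $n$ and $p$ times a product of factorials whose arguments separate cleanly into an ``$n-p$ block'' ($2n-2p+2$, $2n-2p+3$, $(n-p)!,\dots,(n-p+3)!$) and a ``$p$ block'' ($2p-7,\dots,2p-5$, $(p-4)!,\dots,p!$). I would factor out the single common prefactor
\[
\Phi = \frac{(2n-2p+2)!\,(2p-6)!}{(n-p)!\,(n-p+2)!\,(p-3)!\,(p-1)!}
\]
from all four terms. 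The point of this choice is that dividing any of the four terms by $\Phi$ collapses every factorial ratio to a low-degree rational factor, via identities such as $(2n-2p+3)!/(2n-2p+2)! = 2n-2p+3$ and $(n-p+2)!/(n-p+3)! = 1/(n-p+3)$, together with the simplification $2p-6 = 2(p-3)$.

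Carrying out these cancellations reduces $D/\Phi$ and $D^*/\Phi$ to sums of two elementary rational functions,
\[
\frac{D}{\Phi} = \frac{2(np+7p-2p^2-6)(2p-5)}{n(n-1)} - \frac{3(p-3)}{n-p+1},
\]
\[
\frac{D^*}{\Phi} = \frac{2(p-1)(p-3)(2n-2p+3)(n^2+2p^2-3np+5n-9p+10)}{n(n-1)(n-p+1)(n-p+3)} + \frac{3(p-1)}{n-p+3}.
\]
Since $\Phi \neq 0$ on the range $n \geq 5$, $4 \leq p \leq n$, the Lemma is equivalent to the equality of these two expressions, each of which has denominator dividing $n(n-1)(n-p+1)(n-p+3)$.

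The final step is to multiply both displayed expressions by $n(n-1)(n-p+1)(n-p+3)$, clearing all denominators and turning the claim into a single polynomial identity in $n$ and $p$ of modest degree, which I verify by expanding both sides and comparing coefficients. As a sanity check, at $(n,p)=(9,5)$ both sides of $D/\Phi = D^*/\Phi$ equal $32/15$, consistent with $D(9,5)=D^*(9,5)=224$ read off from Table~\ref{table:values}.

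I expect the main obstacle to be bookkeeping rather than anything conceptual: because the four terms carry slightly shifted factorial arguments ($(n-p)!$ versus $(n-p+1)!$ and $(n-p+3)!$, and $(p-1)!$ versus $p!$ and $(p-2)!$), no single monomial prefactor simultaneously clears every term, so the real content of the argument is selecting $\Phi$ and then the polynomial denominator $n(n-1)(n-p+1)(n-p+3)$ correctly, after which only a routine expansion remains. An essentially equivalent route worth noting is that the substitution $p \mapsto n-p+3$ gives $D^*(n,p) = -D(n,n-p+4)$ straight from the definitions in eqs.~\eqref{diff} and \eqref{dualdiff}, so the Lemma may instead be recast as the antisymmetry $D(n,p) = -D(n,n-p+4)$ of a single closed form; this invokes only eq.~\eqref{diff_ex}, but it leads to the same polynomial check.
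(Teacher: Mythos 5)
Your proposal is correct and is essentially the paper's own proof: the paper likewise establishes the lemma by pure algebraic verification from the closed forms in eqs.~\eqref{diff_ex} and \eqref{dualdiff_ex} (it simplifies the ratio $D(n,p)/D^*(n,p)$ to $1$, where you factor out $\Phi$ and clear denominators to get a polynomial identity --- the same computation organized slightly differently). Your intermediate expressions for $D/\Phi$ and $D^*/\Phi$ check out, so no gap here.
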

\begin{proof}
We simplify $D(n,p)/D^*(n,p)$ using eqs.~\eqref{diff_ex} and \eqref{dualdiff_ex}, verifying that this ratio equals 1.
\end{proof}

\subsection{Completing the proof}
\label{secBoundary}

Rearranging terms in the definitions of the difference functions by eqs.~\eqref{diff} and \eqref{dualdiff}, we have
\begin{equation}
h(n,p) - h(n,n-p+3) = h(n,p-1) - h(n,n-p+4) \label{diff_coal}
\end{equation}
Decrementing $p$ from n to 4, eq.~\eqref{diff_coal} gives a chain of equalities $h(n,n)-h(n,3)=h(n,n-1)-h(n,4)=h(n,n-2)-h(n,5)=\ldots=h(n,3)-h(n,n)$. 

In particular, for each $p$ from 3 to $n$, $h(n,p)-h(n,n-p+3)=-[h(n,p)-h(n,n-p+3)]$. Both sides of this equation must then equal zero, from which we conclude $h(n,p)=h(n-p+3)$ for each $p$, $3 \leq p \leq n$. The proof of Theorem~\ref{symmetry} is complete.

Theorem~\ref{symmetry} can strengthen Proposition~\ref{thm:h_geq_catalan}. We now know that $h(n,p) > \mathcal{C}_{n-1}$ for all $(n,p)$ with $n\geq 9$ and $4 \leq p \leq n-1$. From \cite{rosenberg2010coalescent}, $h(n,4) > \mathcal{C}_{n-1}$ for $n \geq 9$. By Theorem~\ref{symmetry}, $h(n,n-1)=h(n,4) > \mathcal{C}_{n-1}$. In the proof of Theorem \ref{thm:maxp}, we show that for $n$ odd, $h(n,p)$ increases as $p$ increases from 4 to $\frac{n+3}{2}$, and Theorem~\ref{symmetry} indicates that $h(n,p)$ decreases as $p$ increases from $\frac{n+3}{2}$ to $n-1$; similarly, for $n$ even, $h(n,p)$ increases as $p$ increases from 4 to $\frac{n+2}{2}$, with $h(n,\frac{n+2}{2})=h(n,\frac{n+4}{2})$, then decreases as $p$ increases from $\frac{n+4}{2}$ to $n-1$. Thus, $h(n,p) > \mathcal{C}_{n-1}$ for all $(n,p)$ with $n \geq 9$ and $4 \leq p \leq n-1$. 

\section{Discussion}
\label{secDiscussion}

We have developed a method for counting coalescent histories in cases in which the gene tree and species tree topologies do not match, considering  $p$-pseudocaterpillar gene trees together with an identically-labeled caterpillar species tree. Using a combinatorial construction, we find that the recursive formula from \cite{rosenberg2007counting} can be evaluated non-recursively as a sum  (eq.~\eqref{coalescent_exp})---which can in turn be simplified to a closed form for fixed small $p$ (Section \ref{secSpecialCases}). The number of coalescent histories $h(n,p)$ (eq.~\eqref{coalescent}) has a symmetry in $p$ (Theorem~\ref{symmetry}), and the maximum over values of $p$ for each $n$ is attained when the ``second cherry'' lies in the ``middle'' of the gene tree (Theorem \ref{thm:maxp}).

Results on the value of $p$ that maximizes $h(n,p)$ verify an informal observation from previous studies. It has been noted that for fixed $n$, large numbers of coalescent histories tend to occur when two conditions are met: the number of distinct sequences in which coalescences can take place is large, as is the number of species tree branches describing potential placements of those coalescences \citep{rosenberg2007counting, rosenberg2013caterpillar, rosenberg2010coalescent, disanto2015coalescent, disanto2016asymptotic}. For a fixed caterpillar species tree, identically-labeled $p$-pseudocaterpillar gene trees represent a tradeoff of these two features. As $p$ increases, more sequences exist for coalescences descended from the pivotal coalescence. However, the number of species tree branches on which the pivotal coalescence can occur decreases, so that fewer species tree branches exist on which the larger number of coalescence sequences can occur. That $h(n,p)$ is maximized when $p$ lies in the ``middle'' aligns with the informal observation that both conditions---many coalescence sequences, and many species tree branches on which coalescences take place---are important for generating large numbers of coalescent histories. 

\begin{table}[tb]
\centering
\begin{tabular}{ccccccccc}
\toprule
&\multicolumn{7}{c}{$p$}& Matching \\
\cline{2-8}
$n$ & 3 & 4 & 5 & 6 & 7 & 8 & 9 & caterpillar \\
\midrule
 \vspace{0.03in} 
 3  &  \doubleentry{1}{2}  &       &       &       &       &       & 		& 2   \\ 
 \vspace{0.03in} 
 4  & \doubleentry{3}{5}   & \doubleentry{3}{4}    &       &       &       &       & 		& 5   \\  \vspace{0.03in} 
 5  & \doubleentry{9}{14}  & \doubleentry{11}{13}  & \doubleentry{9}{10}      &       &       &       & 		& 14  \\  \vspace{0.03in} 
 6  & \doubleentry{28}{42}    & \doubleentry{37}{42}    & \doubleentry{37}{37}    & \doubleentry{28}{28}     &       &       & 		& 42  \\ \vspace{0.03in} 
 7  & \doubleentry{90}{132}     & \doubleentry{124}{138}    & \doubleentry{134}{130}   & \doubleentry{124}{112}    & \doubleentry{90}{84}     &       &		& 132 \\ \vspace{0.03in} 
 8  & \doubleentry{297}{429}    & \doubleentry{420}{461}    & \doubleentry{473}{453}    & \doubleentry{473}{416}    & \doubleentry{420}{354}    & \doubleentry{267}{264}    &       & 429 \\ \vspace{0.03in} 
 9  & \doubleentry{1001}{1430}  & \doubleentry{1441}{1573}  & \doubleentry{1665}{1584}  & \doubleentry{1735}{1511}  & \doubleentry{1665}{1368}  & \doubleentry{1441}{1155}  & \doubleentry{1101}{858}  & 1430\\ 
 \bottomrule
\end{tabular}
\caption{Numbers of coalescent histories for matching caterpillar gene trees and species trees ($\mathcal{C}_{n-1}$, right-hand column), caterpillar species trees and identically-labeled $p$-pseudocaterpillar gene trees (top entry in each cell), and matching $p$-pseudocaterpillar gene trees and species trees (bottom entry). Top entries are from Table~\ref{table:values}, and bottom entries are from \cite{rosenberg2007counting}.}
\label{table:topologies_comparison}
\end{table}

Table \ref{table:topologies_comparison} compares coalescent histories in three cases: matching caterpillars, matching $p$-pseudocaterpillars, and caterpillar species trees with identically-labeled non-matching $p$-pseudocaterpillar gene trees. For a caterpillar species tree, as the number of species $n$ grows to 9 or greater, the number of coalescent histories for identically-labeled $p$-pseudocaterpillar gene trees with $4 \leq p \leq n-1$ exceeds the Catalan number of coalescent histories for the matching gene tree (Proposition \ref{thm:h_geq_catalan}). For fixed $p$, more coalescent histories can occur for the non-matching $p$-pseudocaterpillar gene tree and identically-labeled caterpillar species tree than for matching $p$-pseudocaterpillars (Tables \ref{table:comparison} and \ref{table:topologies_comparison}).


\begin{figure}[tbp]\centering
\hspace{0.3cm}\begin{minipage}[b]{0.4\linewidth} (A) 

{\centering \includegraphics[width=\linewidth]{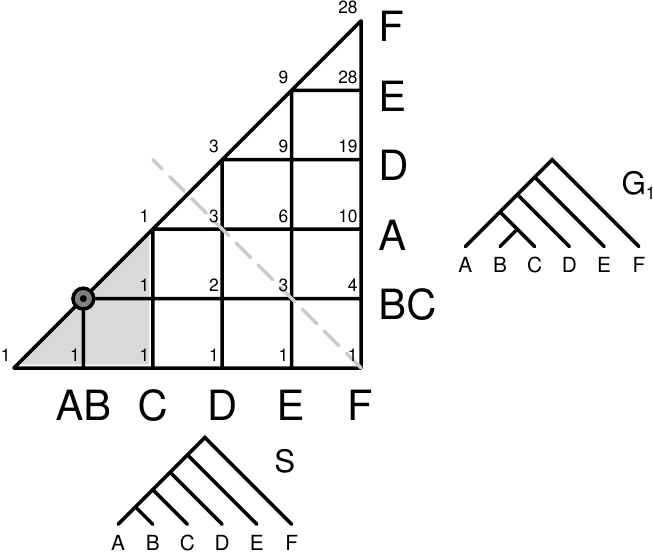}\label{fig:symmetry_ex_1}}
\end{minipage}\hspace{1cm}
\begin{minipage}[b]{0.4\linewidth} (B)

{\centering \includegraphics[width=\linewidth]{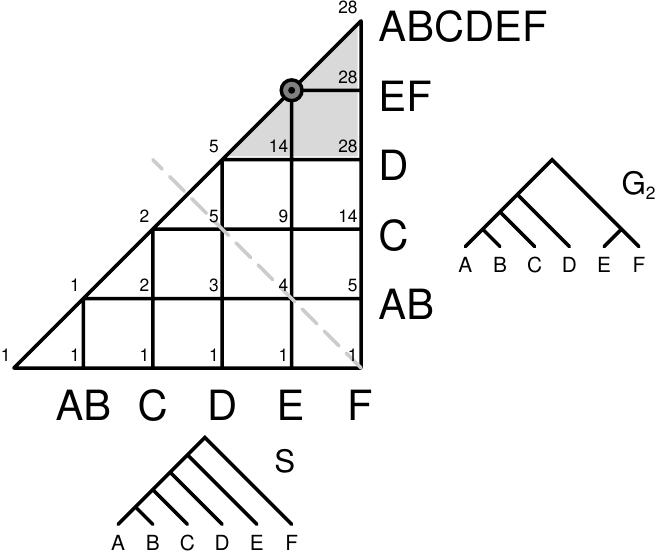}\label{fig:symmetry_ex_2}}
\end{minipage}

\caption{Symmetry in the number of coalescent histories for a caterpillar species tree and identically-labeled $p$-pseudocaterpillar gene trees, for the cases of $(n,3)$ and $(n,n)$. (A) For $p=3$, coalescent histories correspond to roadblocked monotonic paths on a lattice with one roadblock. The first coalescence $BC$ on a gene tree cannot happen on the species tree branch ancestral to $A$ and $B$. (B) The $p=n$ case can also be seen to correspond to roadblocked monotonic paths. The second-to-last coalescence $EF$ can occur only on the branch ancestral to the species tree root.}
\label{fig:symmetry_ex}
\end{figure}

In related work, \cite{disanto2019local} considered matching caterpillar gene trees and species trees, identifying the leaf whose replacement by a cherry in both trees would give rise to the greatest increase in the number of coalescent histories (measured as a ratio). This speciation---the splitting of a leaf node of $G$ and $S$ into two child nodes---can be interpreted as extending the trees by adding the ``second cherry'' that converts a caterpillar into a $p$-pseudocaterpillar. \cite{disanto2019local} determined the value of $p$ with which the $p$-pseudocaterpillar tree pair with $n$ leaves would have the largest number of coalescent histories. Asymptotically, this value of $p$ is equal to $\frac{n}{2}$ \citep[$i^*(n)$ for $\mathcal{D}_n$ in Table 1]{disanto2019local}---``in the middle,'' as in our result in Section~\ref{sec:max}. Theorem~\ref{thm:maxp} can then be seen to prove an analogous result in a nonmatching case, as the gene tree gains a cherry node whereas the species tree gains only a caterpillar leaf.

Our $p=3$ case has a direct geometric interpretation in the framework of \cite{himwich2020}, as it describes a non-matching pair of caterpillar trees (Figure~\ref{fig:symmetry_ex}). Its coalescent histories are described by monotonic paths on a lattice with a single roadblock. The $p=n$ case, for which the number of coalescent histories is equal to the $p=3$ case, can also be represented in a diagram with one roadblock, obtained by reflecting monotonic paths of the $p=3$ case across $y=n-1-x$. 
Because monotonic paths not crossing the $y=x$ diagonal of a square lattice correspond to Dyck paths, and the coalescent histories for $p=3$ correspond to Dyck paths beginning with two up-steps, the sequence $a_n = h(n,3) = h(n,n)$ for $n \geq 3$ gives the number of Dyck paths of length $n-1$ beginning with two up-steps (OEIS A000245). We can also write $h(n,3) = h(n,n) = a(n) = \mathcal{C}_{n-1} - \mathcal{C}_{n-2}$.

Our work provides an extension of an earlier study of coalescent histories for non-matching caterpillars \citep{himwich2020}. We expect that the method we have used has potential for extension to cases with more than two cherries, with the species tree remaining a caterpillar. In such an extension, each additional cherry would  generate an additional ``pivotal'' coalescence and an additional summation based on the placement of that coalescence.

\vskip .4cm
\noindent {\bf Acknowledgements.}
We acknowledge NIH grant R01 GM131404 for support.
{\footnotesize
\bibliographystyle{tpb}
\bibliography{coalescent_histories}

\begin{thebibliography}{}

\bibitem[Degnan, 2005]{degnan2005}
Degnan, J.~H. 2005.
\newblock ``Gene tree distributions under the coalescent process.''.
\newblock PhD thesis University of New Mexico Albuquerque.

\bibitem[Degnan \& Salter, 2005]{degnan2005gene}
Degnan, J.~H. {and} Salter, L.~A. 2005.
\newblock Gene tree distributions under the coalescent process, {\em Evolution}
  {\bf 59} (1), 24--37.

\bibitem[Disanto \& Munarini, 2019]{disanto2019local}
Disanto, F. {and} Munarini, E. 2019.
\newblock Local height in weighted {D}yck models of random walks and the
  variability of the number of coalescent histories for caterpillar-shaped gene
  trees and species trees, {\em SN Applied Sciences} {\bf 1}, 578.

\bibitem[Disanto \& Rosenberg, 2015]{disanto2015coalescent}
Disanto, F. {and} Rosenberg, N.~A. 2015.
\newblock Coalescent histories for lodgepole species trees, {\em Journal of
  Computational Biology} {\bf 22}, 918--929.

\bibitem[Disanto \& Rosenberg, 2016]{disanto2016asymptotic}
Disanto, F. {and} Rosenberg, N.~A. 2016.
\newblock Asymptotic properties of the number of matching coalescent histories
  for caterpillar-like families of species trees, {\em IEEE/ACM Transactions on
  Computational Biology and Bioinformatics} {\bf 13}, 913--925.

\bibitem[Flajolet \& Sedgewick, 2009]{flajolet_sedgewick}
Flajolet, P. {and} Sedgewick, R. 2009.
\newblock ``Analytic {C}ombinatorics'', Cambridge University Press, Cambridge.

\bibitem[Graham {\em et~al.}, 1994]{concretemath}
Graham, R.~L., Knuth, D.~E., {and} Patashnik, O. 1994.
\newblock ``Concrete Mathematics'', Addison-Wesley, Boston, 2nd edition.

\bibitem[Himwich \& Rosenberg, 2020]{himwich2020}
Himwich, Z.~M. {and} Rosenberg, N.~A. 2020.
\newblock Roadblocked monotonic paths and the enumeration of coalescent
  histories for non-matching caterpillar gene trees and species trees, {\em
  Advances in Applied Mathematics} {\bf 113}, 101939.

\bibitem[Paule \& Schorn, 1995]{fastzeil}
Paule, P. {and} Schorn, M. 1995.
\newblock A {M}athematica version of {Z}eilberger's algorithm for proving
  binomial coefficient identities, {\em Journal of Symbolic Computation} {\bf
  20}, 673--698.

\bibitem[Petkov{\v{s}}ek {\em et~al.}, 1996]{aeqb}
Petkov{\v{s}}ek, M., Wilf, H.~S., {and} Zeilberger, D. 1996.
\newblock ``{A=B}'', CRC Press, Boca Raton.

\bibitem[Reuveni, 2014]{reuveni2014catalan}
Reuveni, S. 2014.
\newblock Catalan's trapezoids, {\em Probability in the Engineering and
  Informational Sciences} {\bf 28}, 353--361.

\bibitem[Rosenberg, 2007]{rosenberg2007counting}
Rosenberg, N.~A. 2007.
\newblock Counting coalescent histories, {\em Journal of Computational Biology}
  {\bf 14}, 360--377.

\bibitem[Rosenberg, 2013]{rosenberg2013caterpillar}
Rosenberg, N.~A. 2013.
\newblock Coalescent histories for caterpillar-like families, {\em IEEE/ACM
  Transactions on Computational Biology and Bioinformatics} {\bf 10},
  1253--1262.

\bibitem[Rosenberg, 2019]{rosenberg2019enumeration}
Rosenberg, N.~A. 2019.
\newblock Enumeration of lonely pairs of gene trees and species trees by means
  of antipodal cherries, {\em Advances in Applied Mathematics} {\bf 102},
  1--17.

\bibitem[Rosenberg \& Degnan, 2010]{rosenberg2010coalescent}
Rosenberg, N.~A. {and} Degnan, J.~H. 2010.
\newblock Coalescent histories for discordant gene trees and species trees,
  {\em Theoretical Population Biology} {\bf 77}, 145--151.

\bibitem[Stanley, 2015]{stanley_catalan}
Stanley, R.~P. 2015.
\newblock ``Catalan Numbers'', Cambridge University Press, Cambridge.

\bibitem[Than {\em et~al.}, 2007]{ThanRuths07-JCB}
Than, C., Ruths, D., Innan, H., {and} Nakhleh, L. 2007.
\newblock Confounding factors in {HGT} detection: statistical error, coalescent
  effects, and multiple solutions, {\em Journal of Computational Biology} {\bf
  14}, 517--535.

\end{thebibliography}
}
\clearpage
\appendix
\section{Wilf-Zeilberger certificates for formulas in Table~\ref{table:closed_forms}}
\label{wz_table}

This appendix gives the proof certificates for the identities in Table~\ref{table:closed_forms}, all of which have similar proofs to Proposition~\ref{wz1}. Only the Wilf-Zeilberger proof certificate $R(m,k)$ differs across the cases. We list the proof certificates for the remaining identities in Table \ref{table:appendix}.

\renewcommand*{\arraystretch}{1.4}
\begin{longtable}{ r  p{6in} }
  \caption[]{Wilf-Zeilberger proof certificates $R(m,k)$ for expressions in Table~\ref{table:closed_forms}}  \label{table:appendix}
    \\\hline 
   $p$ & $R(m,k)$\\\hline
    4     & $ -\{(k-3) (k-2 m+2) [k^3 (19 m^2-59 m+42)+k^2 (-19 m^2+86 m-72)+k (-65 m^2+134 m-68)-13 m^2+m+14]\}/[2 (k-2) k (m-1)^2 (2 m-3) (19 m-2) (k-m-1)]$  \\
    5     &  $-\{(k-4) (k-2 m+3) [k^4 (49 m^3-303 m^2+578 m-330)+k^3 (-49 m^3+388 m^2-858 m+540)+k^2 (-428 m^3+2483 m^2-4470 m+2394)-2 k (59 m^3+13 m^2-459 m+342)+4 (86 m^3-477 m^2+847 m-480)]\}/[2 (k-3) (k-1) (k+2) (m-2)^2 (2 m-5) (49 m^2-58 m+3) (k-m-1)]$\\
    6     &  $ -\{(k-5) (k-2 m+4) [k^5 (467 m^4-4786 m^3+17233 m^2-25394 m+12600)+ 7 k^4 (133 m^3-867 m^2+1754 m-1080)+k^3 (-8870 m^4+90676 m^3-324682 m^2+473420 m-230304)+k^2 (-9780 m^4+80459 m^3-234273 m^2+285790 m-127176)+3 k (7531 m^4-84096 m^3+314435 m^2-463542 m+227352)+18 (1705 m^4-14774 m^3+47147 m^2-65566 m+32088)]\}/[2 (k-4) (k-2) (k+2) (k+3) (m-3)^2 (2 m-7) (467 m^3-1517 m^2+1126 m-40) (k-m-1)]$\\
    7     &  $-\{(k-6) (k-2 m+5) [k^6 (1067 m^5-16330 m^4+94585 m^3-256250 m^2+319728 m-143640)+k^5 (2134 m^5-30293 m^4+163673 m^3-415318 m^2+486024 m-204120)+k^4 (-34377 m^5+532064 m^4-3106039 m^3+8451124 m^2-10543892 m+4705320)+k^3 (-104394 m^5+1517539 m^4-8380799 m^3+21722234 m^2-26044120 m+11362200)+2 k^2 (62807 m^5-1114568 m^4+7105518 m^3-20362003 m^2+25974226 m-11663880)+4 k (165353 m^5-2490002 m^4+14086017 m^3-37154572 m^2+45368164 m-20112720)+288 (2001 m^5-25450 m^4+128585 m^3-322100 m^2+388324 m-173040)]\}/[2 (k-5) (k-3) (k+2) (k+3) (k+4) (m-4)^2 (2 m-9) (1067 m^4-6727 m^3+13027 m^2-7697 m+210) (k-m-1)]$\\
    8     &  $-\{(k-7) (k-2 m+6) [k^7 (4751 m^6-101457 m^5+860555 m^4-3681375 m^3+8289854 m^2-9182568 m+3825360)+k^6 (23755 m^6-495790 m^5+4118327 m^4-17279672 m^3+38197692 m^2-41536512 m+16964640)+k^5 (-206286 m^6+4477837 m^5-38491018 m^4+166405493 m^3-377604962 m^2+420063216 m-174878640)-3 k^4 (468054 m^6-9846179 m^5+82298858 m^4-346924603 m^3+769533814 m^2-839428224 m+344802960)+k^3 (-228007 m^6+517454 m^5+26651409 m^4-221650784 m^3+680141896 m^2-877046808 m+386542080)+k^2 (13435271 m^6-292844363 m^5+2510774503 m^4-10762507513 m^3+24114363462 m^2-26483935680 m+10933917840)+k (30799166 m^6-626694930 m^5+5077031510 m^4-20873258910 m^3+45641745164 m^2-49626013680 m+20475669600)+600 (37811 m^6-656541 m^5+4768367 m^4-18525795 m^3+39673190 m^2-43034952 m+17843760)]\}/[2 (k-6) (k-4) (k+2) (k+3) (k+4) (k+5) (m-5)^2 (2 m-11) (4751 m^5-49196 m^4+178555 m^3-265930 m^2+136104 m-3024) (k-m-1)]$\\
    9     &  $-\{(k-8) (k-2 m+7) [k^8 (10393 m^7-295169 m^6+3444763 m^5-21287315 m^4+74661412 m^3-147149156 m^2+148827072 m-58378320)+k^7 (93537 m^7-2629468 m^6+30398601 m^5-186195970 m^4+647532468 m^3-1265548552 m^2+1268906064 m-492972480)+k^6 (-463559 m^7+13467688 m^6-160270925 m^5+1007098102 m^4-3582455306 m^3+7142486032 m^2-7285730520 m+2869943328)+k^5 (-7040967 m^7+199285124 m^6-2316820929 m^5+14253708290 m^4-49732292718 m^3+97411690736 m^2-97808625096 m+38061051840)+k^4 (-14147654 m^7+380248249 m^6-4210420826 m^5+24757920421 m^4-82938771506 m^3+157002585316 m^2-153809636904 m+59074800624)+2 k^3 (39720207 m^7-1170499052 m^6+14047475388 m^5-88527503360 m^4+314151700473 m^3-621935009348 m^2+628262828172 m-245216321280)+12 k^2 (34497145 m^7-972446354 m^6+11227172479 m^5-68469189584 m^4+236767720390 m^3-460533277076 m^2+460767938016 m-179311977936)+72 k (9940368 m^7-263143441 m^6+2885073381 m^5-16942211875 m^4+57248478267 m^3-110112505504 m^2+109806816564 m-42781636800)+8640 (57016 m^7-1276772 m^6+12427702 m^5-68227355 m^4+223726804 m^3-426503693 m^2+425617338 m-166486320)]\}/[2 (k-7) (k-5) (k+2) (k+3) (k+4) (k+5) (k+6) (m-6)^2 (2 m-13) (10393 m^6-160060 m^5+931642 m^4-2537428 m^3+3195589 m^2-1476928 m+27720) (k-m-1)]$\\\hline
\end{longtable}

\clearpage

\section{Proof of the closed form for $D(n,p)$ from eq.~\eqref{diff_ex}}
\label{wzdiffproof}

In this appendix, we prove the closed-form expression for the difference function $D(n,p)$. In particular, we focus on the term that contains a summation over $k$.

\begin{lemma}
\label{wzdiff}
For all $(n,p)$ with $n\geq 5$ and $4 \leq p \leq n$, the following identity holds:
\begin{eqnarray}
F(n,p) & = & \sum_{k=p-1}^{n-1} \Bigg[ \frac{(k-p+2)^2 (k-p+4) (k+p-3)! ( 2n-p-k-1)!}{(k+1)! (p-3)! (n-k-1)! (n-p+1)!} \nonumber \\
    & & - \frac{(k-p+3)^2 (k-p+5) (k+p-4)! (2 n-p-k)!}{(k+1)! (p-4)! (n-k-1)! (n-p+2)!}\Bigg] \nonumber \\
& = & \frac{2p(np+7p-2p^2-6)(2n-2p+2)! \, (2p-5)!}{n(n-1)(n-p)! \,  (n-p+2)! \, (p-3)! \, p!}. \nonumber
\end{eqnarray}
\end{lemma}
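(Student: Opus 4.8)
The plan is to recognize the summand as a Wilf--Zeilberger difference and prove the identity by creative telescoping in the parameter $p$, in the spirit of the proof of Proposition~\ref{wz1}. The key first observation is that the two bracketed terms in $F(n,p)$ are the summand of $h(n,p)$ and the summand of $h(n,p-1)$ evaluated at the same index $k$. Writing $A(n,p,k)$ for the summand of $h(n,p)$ as it appears in eq.~\eqref{coalescent_exp}, one checks directly by substituting $p\mapsto p-1$ that the second bracketed term equals $A(n,p-1,k)$. Hence the bracketed expression is $A(n,p,k)-A(n,p-1,k)$, and
\[
F(n,p)=\sum_{k=p-1}^{n-1}\bigl[A(n,p,k)-A(n,p-1,k)\bigr]
\]
is a sum of the WZ difference of the hypergeometric term $A$ with respect to $p$.

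Next I would exhibit a rational certificate $R(n,p,k)$ and set $G(n,p,k)=R(n,p,k)\,A(n,p-1,k)$, obtained by running Gosper's algorithm on the associated WZ pair, so that the telescoping relation
\[
A(n,p,k)-A(n,p-1,k)=G(n,p,k+1)-G(n,p,k)
\]
holds as an identity in $k$. Since both $A(n,p,k)/A(n,p-1,k)$ and $A(n,p-1,k+1)/A(n,p-1,k)$ are explicit rational functions of $(n,p,k)$, dividing the relation through by $A(n,p-1,k)$ reduces its verification to checking a single rational-function identity in $k$, with $n$ and $p$ as parameters, which is routine. Summing the telescoping relation over $k=p-1,\dots,n-1$ then collapses the right-hand side, giving $F(n,p)=G(n,p,n)-G(n,p,p-1)$.

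The last step is to evaluate the two boundary terms $G(n,p,n)$ and $G(n,p,p-1)$ and to show that their difference simplifies to the claimed closed form. I expect the factorial content $(2n-2p+2)!\,(2p-5)!/[\cdots]$ to arise from the value $A(n,p-1,p-1)$, and the polynomial factor $np+7p-2p^2-6$ to come from the certificate $R$ evaluated at the endpoints. The main obstacle will be precisely this boundary bookkeeping. At $k=n$ the factor $(n-k-1)!$ in the denominator of $A(n,p-1,k)$ drives the term to zero unless $R$ carries a compensating pole there, and at $k=p-1$ several factorial arguments sit at the edge of their valid range; one must therefore apply the convention---consistent with the definition of $C(n,k)$ in eq.~\eqref{catalan_entry}, where entries with out-of-range indices vanish---that terms with negative factorial arguments are zero, and then carefully combine the surviving endpoint contribution into the stated rational-times-factorial expression. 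Once $R$ is written down, both the verification of the telescoping relation and the endpoint simplification are mechanical; the only creative content is the production of $R$, which Gosper's algorithm supplies.
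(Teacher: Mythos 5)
Your proposal is correct and takes essentially the same route as the paper's own proof in Appendix~B: the paper likewise treats the summand (which is exactly the difference of the summands of $h(n,p)$ and $h(n,p-1)$) as a single hypergeometric term in $k$, uses Gosper's algorithm to produce an explicit antidifference $h_{n,p}(k)$ satisfying $f(n,p,k)=h_{n,p}(k+1)-h_{n,p}(k)$, telescopes over $k=p-1,\dots,n-1$, and obtains the closed form by simplifying the boundary values $h_{n,p}(n)-h_{n,p}(p-1)$. The only cosmetic difference is that you package the antidifference in WZ style as a rational certificate $R(n,p,k)$ times $A(n,p-1,k)$, whereas the paper writes the Gosper output explicitly; the underlying computation and the endpoint bookkeeping are identical.
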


\begin{proof}
Let $\Delta_k$ denote the forward difference operator in $k$, meaning that $\Delta_k(f) = f(k+1)-f(k)$. Let $f(n,p,k)$ be the summand in the expression for $F(n,p)$. We sum the equation
\begin{equation}
    \label{eq:appB_eq1}
    f(n,p,k) = \Delta_k\left(h_{n,p}(k)\right)
\end{equation}
over $k$, from $k=p-1$ to $k=n-1$.
The left-hand side of eq.~\eqref{eq:appB_eq1} is the summand in the statement of the lemma, and the function $h_{n,p}(k)$ is the output of Gosper's algorithm \citep{fastzeil, aeqb}:
\begin{eqnarray}
h_{n,p}(k) & = & \Big[(k+1) (2n-p-k) (-k^3 n^2+2 k^3 n-k^3+3 k^2 n^2 p-9 k^2 n^2-4 k^2 n p+15 k^2 n-3 k^2 p \nonumber \\ 
& & +6 k^2 -3 k n^2 p^2 + 18 k n^2 p-29 k n^2+2 k n p^2-6 k n p+4 k n-3 k p^2+12 k p-11 k \nonumber \\
& & +n^2 p^3-9 n^2 p^2 +25 n^2 p -21 n^2+3 n p^2-14 n p +15 n-p^3+6 p^2-11 p+6) \nonumber \\ \label{eqGosp}
& & (k+p-4)! (2 n-p-k-1)!\Big]/ \nonumber \\ 
& & \Big[(n-1) n (p-3)(k+1)! (n-p+2)(p-4)!(n-k-1)(n-p+1)!\Big]. 
\end{eqnarray}

We verify eq.~\eqref{eq:appB_eq1} by using eq.~\eqref{eqGosp}. After summation, the left-hand side becomes $F(n,p)$. The right-hand side telescopes, so that all terms except the first and the last cancel:
\begin{equation}
\label{eqhnpn}
    F(n,p) = h_{n,p}(n) - h_{n,p}(p-1).
\end{equation}
We obtain the statement of the lemma by algebraic simplification of the right-hand side of eq.~\eqref{eqhnpn}.
\end{proof}

\clearpage

\section{Proofs of inequalities required for the proof of 
Proposition \ref{prop:hmax_order}}

In the proof of Proposition \ref{prop:hmax_order}, we make use of lower and upper bounds for $h(n,p_m)$. First, we prove the lower bound.

\begin{lemma}
\label{lemma:h_lower_bound}
For all $n\geq3$, $h(n,p_m) \geq \mathcal{C}_{n-2}$.
\end{lemma}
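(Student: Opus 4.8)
The goal is to show that the maximal number of coalescent histories $h(n,p_m)$ is bounded below by the Catalan number $\mathcal{C}_{n-2}$.

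The plan is to avoid analyzing $h(n,p_m)$ directly—whose maximizing value $p_m$ depends on the parity of $n$ (Theorem~\ref{thm:maxp})—and instead exhibit a single convenient value of $p$ for which $h(n,p)$ can be cleanly bounded below by $\mathcal{C}_{n-2}$. Since $h(n,p_m) \geq h(n,p)$ for every admissible $p$ by definition of the maximum, any such lower bound at a fixed $p$ suffices. The most natural candidate is $p=3$, because Proposition~\ref{wz1} already gives the closed form $h(n,3) = \tfrac{3(n-2)}{2(2n-3)}\mathcal{C}_{n-1}$. First I would verify that $h(n,3) \geq \mathcal{C}_{n-2}$ by rewriting both quantities in terms of a common Catalan number. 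Using $\mathcal{C}_{n-1} = \tfrac{2(2n-3)}{n}\mathcal{C}_{n-2}$, the closed form simplifies to $h(n,3) = \tfrac{3(n-2)}{n}\mathcal{C}_{n-2}$, so the desired inequality reduces to $\tfrac{3(n-2)}{n} \geq 1$, that is, $3(n-2) \geq n$, equivalently $2n \geq 6$, which holds for all $n \geq 3$.

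Thus the main steps are: (i) invoke $h(n,p_m) \geq h(n,3)$ from the definition of $p_m$ as the maximizer in Theorem~\ref{thm:maxp}; (ii) substitute the closed form for $h(n,3)$ from Proposition~\ref{wz1}; (iii) apply the Catalan recurrence $\mathcal{C}_{n-1} = \tfrac{2(2n-3)}{n}\mathcal{C}_{n-2}$ to express $h(n,3)$ as a rational multiple of $\mathcal{C}_{n-2}$; and (iv) check the elementary inequality $3(n-2) \geq n$ for $n \geq 3$.

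I expect no serious obstacle here, as each step is routine once the reduction to $p=3$ is made. The only point requiring minor care is the boundary behavior: the statement of Proposition~\ref{wz1} and its closed form are asserted for $n \geq 3$, and the remark following Theorem~\ref{thm:coalescent} confirms that $h(n,3)$ is well-defined and agrees with eq.~\eqref{coalescent} in this range, so the chain $h(n,p_m) \geq h(n,3) = \tfrac{3(n-2)}{n}\mathcal{C}_{n-2} \geq \mathcal{C}_{n-2}$ is valid for all $n \geq 3$, which is exactly the claimed range. The chief design decision—choosing $p=3$ rather than attempting a direct estimate at the parity-dependent $p_m$—is what makes the argument short; an alternative using $h(n,4)$ and its Table~\ref{table:closed_forms} form would also work but yields a messier rational factor without improving the bound.
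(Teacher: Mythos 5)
Your proposal is correct and matches the paper's own proof essentially step for step: both reduce to $p=3$ via $h(n,p_m) \geq h(n,3)$ from Theorem~\ref{thm:maxp}, then show $h(n,3)/\mathcal{C}_{n-2} = \frac{3(n-2)}{n} \geq 1$ for $n \geq 3$. Your use of the Catalan recurrence $\mathcal{C}_{n-1} = \frac{2(2n-3)}{n}\mathcal{C}_{n-2}$ is just a minor presentational variant of the paper's direct ratio computation from eq.~\eqref{eq_h3} and eq.~\eqref{catalan}.
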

\begin{proof}
From Theorem \ref{thm:maxp}, $h(n,p_m) \geq h(n,3)$. Comparing $h(n,3)$ from eq.~\eqref{eq_h3} and $\mathcal{C}_{n-2}$ from eq.~\eqref{catalan}, we obtain
$$\frac{h(n,3)}{\mathcal{C}_{n-2}} = \frac{3(n-2)}{n} \geq 1$$
for all $n \geq 3$. Hence, $\mathcal{C}_{n-2} \leq h(n,3) \leq h(n,p_m)$ as desired.
\end{proof}

To prove the upper bound, we first need an identity concerning Catalan numbers.
\begin{lemma} 
\label{lemma:decomposition}
For $n\geq 3$, the Catalan number $\mathcal{C}_n$ can be decomposed as a sum. \\
(i) For even $n \geq 4$, $n=2m+2$ for $m \geq 1$, 
    \begin{equation*}
        \mathcal{C}_{2m+2} = \sum_{k=m+2}^{2m+2} C(k,m+1) \, C(m,2m+2-k).
    \end{equation*}
(ii) For odd $n \geq 3$, $n=2m+1$ for $m \geq 1$, 
    \begin{equation*}
        \mathcal{C}_{2m+1} = \sum_{k=m+1}^{2m+1}  C(k,m) \, C(m,2m+1-k).
    \end{equation*}
\end{lemma}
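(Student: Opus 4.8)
The plan is to prove both parts by a single combinatorial partition of the monotonic lattice paths counted by $\mathcal{C}_{N}$, where $N=2m+2$ in part (i) and $N=2m+1$ in part (ii). Recall from Section~\ref{secCatalan} that $\mathcal{C}_N$ counts monotonic paths from $(0,0)$ to $(N,N)$ that stay weakly below the line $y=x$, and that $C(a,b)$ counts monotonic paths from $(0,0)$ to $(a,b)$, with $a$ right-steps and $b$ up-steps, that stay weakly below $y=x$. Since any monotonic path has non-decreasing height that increases by exactly one at each up-step, it crosses each horizontal level via a unique up-step. I would partition the paths by the $x$-coordinate $k$ at which the path takes the up-step raising its height from level $m+1$ to level $m+2$ (part (i)) or from level $m$ to level $m+1$ (part (ii))---the single up-step crossing the ``middle'' of the lattice.

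For part (i), fixing this crossing coordinate at $x=k$ splits a path into a lower portion from $(0,0)$ to $(k,m+1)$ and, after the crossing up-step, an upper portion from $(k,m+2)$ to $(2m+2,2m+2)$. The lower portion is a below-diagonal path with $k$ right-steps and $m+1$ up-steps, counted by $C(k,m+1)$. For the upper portion I would apply the reflection $\phi(x,y)=(2m+2-y,\,2m+2-x)$ across the anti-diagonal $x+y=2m+2$, together with a reversal of the traversal. A direct check shows that $\phi$ fixes the line $y=x$, sends the condition $y\le x$ to itself, carries right-steps to up-steps and up-steps to right-steps after reversal, and maps the endpoints $(k,m+2)$ and $(2m+2,2m+2)$ to $(m,2m+2-k)$ and $(0,0)$. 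Hence the upper portions are in bijection with below-diagonal paths from $(0,0)$ to $(m,2m+2-k)$, counted by $C(m,2m+2-k)$. As the lower and upper portions are chosen independently and distinct values of $k$ yield disjoint sets of paths, the number of paths with crossing coordinate $k$ is $C(k,m+1)\,C(m,2m+2-k)$, and summing over $k$ gives the claim. Part (ii) is identical with $m+1$ in place of $m+2$, $m$ in place of $m+1$, and the reflection $\phi(x,y)=(2m+1-y,\,2m+1-x)$.

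The one point requiring care---and the main obstacle to a clean statement---is matching the summation limits to the geometry. The below-diagonal constraint on the upper portion forces its starting point $(k,m+2)$ to satisfy $m+2\le k$ in part (i) (respectively $m+1\le k$ in part (ii)); equivalently, the crossing up-step cannot occur at $x=m+1$, since that would carry the path from the diagonal point $(m+1,m+1)$ to the forbidden point $(m+1,m+2)$ lying above $y=x$. The upper limit $k=2m+2$ (respectively $2m+1$) is forced because no right-step can follow the maximal height. These observations show that $k$ ranges exactly over $\{m+2,\dots,2m+2\}$ (respectively $\{m+1,\dots,2m+1\}$), matching the stated sums, and that every below-diagonal path is counted exactly once. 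I would confirm the base case $m=1$ as a check: in part (i), $C(3,2)\,C(1,1)+C(4,2)\,C(1,0)=5\cdot 1+9\cdot 1=14=\mathcal{C}_4$, and in part (ii), $C(2,1)\,C(1,1)+C(3,1)\,C(1,0)=2\cdot 1+3\cdot 1=5=\mathcal{C}_3$. A purely algebraic alternative, convolving the binomial expressions in eq.~\eqref{catalan_entry} by a Vandermonde-type identity, is available but less transparent than the path decomposition, so I would present the combinatorial argument as the proof.
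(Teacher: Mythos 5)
Your proof is correct and follows essentially the same route as the paper: both partition the below-diagonal monotonic paths counted by $\mathcal{C}_N$ according to the unique up-step crossing the middle level, count the lower portion as a Catalan-triangle entry, and count the upper portion by reversing the traversal (the paper's ``traverse in reverse order,'' which is exactly your anti-diagonal reflection made explicit). Your version simply adds more detail---the explicit bijection $\phi$, the justification of the summation limits, and base-case checks---than the paper's terser exposition.
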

\begin{proof}
(i) We use two ways of counting monotonic paths. $\mathcal{C}_{2m+2}$ gives the number of monotonic paths that travel from $(0,0)$ to $(2m+2,2m+2)$ on a square lattice, without crossing the diagonal connecting $(0,0)$ to $(2m+2,2m+2)$. Each of these paths passes through exactly one vertical edge from a point $(k,m+1)$ to a point $(k,m+2)$, where $k$ ranges from $m+2$ to $2m+2$.

The number of monotonic paths that travel from $(0,0)$ to $(k,m+1)$ and that do not cross the diagonal is $C(k,m+1)$. The number of monotonic paths from $(k,m+2)$ to $(2m+2,2m+2)$ that do not cross the diagonal is obtained by traversing the paths in reverse order, from $(2m+2,2m+2)$ down and to the left, reaching $(k,m+2)$ (Figure \ref{fig:upper}A). The associated number of paths is $C(m,2m+2-k)$. Hence, the total number of monotonic paths from $(0,0)$ to $(2m+2,2m+2)$ that do not cross the diagonal is $\sum_{k=m+2}^{2m+2} C(k,m+1) \, C(m,2m+2-k)$.

(ii) The argument in the odd case proceeds in the same way. Each path from $(0,0)$ to $(2m+1,2m+1)$ passes through exactly one vertical edge from a point $(k,m)$ to a point $(k,m+1)$, where $k$ ranges from $m+1$ to $2m+1$. The number of paths from $(0,0)$ to $(k,m)$ is $C(k,m)$, and the number of paths from $(k,m+1)$ to $(2m+1,2m+1)$ is $C(m,2m+1-k)$.
\end{proof}

We are now ready to prove the upper bound. 
\begin{lemma}
\label{lemma:h_upper_bound}
For all $n\geq 3$, $h(n,p_m) \leq  n\,\mathcal{C}_{n+2}$.
\end{lemma}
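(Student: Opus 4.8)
The plan is to start from the summation formula for $h(n,p_m)$ in eq.~\eqref{coalescent}, replace the awkward middle factor $(k-p_m+2)$ and one of the two Catalan-triangle factors by larger quantities so that the remaining sum becomes exactly one of the convolution identities of Lemma~\ref{lemma:decomposition}, and then use that the Catalan numbers are increasing to absorb the slack. Throughout I take $p=p_m$ as in Theorem~\ref{thm:maxp}. The cases $n=3$ and $n=4$ are immediate, since $h(3,3)=1$ and $h(4,p_m)=3$ are far below $n\,\mathcal{C}_{n+2}$, so I would assume $n\geq 5$ for the main argument.

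First I would bound the two $p$-dependent factors inside $h(n,p_m)=\sum_{k=p_m-1}^{n-1}C(k,p_m-3)\,(k-p_m+2)\,C(n-p_m,n-k-1)$. Because $p_m-1\leq k\leq n-1$, the middle factor satisfies $k-p_m+2\leq n-p_m+1\leq n$, so it can be pulled out as a global factor of $n$. Next, since the Catalan-triangle entries obey $C(n,k)=C(n-1,k)+C(n,k-1)$ and are therefore nondecreasing in their second argument, I would use $C(k,p_m-3)\leq C(k,p_m-2)$, valid because $k\geq p_m-1>p_m-2$. These two replacements give $h(n,p_m)\leq n\sum_{k=p_m-1}^{n-1}C(k,p_m-2)\,C(n-p_m,n-k-1)$.

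The key step is to recognize the remaining sum as a decomposition of a Catalan number from Lemma~\ref{lemma:decomposition}. Writing $n=2m+1$ in the odd case, so that $p_m=m+2$, the sum becomes $\sum_{k=m+1}^{2m}C(k,m)\,C(m-1,2m-k)$, which is exactly part~(i) of the lemma with its parameter $m$ replaced by $m-1$; the summation limits match precisely, and the sum equals $\mathcal{C}_{2m}=\mathcal{C}_{n-1}$. Writing $n=2m+2$ in the even case, so that $p_m=\frac{n+2}{2}=m+2$, the sum becomes $\sum_{k=m+1}^{2m+1}C(k,m)\,C(m,2m+1-k)$, which is part~(ii) of the lemma with parameter $m$ and equals $\mathcal{C}_{2m+1}=\mathcal{C}_{n-1}$. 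In both parities the sum equals $\mathcal{C}_{n-1}$, so $h(n,p_m)\leq n\,\mathcal{C}_{n-1}\leq n\,\mathcal{C}_{n+2}$, which completes the proof.

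The delicate point, and the one place where care is needed, is the index bookkeeping that lets the convolution land on a valid decomposition. The presence of $C(k,p_m-3)$ rather than $C(k,p_m-2)$ produces an off-by-one mismatch with the cut heights supplied by Lemma~\ref{lemma:decomposition}; the monotonicity step $C(k,p_m-3)\leq C(k,p_m-2)$ is precisely what raises the cut height to the ``middle'' value covered by the lemma and aligns the summation limits with those in parts~(i) and~(ii). Choosing $p=p_m$ rather than a general $p$ is what makes this middle cut the exact one the lemma provides. It is because of this one-step loosening, together with the crude bound $k-p_m+2\leq n$, that the resulting estimate $n\,\mathcal{C}_{n-1}$ (and a fortiori $n\,\mathcal{C}_{n+2}$) is far from tight; but it captures the correct exponential order, which is all that Proposition~\ref{prop:hmax_order} requires.
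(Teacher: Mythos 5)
Your proof is correct, and it is a genuinely cleaner variant of the paper's argument. Both proofs hinge on Lemma~\ref{lemma:decomposition} together with monotonicity properties of the Catalan triangle, but the comparison is structured differently. The paper fixes the target $n\,\mathcal{C}_{n+2}$, expands $\mathcal{C}_{n+2}$ by the decomposition lemma, re-indexes $h(n,p_m)$, and then verifies that the difference $n\,\mathcal{C}_{n+2}-h(n,p_m)$ is a sum of termwise-nonnegative quantities (each term requiring a comparison of products $C(\cdot,\cdot)\,C(\cdot,\cdot)$ at shifted arguments) plus a leftover positive term. You instead majorize the summand of $h(n,p_m)$ in place: the linear factor $k-p_m+2\le n$ is pulled out globally, and the single monotonicity step $C(k,p_m-3)\le C(k,p_m-2)$ (justified, as you note, because $p_m-2\le k$ keeps the entry inside the triangle where the recurrence gives monotonicity in the second argument) raises the cut height so that the remaining convolution is \emph{exactly} a decomposition of $\mathcal{C}_{n-1}$ --- part~(i) with parameter $m-1$ in the odd case, part~(ii) with parameter $m$ in the even case; your index bookkeeping checks out in both parities, including the choice $p_m=\frac{n+2}{2}$ for even $n$, which is legitimate since both maxima give the same value of $h$. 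The payoff is a strictly stronger bound, $h(n,p_m)\le n\,\mathcal{C}_{n-1}$, from which the stated $n\,\mathcal{C}_{n+2}$ bound follows trivially by monotonicity of the Catalan numbers; the verification burden is also lighter, since no comparison of products of Catalan-triangle entries at mismatched arguments is needed. Either bound suffices for Proposition~\ref{prop:hmax_order}, since both have exponential order $4$.
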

\begin{proof}
We split the proof into two cases, according to the expressions for $p_m$ from Theorem~\ref{thm:maxp}.

First, assume $n$ is even, $n = 2m$ for $m \geq 2$. Then $p_m = m+1$, and by Theorem \ref{thm:coalescent},
    \begin{equation*}
        h(2m,m+1) = \sum_{k=m}^{2m-1} (k-m+1) \, C(k,m-2) \, C(m-1,2m-k-1),
    \end{equation*}
or, equivalently,
    \begin{equation*}
        h(2m,m+1) = \sum_{k=m+2}^{2m+1} (k-m-1) \, C(k-2,m-2) \, C(m-1,2m-k+1).
    \end{equation*}

Using the decomposition in Lemma \ref{lemma:decomposition}, 
    \begin{eqnarray}
        2m\,\mathcal{C}_{2m+2}-h(2m,m+1) & = & \sum_{k=m+2}^{2m+1}\Big[2m\,C(k,m+1) \, C(m,2m-k+2)  \nonumber \\ & & -(k-m-1) \, C(k-2,m-2) \, C(m-1,2m-k+1)\Big]\nonumber\\
        & & + \frac{2m(m+2)(3m+3)!}{(m+1)!(2m+3)!}.\nonumber \label{eq:cat_h_difference_odd}
    \end{eqnarray}
    The summand in the first term is nonnegative, as function $C(n,k)$ is monotonically increasing with respect to both arguments, and $2m \geq k-m-1$ because $k \leq 2m+1$. The remaining term is also nonnegative. Hence, for even $n$ we indeed have $h(n,p_m) \leq n\,\mathcal{C}_{n+2}$.
    
Now assume $n$ is odd, with $n = 2m-1$ and $m \geq 2$. We then must show $(2m-1)\,\mathcal{C}_{2m+1} \geq h(2m-1,m+1)$. By Theorem \ref{thm:coalescent}, we have 
    \begin{equation*}
        h(2m-1,m+1) = \sum_{k=m}^{2m-2} (k-m+1)\,C(k,m-2)\,C(m-2,2m-k-2),
    \end{equation*}
    or, equivalently,
    \begin{equation*}
        h(2m-1,m+1) = \sum_{k=m+1}^{2m-1} (k-m) \, C(k-1,m-2) \, C(m-2,2m-k-1).
    \end{equation*}
    
Using the decomposition in Lemma \ref{lemma:decomposition}, we have
    \begin{eqnarray}
        (2m-1)\,\mathcal{C}_{2m+1}-h(2m-1,m+1) & = & \sum_{k=m+1}^{2m-1}\Big[(2m-1)\,C(k,m)\,C(m,2m-k+1) \nonumber \\ 
        & & -(k-m)\,C(k-1,m-2)\,C(m-2,2m-k-1)\Big]\nonumber\\
        & & + \frac{(2m-1)\left(2 m^3+7 m^2+9 m+2\right)(3m)!}{m!(2m+2)!}.\nonumber \label{eq:cat_h_difference_even}
    \end{eqnarray}
As is true in the even case, the summand is termwise nonnegative, as is the remaining term. We conclude that for odd $n$, $h(n,p_m) \leq n\,\mathcal{C}_{n+2}$, completing the proof.
\end{proof}

\clearpage
\section{Proof of the closed form for $D^*(n,p)$ from eq.~\eqref{dualdiff_ex}}
\label{dualdiffproof}

Here we prove the closed-form expression for the dual difference function in eq.~\eqref{dualdiff_ex} from Section~\ref{secDualDifference}.

\begin{lemma}
For all $(n,p)$ with $n\geq 5$ and $4 \leq p \leq n$, the following identity holds:
\begin{eqnarray}
F(n,p) & = & \sum_{k=n-p+3}^{n-1} 
   \frac{(k-n+p-1)^2 (k-n+p+1) (k+n-p)! (n+p-k-4)!}{(k+1)! (p-2)! (n-k-1)! (n-p)!}\Bigg] \nonumber \\
	 & & -   \frac{(k-n+p-2)^2 (k-n+p) (k+n-p+1)! (n+p-k-5)!}{(k+1)! (p-3)! (n-k-1)! (n-p+1)!}\Bigg] \nonumber \\
	 & = &  \frac{4 (p-3)(n^2+2p^2-3np+5n-9p+10) (2 p-7)!(2n-2p+3)!}{n(n-1) (n-p+1)(p-4)!(p-2)!(n-p)!(n-p+3)!}.\nonumber 
\end{eqnarray}
\end{lemma}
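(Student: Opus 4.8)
The plan is to obtain the closed form without a fresh run of Gosper's algorithm, instead relating the present sum to the one already resolved in Lemma~\ref{wzdiff} of Appendix~\ref{wzdiffproof} through the reflection $p \mapsto n-p+4$. Write $f(n,p,k)$ for the summand of the present lemma, and let $\widetilde f(n,p,k)$ and $\widetilde F(n,p)$ denote the summand and the sum treated in Lemma~\ref{wzdiff}. The first step is to verify the termwise identity
\begin{equation*}
f(n,p,k) = -\,\widetilde f(n,\,n-p+4,\,k).
\end{equation*}
This is a pure substitution: replacing $p$ by $n-p+4$ throughout $\widetilde f$ sends $k-p+2 \mapsto k-n+p-2$, $k-p+3 \mapsto k-n+p-1$, $k+p-3 \mapsto k+n-p+1$, $2n-p-k-1 \mapsto n+p-k-5$, $n-p+1 \mapsto p-3$, and similarly for the remaining arguments, so that the first term of $\widetilde f(n,n-p+4,k)$ becomes the negative of the second term of $f(n,p,k)$ and the second term becomes the negative of the first. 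I would check the six factorial arguments and the two polynomial prefactors one at a time; this requires only matching of arguments, with no cancellation.

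Next I would note that the summation ranges agree: the sum of Lemma~\ref{wzdiff} runs over $k$ from $(n-p+4)-1=n-p+3$ to $n-1$, exactly the range of the present sum. Hence $F(n,p) = -\,\widetilde F(n,\,n-p+4)$. The substitution is legitimate because $4 \le p \le n$ forces $4 \le n-p+4 \le n$, so that $(n,\,n-p+4)$ satisfies the hypotheses $n \ge 5$ and $4 \le \,\cdot\, \le n$ of Lemma~\ref{wzdiff}.

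Finally I would substitute $p \mapsto n-p+4$ into the closed form supplied by Lemma~\ref{wzdiff} and simplify. The factorial reductions are elementary: in the numerator $(2n-2p+2)! \mapsto (2p-6)!$ and $(2p-5)! \mapsto (2n-2p+3)!$, while in the denominator $(n-p)! \mapsto (p-4)!$, $(n-p+2)! \mapsto (p-2)!$, $(p-3)! \mapsto (n-p+1)!$, and $p! \mapsto (n-p+4)!$. Combining the leading factor $2(n-p+4)$ with $(n-p+4)! = (n-p+4)(n-p+3)!$ yields $2/(n-p+3)!$; then rewriting $(2p-6)! = 2(p-3)(2p-7)!$ and $(n-p+1)!=(n-p+1)(n-p)!$ reproduces the stated form. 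The one genuine computation is confirming that the prefactor $np+7p-2p^2-6$ transforms, under $p \mapsto n-p+4$, into $-(n^2+2p^2-3np+5n-9p+10)$; the overall minus sign then cancels this sign and matches the target numerator.

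I expect the main obstacle to be bookkeeping rather than ideas---keeping the several factorial arguments straight under the reflection and tracking the single sign correctly. Should the reflection identity be deemed too delicate to present cleanly, the fallback is to mirror Appendix~\ref{wzdiffproof} verbatim: feed the present summand to Gosper's algorithm, obtain a hypergeometric antidifference, telescope it over $k=n-p+3$ to $n-1$, and simplify the two surviving boundary terms. In that route the boundary simplification is the laborious step.
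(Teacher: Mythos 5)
Your proposal is correct, and it takes a genuinely different route from the paper. The paper's own proof (Appendix~\ref{dualdiffproof}) is exactly your fallback: it mirrors Appendix~\ref{wzdiffproof} verbatim, feeding the new summand to Gosper's algorithm to obtain a fresh (and quite unwieldy) antidifference $h_{n,p}(k)$, verifying the antidifference identity algebraically, telescoping over $k=n-p+3$ to $n-1$, and simplifying the two boundary terms. Your primary route avoids all of that. I checked the key steps: under $q=n-p+4$, the first term of the Appendix~\ref{wzdiffproof} summand becomes precisely the second term of the present summand (e.g.\ $k-q+2\mapsto k-n+p-2$, $k+q-3\mapsto k+n-p+1$, $q-3\mapsto n-p+1$, $n-q+1\mapsto p-3$) and vice versa, giving $f(n,p,k)=-\widetilde f(n,n-p+4,k)$ termwise; the summation range of Lemma~\ref{wzdiff} at parameter $n-p+4$ is indeed $k=n-p+3$ to $n-1$; the hypothesis $4\le p\le n$ maps to $4\le n-p+4\le n$, so invoking Lemma~\ref{wzdiff} is legitimate; and the polynomial prefactor works out, since $(n-p+4)(2p-n-1)-6=-(n^2+2p^2-3np+5n-9p+10)$, with the leading sign cancelling the global minus, and the factorial reductions $(2p-6)!=2(p-3)(2p-7)!$, $(n-p+4)!=(n-p+4)(n-p+3)!$, $(n-p+1)!=(n-p+1)(n-p)!$ reproduce the stated closed form exactly. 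What each approach buys: the paper's is mechanical and self-contained but requires certifying a large rational function and simplifying messy boundary terms; yours requires no new certificate, is far less error-prone, and makes conceptually transparent \emph{why} the dual closed form is a reflection of eq.~\eqref{diff_ex} --- it is the identity $D^*(n,p)=-D(n,n-p+4)$ at the level of the sums, which foreshadows the symmetry machinery of Section~\ref{secSymmetry} (though your lemma alone does not prove Lemma~\ref{thm_d_d_star}, since the boundary terms outside the sums still need separate handling). There is no circularity, as Lemma~\ref{wzdiff} is proved independently of anything in Section~\ref{secSymmetry} or Appendix~\ref{dualdiffproof}.
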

\begin{proof}
As in Appendix B, let $\Delta_k$ denote the forward difference operator in $k$, meaning that 
$\Delta_k(f) = f(k+1)-f(k)$. Let $f(n,p,k)$ be the summand in the expression for $F(n,p)$.

We sum the equation
\begin{equation}
    \label{eq:appD_eq1}
    f(n,p,k) = \Delta_k\left(h_{n,p}(k)\right)
\end{equation}
over $k$, from $k=n-p+3$ to $k=n-1$. The left-hand side is the summand in the statement of the lemma, and the function $h_{n,p}(k)$ is the output of Gosper's algorithm \citep{fastzeil, aeqb}:
\begin{eqnarray}
& h_{n,p}(k)=  &  \Big[(k+1)(k^3 n^2-2 k^3 n+k^3-3 k^2 n^3+3 k^2 n^2 p+k^2 n^2-4 k^2 n p +  4 k^2 n-3 k^2 p \nonumber \\ 
          & & +  6 k^2+3 k n^4-6 k n^3 p+4 k n^3+3 k n^2 p^2-2 k n^2 p -  2 k n^2-2 k n p^2+4 k n p + 3 k p^2 \nonumber \\ 
          & & -  12 k p+11 k-n^5+3 n^4 p-3 n^4-3 n^3 p^2+6 n^3 p - 3 n^3+n^2 p^3-3 n^2 p^2+4 n^2 p \nonumber \\ 
          & & -  3 n^2-2 n p+4 n-p^3+6 p^2-11 p+6) (n+k-p)! (n+p-k-4)!\Big]/ \nonumber \\ \label{eqGosper}
          & &   \Big[n (n-1) (k+1)! (p-2)! (n-k-1)! (n-p+1)!\Big].
\end{eqnarray}

With $h$ as in eq.~(\ref{eqGosper}),  eq.~\eqref{eq:appD_eq1} is verified algebraically. After summation of eq.~\eqref{eq:appD_eq1}, the left-hand side becomes $F(n,p)$, and the right-hand side telescopes. All terms except the first and the last cancel, leaving 
\begin{equation}
\label{eqTelescope}
    F(n,p) = h_{n,p}(n) - h_{n,p}(n-p+3).
\end{equation}
The lemma then follows by algebraic simplification of the right-hand side of eq.~\eqref{eqTelescope}.
\end{proof}

\end{document}